\documentclass[twoside]{article}

\usepackage[accepted]{aistats2026}
\usepackage[round]{natbib}

\newcommand{\thalf}{{\lceil T/2 \rceil}}
\usepackage{amsthm,amsmath,amsfonts,amssymb,stmaryrd}
\usepackage{subcaption}
\usepackage{graphicx}
\usepackage{float}
\usepackage[ruled, vlined]{algorithm2e}
\usepackage{xcolor}
\usepackage{hyperref}
\hypersetup{
    colorlinks,
    linkcolor={red!50!black},
    citecolor={blue!50!black}
}

\makeatletter
\def\namedlabel#1#2{\begingroup
    #2%
    \def\@currentlabel{#2}%
    \phantomsection\label{#1}\endgroup
}
\makeatother

\newtheorem{corollary}{Corollary}
\newtheorem{lemma}{Lemma}
\newtheorem{proposition}{Proposition}

\newtheorem*{assumptions*}{Assumptions}
\newtheorem{definition}{Definition}
\newtheorem*{example}{Example}

\DeclareMathOperator*{\argmin}{arg\,min}

\DeclareMathOperator*{\interior}{int}
\DeclareMathOperator*{\logdet}{log\,det}
\DeclareMathOperator*{\domain}{dom}
\DeclareMathOperator*{\tr}{tr}

\DeclareMathOperator*{\diag}{diag}

\newcommand{\proj}{\mathrm{proj}}

\begin{document}

%

%

\twocolumn[

\aistatstitle{Convergence of projected stochastic natural gradient variational inference for various step size and sample or batch size schedules}

\aistatsauthor{ Thomas Guilmeau \And Hadrien Hendrikx \And  Florence Forbes }

\aistatsaddress{ Univ. Grenoble Alpes, Inria, CNRS, Grenoble INP, LJK, 38000 Grenoble, France  } ]

\begin{abstract}
Stochastic natural gradient variational inference (NGVI) is a popular and efficient algorithm for Bayesian inference. Despite empirical success, the convergence of this method is still not fully understood. In this work, we define and study a projected stochastic NGVI when variational distributions form an exponential family. Stochasticity arises when either gradients are intractable expectations or large sums. We prove new non-asymptotic convergence results for combinations of constant or decreasing step sizes and constant or increasing sample/batch sizes. When all hyperparameters are fixed, NGVI is shown to converge geometrically to a neighborhood of the optimum, while we establish convergence to the optimum with rates of the form $\mathcal{O}\left(\frac{1}{T^{\rho}} \right)$, possibly with $\rho \geq 1$, for all other combinations of step size and sample/batch size schedules. These rates apply when the target posterior distribution is close in some sense to the considered exponential family. Our theoretical results extend existing NGVI and stochastic optimization results and provide more flexibility to adjust, in a principled way, step sizes and sample/batch sizes in order to meet speed, resources, or accuracy constraints.

\end{abstract}

\section{INTRODUCTION}
\label{sec:intro}

In Bayesian statistics, posterior distributions are often  intractable and require sophisticated inference techniques. Variational inference (VI) methods, which approximate posterior distributions
by optimizing over a family of tractable densities the Kullback–Leibler divergence from the true posterior, have emerged as an efficient alternative to Markov chain Monte Carlo \citep{blei2017, zhang2019}. VI algorithms can be easily deployed in many settings \citep{ranganath2014}, and the  properties of the optimal solution are increasingly studied from many perspectives \citep{wang2018, alquier2020, margossian2025}. 

In order to solve the VI optimization problem, natural gradient descent, initially proposed by \cite{amari1998}, has been used successfully in many settings such as latent Dirichlet allocation with very large data-sets \citep{hoffman2013}, Bayesian neural networks \citep{khan2018}, or problems with discrete latent variables \citep{ji2021}. Natural gradient descent has also been used in black-box optimization \citep{ollivier2017}, machine learning \citep{martens2020}, or in the context of the so-called Bayesian learning rule \citep{khan2023}. Natural gradient descent preconditions the standard gradient by the inverse Fisher information matrix of the approximating distribution. The intuition behind its good performance is that it approaches Newton's method \citep{martens2020}. In the case of exponential families \citep{barndorff-nielsen2014}, natural gradient descent is equivalent to mirror descent \citep{raskutti2015, wu2024}, whose ability to better tailor the geometry of the objective function has been shown by e.g., \cite{bauschke2016, lu2018}.

Despite these empirical good performance and intuitions behind its working, the convergence of stochastic NGVI is still poorly understood. In the deterministic setting, let us mention the works of \cite{kumar2025, godichon-baggioni2025}. In the stochastic setting, \cite{hoffman2013} claim that convergence can be established through results by \cite{bottou1999}, which may not apply without further assumptions. \cite{khan2016} use the equivalence between NGVI and mirror descent when exponential families are considered  to derive a convergence rate to approximate stationarity under a strong convexity assumption which may fail in some parts of the search space, even for Gaussian distributions \citep{guilmeau2025}. Still in the context of exponential families, \cite{wu2024} leverage 
a notion of variance from \cite{hanzely2021} to prove a $\mathcal{O}\left( \frac{1}{T}\right)$ convergence rate when the posterior distribution belongs to the considered exponential family, and they formulate this variance-like quantity only in a conjugate Bayesian linear regression setting with stochasticity coming from subsampling the data. The same notion of variance is also used by \cite{sun2025} to derive convergence rates in possibly non-convex settings but restricted to mean-field Gaussian approximating families.

In this work, we also use the equivalence between natural gradient descent and mirror descent. We use and refine novel analysis techniques for stochastic mirror descent from \cite{hendrikx2024} to improve the generality and applicability of existing results for stochastic NGVI. In this endeavor, we make the following contributions. $\mathbf{(1)}$ We propose a projection step in a geometry that is compatible with the mirror step with novel non-asymptotic convergence rates for the resulting projected stochastic mirror descent algorithm, which may be of independent interest. We establish geometric convergence to a neighborhood of the minimizer for constant step sizes and gradient estimators computed with fixed sample/batch sizes. We establish $\mathcal{O}\left(\frac{1}{T^{\rho}}\right)$ convergence to the minimizer for either fixed step sizes and increasing sample/batch sizes or diminishing step sizes and fixed or increasing sample/batch sizes, with possibly $\rho \geq 1$. $\mathbf{(2)}$ We exhibit a new condition on the posterior distribution, which ensures that our general convergence rates apply to our projected stochastic NGVI algorithm, while allowing posterior distributions that do not belong to the considered exponential family. $\mathbf{(3)}$ Our results involve a variance-like term proposed by \cite{hendrikx2024}, which we upper-bound in case of estimators based on samples from the approximating distribution or on sub-sampled data batches, thus providing a precise dependence on the step size and sample/batch size in our results. It follows an increased  number of possible strategies to jointly schedule step and sample/batch sizes, depending on the desired convergence speed, accuracy, and data-efficiency.

After specifying some  background  and  our stochastic NGVI algorithm in Sections \ref{sec:background} 
\& \ref{sec:presentation}, our non-asymptotic convergence rates for different choices of step sizes and sample/batch sizes are presented in Section \ref{sec:convergence}, with practical and numerical illustrations in  Sections \ref{sec:Illus} \& \ref{sec:experiments}. Related work is discussed in Section \ref{sec:related}. Proofs and additional results are postponed to the Appendix.

\section{BACKGROUND}
\label{sec:background}

\paragraph{Notation} For a measurable space $\mathcal{X}$ and a measure $\nu$ on $\mathcal{X}$, $\mathcal{P}(\mathcal{X},\nu)$ denotes the set of probability densities $p$  with respect to $\nu$ on $\mathcal{X}$ and $\mathbb{E}_p$ the expectation with respect to $p$. $\mathcal{H}$ 
denotes a finite-dimensional Hilbert space with scalar product  denoted by $\langle \cdot, \cdot \rangle$. $\mathbb{S}^d$ (resp.~$\mathbb{S}^d_{> 0}$)  denotes the space of symmetric (resp.~positive definite) $d \times d$ matrices. 
$\mathcal{N}(\mu, \Sigma)$ is the Gaussian distribution with mean $\mu$ and covariance matrix $\Sigma$ while $\mathcal{U}_M$ is the uniform distribution on $\llbracket 1,M \rrbracket$. For a proper function $h:\mathcal{H}\rightarrow\mathbb{R} \cup \{\infty\}$ with domain $\domain h$ which is differentiable on $\interior \domain h$ with gradient $\nabla h$, $d_h$ denotes the following quantity
 $$d_h(\omega, \omega') = h(\omega) - h(\omega') - \langle \nabla h(\omega'), \omega - \omega'\rangle$$ for all $(\omega, \omega') \in {\cal H} \times \interior \domain h$. When $h$ is strictly convex then $d_h(\omega,\omega') \geq 0$ with equality if and only if $\omega = \omega'$ and $d_h$is the induced {\bf Bregman divergence}.

\paragraph{Variational inference}

Variational inference efficiently approximates intractable Bayesian posterior distributions resulting from updating prior belief $X \sim p_0$ upon observing $Y$. We denote by $\pi \in \mathcal{P}(\mathcal{X},\nu)$ the density of $X | Y$, given by Bayes rule, 
$ \pi(x) \propto p(y | x) p_0(x)$. VI aims at approximating $\pi$ by a density from a family $\mathcal{Q} \subset \mathcal{P}(\mathcal{X}, \nu)$. This is done by solving
\begin{equation}
    \label{eq:VIproblem}
    \min_{q \in \mathcal{Q}} \text{KL}(q, \pi),
\end{equation}
where $\text{KL}(p_1, p_2) = \mathbb{E}_{p_1}[\log p_1(X)] - \mathbb{E}_{p_1}[\log p_2(X)]$, for $p_1,p_2 \in \mathcal{P}(\mathcal{X},\nu)$,   is the Kullback-Leibler divergence.

When $\pi \in \mathcal{Q}$, it is possible to exactly recover $\pi$. However, in most settings, $\pi \notin \mathcal{Q}$.

\paragraph{Exponential family}

We will consider $\mathcal{Q}$ in problem \eqref{eq:VIproblem} to be an exponential family \citep{brown1986, barndorff-nielsen2014}.

\begin{definition}
The exponential family with sufficient statistic $\Gamma : \mathcal{X} \rightarrow \mathcal{H}$ is the set $\mathcal{Q} \subset \mathcal{P}(\mathcal{X}, \nu)$ s.t.~every $q \in \mathcal{Q}$ satisfies $q = q_\theta$ for some $\theta \in \domain A$ with
    \begin{equation}
        \label{eq:expFamilyDensity}
        q_\theta(x) = \exp( \langle \theta, \Gamma (x) \rangle - A(\theta)),\,\nu\text{-a.e.},
    \end{equation}
    where $\theta$ is the {\it natural parameter} and $A$ is the log-partition function  $A(\theta) = \log \left( \int \exp(\langle \theta, \Gamma(x) \rangle) \nu(dx) \right)$. 
\end{definition}
For $q_\theta \in \mathcal{Q}$, the associated {\bf expectation parameter} $\omega$ is defined as the expected sufficient statistic
\begin{equation*}
    \omega =\mathbb{E}_{q_{\theta}}[\Gamma(X)].
\end{equation*}
For our proofs, we will need to assume that $\mathcal{Q}$ is steep. Steepness is defined in \cite[Chapter 8]{barndorff-nielsen2014} and recalled in  Appendix \ref{app:expFam}. It is implied by $\domain A$ being open. For example, Gaussian distributions form a steep exponential family.

The log-partition function benefits from convex analysis properties. Introducing its {\bf convex conjugate}
$$A^*(\omega) = \sup_{\theta \in \mathcal{H}} \left\{ \langle \theta, \omega \rangle - A(\theta) \right\},$$ the following properties hold.
\begin{proposition}
    \label{prop:expFamilyProperties}
    Suppose that $\interior \domain A \neq \emptyset$. Then,
    \begin{itemize}
        \item [$(i)$] $A$ is proper, lower-semicontinuous, strictly convex, and 
        differentiable
        on $\interior \domain A$. The gradient of $A$ links the expectation and natural parameters with $\omega =\mathbb{E}_{q_{\theta}}[\Gamma(X)] = \nabla A(\theta).$
        
        The Fisher information matrix of $q_{\theta}$ takes the form 
        $\mathbb{E}_{q_{\theta}}[\nabla \log q_\theta(X)(\nabla \log q_\theta(X))^\top] = \nabla^2 A(\theta).$
    \end{itemize}
    Furthermore, if $\mathcal{Q}$ is minimal and steep, then
    \begin{itemize}
        \item [$(ii)$] $A^*$ is proper, lower semi-continuous, strictly convex, differentiable on $\interior \domain A^*\!$. If $\omega \!\in \!\interior \domain A^*$ with $\omega \!= \!\nabla A(\theta)$, then $A^*(\omega) \!=\! \mathbb{E}_{q_{\theta}}[\log(q_{\theta}(X))]$.
        \item [$(iii)$] $\nabla A$ is a bijection from $\interior \domain A$ to $\interior \domain A^*$, its inverse is $\nabla A^*$ and 
        $$\omega  = \nabla A(\theta) \iff \theta = \nabla A^*(\omega).$$
        \item [$(iv)$] For every $\theta, \theta' \in \interior \domain A$ and their respective expectation parameters $\omega = \nabla A(\theta), \omega' = \nabla A(\theta')$, 
        $$d_A(\theta, \theta') = d_{A^*}(\omega', \omega) = \text{KL}(q_{\theta'}, q_{\theta}).$$
    \end{itemize}
\end{proposition}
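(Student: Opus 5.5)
We would prove Proposition~\ref{prop:expFamilyProperties} by combining the classical theory of exponential families \citep{brown1986, barndorff-nielsen2014} with standard convex duality (Legendre-type functions), treating the four items in order.

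For $(i)$, properness follows since $\interior \domain A \neq \emptyset$ and $A > -\infty$ everywhere ($\nu$ nontrivial). Lower semicontinuity follows from Fatou's lemma applied to $\theta \mapsto \int \exp(\langle \theta, \Gamma(x)\rangle)\nu(dx)$. Convexity follows from H\"older's inequality: for $\lambda \in (0,1)$, $\int e^{\langle \lambda\theta + (1-\lambda)\theta', \Gamma\rangle} d\nu \le (\int e^{\langle\theta,\Gamma\rangle}d\nu)^\lambda (\int e^{\langle\theta',\Gamma\rangle}d\nu)^{1-\lambda}$. Equality holds only if $\langle \theta-\theta', \Gamma(x)\rangle$ is $\nu$-a.e.\ constant, which is excluded by the minimality of the representation; we will assume this tacitly for strict convexity. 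On $\interior \domain A$, differentiation under the integral is justified by dominated convergence, since $\|\Gamma\| e^{\langle\theta,\Gamma\rangle}$ is dominated by a sum of exponentials at nearby interior points. This gives $\nabla A(\theta) = \mathbb{E}_{q_\theta}[\Gamma(X)]$ and $\nabla^2 A(\theta) = \mathrm{Cov}_{q_\theta}(\Gamma(X))$. Since $\nabla_\theta \log q_\theta(x) = \Gamma(x) - \nabla A(\theta)$, the Fisher matrix equals this covariance.

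For $(ii)$ and $(iii)$, the key point is that minimality together with steepness makes $A$ a convex function of Legendre type: it is strictly convex and differentiable on $\interior\domain A$, and $\|\nabla A(\theta_k)\|\to\infty$ as $\theta_k$ approaches the boundary. Steepness is exactly this boundary condition, in the form recalled in Appendix~\ref{app:expFam}. We then invoke Rockafellar's theorem (Convex Analysis, Thm.~26.5): the conjugate of a Legendre function is Legendre, and $\nabla A : \interior\domain A \to \interior\domain A^*$ is a bijection with inverse $\nabla A^*$. This yields $(iii)$, together with the strict convexity and differentiability in $(ii)$. Properness and lower semicontinuity of $A^*$ hold automatically for the conjugate of a proper function. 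For the entropy formula, when $\omega = \nabla A(\theta)$ the supremum defining $A^*$ is attained at $\theta$ (Fenchel--Young equality), so $A^*(\omega) = \langle\theta,\omega\rangle - A(\theta) = \mathbb{E}_{q_\theta}[\langle\theta,\Gamma(X)\rangle - A(\theta)] = \mathbb{E}_{q_\theta}[\log q_\theta(X)]$.

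For $(iv)$, expanding the definition gives $\text{KL}(q_{\theta'},q_\theta) = \mathbb{E}_{q_{\theta'}}[\langle\theta'-\theta,\Gamma\rangle] - A(\theta') + A(\theta) = A(\theta) - A(\theta') - \langle \omega', \theta-\theta'\rangle = d_A(\theta,\theta')$. Substituting the Fenchel--Young equalities $A(\theta) = \langle\theta,\omega\rangle - A^*(\omega)$ and $A(\theta') = \langle\theta',\omega'\rangle - A^*(\omega')$, and using $\nabla A^*(\omega) = \theta$, turns this into $A^*(\omega') - A^*(\omega) - \langle\theta,\omega'-\omega\rangle = d_{A^*}(\omega',\omega)$.

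The main obstacle is the Legendre-type step in $(ii)$--$(iii)$. We must verify that the paper's definition of steepness matches Rockafellar's essential smoothness, including its behaviour on boundary points of $\domain A$, which may lie outside the interior. Surjectivity of $\nabla A$ onto $\interior\domain A^*$ relies on exactly this property. We would first try to cite \citet[Thm.~8.2 and Ch.~9]{barndorff-nielsen2014} directly for this step, rather than reprove it.
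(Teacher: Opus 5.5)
Your proposal is correct, and for (ii)--(iii) it is the paper's argument. Minimality and steepness make $A$ Legendre, Rockafellar's Theorem~26.5 (Proposition~\ref{prop:legendreProperties}) transfers this to $A^*$ and gives the bijection with inverse $\nabla A^*$, and the entropy formula is the Fenchel--Young equality.

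You differ from the paper in (i) and (iv), where the paper only cites results.
\begin{itemize}
\item For (i), Lemma~\ref{lemma:propertiesA} takes properness, lower semicontinuity, convexity and differentiability from \cite[Thm.~1.13]{brown1986} and \cite[Thm.~8.1]{barndorff-nielsen2014}.
\item For (iv), the paper quotes \cite[Thm.~3.7(v)]{bauschke1997} for $d_A(\theta,\theta') = d_{A^*}(\omega',\omega)$ and \cite{nielsen2010} for the identification with the KL divergence.
\end{itemize}
You instead prove (i) from first principles (Fatou, H\"older, dominated convergence). You get (iv) by expanding the KL directly and substituting the two Fenchel--Young equalities together with $\nabla A^*(\omega)=\theta$. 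The citation route is more compact. Yours is self-contained and shows exactly where minimality enters, namely the equality case of H\"older.

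Your remark about minimality is correct and worth keeping. If some $v\neq 0$ has $\langle v,\Gamma\rangle$ constant $\nu$-a.e., then $A$ is affine along $v$. So the strict convexity claimed in (i) under the sole hypothesis $\interior\domain A\neq\emptyset$ genuinely needs minimality. The paper's Lemma~\ref{lemma:propertiesA} asserts strict convexity without it too, and its proof only justifies convexity. Your tacit assumption is therefore the right repair, not a defect of your argument.

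The obstacle you anticipate in (ii)--(iii) does not arise. The paper's Definition~\ref{def:steepness} is exactly the boundary condition of essential smoothness in Definition~\ref{def:LegendreFunction}: sequences in the interior converging to a boundary point of $\domain A$. The paper notes this equivalence in Appendix~\ref{app:proofs}, so no separate verification or appeal to Barndorff-Nielsen's Chapter~9 is needed.
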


Proposition~\ref{prop:expFamilyProperties} implies that any density in the exponential family $\mathcal{Q}$ can be equivalently represented by $q_{\theta}$ or $q_{\omega}$, with $\omega = \nabla A(\theta)$ or equivalently $\theta = \nabla A^*(\omega)$, motivating the following definitions.

\begin{definition}
    \label{def:VIObjective}
    Consider an exponential family $\mathcal{Q}$. Then, we define the functions $f^P_{\pi}$ and $f^D_{\pi}$ as
    \begin{align*}
        f^P_\pi(\theta) &= \text{KL}(q_{\theta}, \pi),\,\forall \theta \in \domain A,\\
        f^D_\pi(\omega) &= \text{KL}(q_{\omega}, \pi),\,\forall \omega \in \domain A^*.
    \end{align*}
\end{definition}
Remark that $f^P_\pi \circ \nabla A^* = f^D_\pi$ while $f^D_\pi \circ \nabla A = f^P_\pi$.

\paragraph{Natural gradient variational inference}

Natural gradient descent is a preconditioned gradient descent where the conditioning is done by the Fisher information matrix of the current distribution  $q_\theta$ \citep{amari1998}. 
When $q_\theta$ is in an exponential family  ${\cal Q}$, 
 Proposition~\ref{prop:expFamilyProperties}~(i) states that the Fisher information matrix at $q_{\theta}$ is $\nabla^2 A(\theta)$. To
 solve \eqref{eq:VIproblem}, or from Definition \ref{def:VIObjective},  to equivalently minimize $f^P_\pi$ on $\domain A$ or  $f^D_\pi$ on $\domain A^*$,  
 a natural gradient descent update for $f^P_\pi$, with step size $\eta > 0$, updates $\theta \in \domain A$ to $\theta_+$ set to 
\begin{equation}
    \label{eq:NGVIupdate}
    \theta_+ = \theta - \eta \left( \nabla^2 A(\theta) \right)^{-1} \nabla f^P_\pi(\theta).
\end{equation}
In addition, $\nabla f^P_\pi(\theta) = \nabla^2 A(\theta) \nabla f^D_\pi(\nabla A(\theta))$ so that \eqref{eq:NGVIupdate} is equivalent to
\begin{equation}
    \label{eq:NGVIMirror}
    \theta_+ = \theta - \eta \nabla f^D_\pi(\omega), \quad \mbox{where $\omega = \nabla A(\theta)$.}
\end{equation}
The update \eqref{eq:NGVIMirror} is a mirror descent update with mirror map $\nabla A^*$: $\nabla A^*(\omega_+) = \nabla A^*(\omega) - \eta \nabla f_{\pi}^D(\omega)$  with $\omega_+ = \nabla A(\theta_+)$ (see also \cite[Theorem 1]{raskutti2015} and \cite{wu2024}). 

Although \eqref{eq:NGVIMirror} alleviates the need to invert a Hessian matrix at each iteration, we still need to compute gradients $\nabla f^D_\pi(\omega)$. From Proposition~\ref{prop:expFamilyProperties} (ii), it comes  
\begin{align}
f^D_\pi(\omega) & = A^*(\omega) - \mathbb{E}_{q_\omega}[\log \pi(X)], \label{def:lulu}
\end{align}
so that using (iii), (\ref{eq:NGVIMirror}) writes as 
\begin{equation}
    \label{eq:updateNGVIasMirror}
    \theta_+ = (1-\eta) \theta + \eta \nabla \mathbb{E}_{q_\omega}[\log \pi(X)].
\end{equation}
The real challenge is then to compute $\nabla \mathbb{E}_{q_\omega}[\log \pi(X)]$, which has been approached from several perspectives. There are two fundamental situations.

{\textbf{1)} Monte Carlo approximation:} The gradient $\nabla \mathbb{E}_{q_{\omega}} [ \log \pi(X)]$ is intractable but can be approximated by samples from $q_{\omega}$. Let us mention the gradient estimators based on Fisher's identity used by \cite{ranganath2014, ji2021}, gradient estimators based on reparametrization \citep{domke2023, kim2023}, or in the case of Gaussian distributions, gradient estimators based on \citep{bonnet1964, price1958} used  by \cite{rezende2014, khan2023}.

{\textbf{2)} Data subsampling:} The gradient is a finite sum of tractable terms but the sum is too large to be computed.
Data subsampling consists in approximating the gradient by only considering a randomly chosen subset of terms appearing in the sum, referred to as a batch. See for instance \citep{hoffman2013}.

\paragraph{Bregman projection and stochastic mirror descent}
\label{subsec:BregmanBackground}

We outlined above the link between natural gradient descent and mirror descent, which is itself related to the geometry induced by Bregman divergences \citep{bauschke2016, lu2018}. Motivated by these links, we recall some useful notions from the literature of optimization in Bregman geometries.

In this part, we  consider a Legendre function $h$. If $h$ has an open domain, is differentiable and is strictly convex, then it is Legendre, see Appendix~\ref{app:proofs} for more details. The Bregman projection below associated to $d_h$ will be useful to enforce constraints while respecting the geometry induced by natural gradient descent.
\begin{definition}
    \label{def:BregmanProj}
    For $C \subset {\cal H}$, the Bregman projection $\proj_C^h $ is defined for any $\omega \in \interior \domain h$ as
    \begin{equation*}
        \proj_C^h(\omega )= \argmin_{\omega' \in C} \left\{ d_h(\omega', \omega) \right\}.
    \end{equation*}
\end{definition}
\begin{proposition}
    \label{prop:BregmanProjection}
    If $C$ is closed and convex, such that $C \cap \interior \domain h \neq \emptyset$ with $h$ being Legendre, then $\proj_C^h$ is well-defined and single-valued on $\interior \domain h$, and $\proj_C^h(\omega) \in \interior \domain h$ for any $\omega \in \interior \domain h$.
\end{proposition}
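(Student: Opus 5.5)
We prove the three claims in turn: existence, uniqueness, and that the minimizer lies in $\interior \domain h$. Throughout we use that $h$ is Legendre, i.e.\ proper, lower semicontinuous, convex, essentially smooth (differentiable on $\interior \domain h \neq \emptyset$ with $\|\nabla h(\omega_k)\| \to \infty$ whenever $\omega_k$ approaches the boundary of $\domain h$), and strictly convex on $\interior \domain h$.

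Fix $\omega \in \interior \domain h$ and set $\phi(\omega') = d_h(\omega', \omega) = h(\omega') - h(\omega) - \langle \nabla h(\omega), \omega' - \omega\rangle$. The function $\phi$ is $h$ plus an affine function. Hence it is proper, lower semicontinuous and convex, with $\domain \phi = \domain h$, and it is strictly convex on $\interior \domain h$.

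\textbf{Existence.} The key step is coercivity of $\phi$, i.e.\ bounded sublevel sets. Since $\nabla h(\omega) \in \interior \domain h^*$ (a standard property of Legendre functions; by Rockafellar's Theorem 26.5 the map $\nabla h$ sends $\interior \domain h$ onto $\interior \domain h^*$), the conjugate $h^*$ is finite on a neighbourhood of $\nabla h(\omega)$. Take $\varepsilon > 0$ with $\nabla h(\omega) + \varepsilon u \in \domain h^*$ for all unit vectors $u$. Fenchel's inequality then gives
\[
h(\omega') \geq \langle \nabla h(\omega) + \varepsilon u, \omega' \rangle - h^*\bigl(\nabla h(\omega) + \varepsilon u\bigr).
\]
Choosing $u = \omega'/\|\omega'\|$ and using that $h^*$ is bounded on this small sphere (a convex function is continuous on the interior of its domain), we get $\phi(\omega') \geq \varepsilon \|\omega'\| - c$. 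So $\phi$ is coercive. Now $C \cap \domain h \neq \emptyset$, so $\phi + \iota_C$ is proper, lower semicontinuous and coercive on the finite-dimensional space $\mathcal{H}$. It therefore attains its infimum, which is finite.

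\textbf{Interiority.} This is the main obstacle. Let $\bar\omega$ be a minimizer and pick $c_0 \in C \cap \interior \domain h$. Suppose $\bar\omega \notin \interior \domain h$ (it lies in $\domain h$ because the value is finite). Set $\omega_t = \bar\omega + t(c_0 - \bar\omega)$ for $t \in (0,1]$. Then $\omega_t \in C$ by convexity, and $\omega_t \in \interior \domain h$ by the line segment principle. The one-sided derivative of the convex function $g(t) = \phi(\omega_t)$ at $0^+$ equals $\lim_{t\downarrow 0} \langle \nabla h(\omega_t) - \nabla h(\omega), c_0 - \bar\omega\rangle$. Essential smoothness makes this limit $-\infty$ (Rockafellar, Lemma 26.2). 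Hence $g(t) < g(0)$ for small $t$, contradicting minimality. So every minimizer lies in $\interior \domain h$.

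\textbf{Uniqueness.} Suppose $\bar\omega_1 \neq \bar\omega_2$ are both minimizers. By interiority both lie in $\interior \domain h$, and so does their midpoint, which also lies in $C$. Strict convexity of $\phi$ on $\interior \domain h$ makes the midpoint's value strictly smaller than the common minimal value, a contradiction. Therefore $\proj_C^h(\omega)$ is single-valued and belongs to $\interior \domain h$.
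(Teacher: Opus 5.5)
Your proof is correct. It uses the same three-step decomposition as the paper: existence, then interiority, then single-valuedness. The paper, however, proves each step purely by citation to Bauschke, Borwein and Combettes (2003):
\begin{itemize}
\item Proposition 3.31(vi) for nonemptiness of $\proj_C^h(\omega)$ exactly when $\omega\in\interior\domain h$;
\item Proposition 3.33(v)(b) for $\proj_C^h(\omega)\in\interior\domain h$;
\item Proposition 3.32(ii)(d) for single-valuedness.
\end{itemize}
You instead give self-contained finite-dimensional arguments. Coercivity of $d_h(\cdot,\omega)$ comes from $\nabla h(\omega)\in\interior\domain h^*$ via Fenchel's inequality. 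This is the same coercivity fact the paper later invokes through Bauschke--Borwein (1997), Fact 2.11, in the proof of Proposition~\ref{prop:LERCConsequences}. Interiority comes from the blow-up of directional derivatives at the boundary (Rockafellar's Lemma 26.2). Uniqueness comes from strict convexity on $\interior\domain h$ once interiority is known.

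The cited route is shorter and additionally gives the converse, $\proj_C^h(\omega)=\emptyset$ for $\omega\notin\interior\domain h$. Your route makes transparent which hypothesis drives which conclusion:
\begin{itemize}
\item existence needs only $C\cap\domain h\neq\emptyset$, with Legendre-ness entering only through $\nabla h(\interior\domain h)=\interior\domain h^*$;
\item the hypothesis $C\cap\interior\domain h\neq\emptyset$ and essential smoothness are used only for interiority;
\item uniqueness needs strict convexity only on the interior.
\end{itemize}

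Two small tightenings. First, choose $\varepsilon$ so that the closed $\varepsilon$-ball around $\nabla h(\omega)$ lies in $\interior\domain h^*$, not merely in $\domain h^*$. Continuity of $h^*$ on the interior of its domain then really does bound it on the sphere. Second, in the interiority step you do not need the equality between $g'_+(0)$ and $\lim_{t\downarrow 0} g'(t)$. The chord inequality $(g(t)-g(0))/t\le g'(t)$ holds for any convex $g$ that is finite on $[0,1]$ and differentiable at $t\in(0,1)$. Combined with $g'(t)\to-\infty$, it already gives $g(t)<g(0)$ for small $t>0$.
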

We  review recent results about the convergence of stochastic mirror descent algorithms. Consider the optimization problem of finding $\omega_*$ s.t.~
\begin{equation*}
    f(\omega_*) = \min_{\omega \in D} f(\omega),
\end{equation*}
when $f$ is convex and its gradient is unavailable, e.g., because it involves an expectation as in  $f^D_\pi$. 
We assume $D \!\subset\! \domain h$.
The stochastic mirror descent algorithm with mirror map $\nabla h$ and step size $\eta > 0$, updates $\omega$ to $\omega_+$, assumed to be in $D$, as
\begin{equation}
    \label{eq:omegaPlus}
    \nabla h(\omega_+) =  \nabla h(\omega) - \eta G^N(\omega),
\end{equation}
where $G^N(\omega)$ is an unbiased estimator of $\nabla f (\omega)$ that depends on a parameter $N \in \mathbb{N}_{> 0}$ representing e.g., the sample size or the batch size used to compute it. The convergence of such an algorithm is difficult to establish because of the interaction between the noise and the non-Euclidean geometry of the algorithm. In particular, the effect of the noise depends on the location of the current iterate.

In order to account for this specific geometry, several Bregman-based generalizations of Euclidean notions, such as variance, strong convexity, or smoothness have been introduced.
A Bregman-based variance-like quantity is defined by \cite{hendrikx2024}. 
\begin{definition}
    \label{def:variance}
    Consider the step size $\eta > 0$ and $N \in \mathbb{N}_{>0}$. Then, we define the function $f_{\eta,N} : D \rightarrow \mathbb{R}$ as
    \begin{equation*}
        f_{\eta, N}(\omega) = f(\omega) - \frac{1}{\eta} \mathbb{E}\left[ d_h(\omega, \omega_+) \right],\,\forall \omega \in D,
    \end{equation*}     
    with $\omega_+$ defined as in \eqref{eq:omegaPlus}. The variance $\sigma_{\eta,N}^2$  is then
    \begin{equation*}
        \sigma_{\eta,N}^2 = \frac{1}{\eta} \sup_{\omega \in D} \left\{f(\omega_*) - f_{\eta,N}(\omega) \right\}.
    \end{equation*}
\end{definition}
It is smaller than previously proposed notions such as that of \cite{hanzely2021}, and \citet[Proposition 2.1]{hendrikx2024} states that $\sigma_{\eta,N}^2 \geq 0,\,\forall \eta > 0$.

We recall the notions of relative strong convexity and relative smoothness, which generalize the notions of strong convexity and smoothness to the Bregman geometry \citep{bauschke2016, lu2018}.

\begin{definition}
    \label{def:relativeProperties}
Let $m \in \mathbb{R}_{>0}$,   $f$ is said to be $m$-strongly-convex relatively to $h$ if the following holds: 
       $ m \, d_h(\omega, \omega') \leq d_f(\omega, \omega'),\,\forall \omega, \omega' \in D.$
    
    Similarly, for $\ell \in \mathbb{R}_{>0}$, $f$ is said to be $\ell$-smooth relatively to $h$ if 
       $d_f(\omega, \omega') \leq \ell \,d_h(\omega, \omega'),\,\forall \omega, \omega' \in D.$
\end{definition}

Relative smoothness allows to derive an upper-bound on $\sigma_{\eta,N}^2$, adapted from \citet[Prop.~2.4]{hendrikx2024}.
\begin{proposition}
    \label{prop:varianceBoundHadrien}
    For $\omega \in D$, we define $\nabla h(\overline{\omega}_+) = \nabla h(\omega) - \eta \nabla f(\omega)$, which is the exact counterpart of $\omega_+$ in (\ref{eq:omegaPlus}). If $f$ is $\ell$-smooth relatively to $h$ and $\eta \leq \frac{1}{\ell}$, then
    \begin{equation*}
        \sigma_{\eta,N}^2 \leq \frac{1}{\eta^2} \sup_{\omega \in D} \left\{ \mathbb{E}\left[ d_h(\overline{\omega}_+, \omega_+)\right] \right\}.
    \end{equation*}
\end{proposition}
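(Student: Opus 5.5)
Fix $\omega \in D$. By Definition~\ref{def:variance}, it suffices to show, for every such $\omega$,
\begin{equation*}
f(\omega_*) - f(\omega) + \frac{1}{\eta}\mathbb{E}\left[d_h(\omega,\omega_+)\right] \leq \frac{1}{\eta}\mathbb{E}\left[d_h(\overline{\omega}_+,\omega_+)\right],
\end{equation*}
since multiplying by $1/\eta$ and taking the supremum over $\omega$ then gives the claim. The strategy is to expand $d_h(\omega,\omega_+)$ using the three-point identity of Bregman divergences around the deterministic point $\overline{\omega}_+$. This isolates a term that is linear in the noise, which vanishes by unbiasedness, plus terms we can control with relative smoothness and convexity.

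\textbf{Step 1: three-point identity.} The identity $d_h(a,c) = d_h(a,b) + d_h(b,c) + \langle \nabla h(b) - \nabla h(c), a - b\rangle$, applied with $a=\omega$, $b=\overline{\omega}_+$, $c=\omega_+$, gives
\begin{equation*}
d_h(\omega,\omega_+) = d_h(\omega,\overline{\omega}_+) + d_h(\overline{\omega}_+,\omega_+) + \langle \nabla h(\overline{\omega}_+) - \nabla h(\omega_+), \omega - \overline{\omega}_+\rangle .
\end{equation*}
From the two update rules, $\nabla h(\overline{\omega}_+) - \nabla h(\omega_+) = \eta\left(G^N(\omega) - \nabla f(\omega)\right)$. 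Since $\omega$ and $\overline{\omega}_+$ are deterministic given $\omega$, this difference has zero expectation by unbiasedness, so the inner-product term drops out after taking expectations. This yields
\begin{equation*}
\mathbb{E}\left[d_h(\omega,\omega_+)\right] = d_h(\omega,\overline{\omega}_+) + \mathbb{E}\left[d_h(\overline{\omega}_+,\omega_+)\right].
\end{equation*}

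\textbf{Step 2: the deterministic part.} It remains to show $\eta\,(f(\omega_*) - f(\omega)) + d_h(\omega,\overline{\omega}_+) \leq 0$. This is the deterministic variance, which should be nonpositive when $\eta \leq 1/\ell$. The argument goes through the exact step.
\begin{itemize}
\item By symmetry of the identity $d_h(\omega,\overline{\omega}_+) + d_h(\overline{\omega}_+,\omega) = \langle \nabla h(\omega) - \nabla h(\overline{\omega}_+), \omega - \overline{\omega}_+\rangle = \eta\langle \nabla f(\omega), \omega - \overline{\omega}_+\rangle$, we have $d_h(\omega,\overline{\omega}_+) = \eta\langle \nabla f(\omega),\omega-\overline{\omega}_+\rangle - d_h(\overline{\omega}_+,\omega)$.
\item Relative smoothness gives $f(\overline{\omega}_+) \leq f(\omega) + \langle \nabla f(\omega), \overline{\omega}_+ - \omega\rangle + \ell\, d_h(\overline{\omega}_+,\omega)$. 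Combining this with $\eta\ell \leq 1$ leads to
\begin{equation*}
d_h(\omega,\overline{\omega}_+) \leq \eta\left(f(\omega) - f(\overline{\omega}_+)\right) + (\eta\ell - 1)\, d_h(\overline{\omega}_+,\omega) \leq \eta\left(f(\omega) - f(\overline{\omega}_+)\right).
\end{equation*}
\item Then $\eta\,(f(\omega_*) - f(\omega)) + d_h(\omega,\overline{\omega}_+) \leq \eta\,(f(\omega_*) - f(\overline{\omega}_+)) \leq 0$, by minimality of $\omega_*$.
\end{itemize}

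\textbf{Main obstacle.} The last inequality needs $\overline{\omega}_+ \in D$, or at least $f(\overline{\omega}_+) \geq f(\omega_*)$. The same issue arises for applying relative smoothness, which is only assumed on $D$. I would handle this as in the paper's setting: the iterates $\omega_+$ are assumed to lie in $D$, so take $D$ closed under exact steps. Alternatively, I would require relative smoothness on the relevant domain and use $f(\omega_*) = \inf$ over the whole domain. Two further points need checking: that the exact step $\overline{\omega}_+$ stays in $\interior\domain h$, so the Bregman quantities are defined, and that the expectations are integrable so the cross term genuinely vanishes.
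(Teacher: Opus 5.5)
Your proof is correct and matches the paper's, which simply defers to \citet[Prop.~2.4]{hendrikx2024}. The steps are the same: split $\mathbb{E}[d_h(\omega,\omega_+)] = d_h(\omega,\overline{\omega}_+) + \mathbb{E}[d_h(\overline{\omega}_+,\omega_+)]$ using the three-point identity and unbiasedness, use relative smoothness with $\eta\le 1/\ell$ to get $f(\overline{\omega}_+)\le f(\omega)-\frac{1}{\eta}d_h(\omega,\overline{\omega}_+)$, then take the supremum over $D$.

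The side condition you flag is $f(\overline{\omega}_+)\ge f(\omega_*)$, together with smoothness at $\overline{\omega}_+$. The paper leaves it implicit (it holds automatically in Hendrikx's unconstrained setting), but it is genuinely needed. Take the projected setting $D=C\cap\interior\domain A^*$, where the unprojected step can leave $C$, with exact gradients and a constrained minimizer satisfying $\nabla f(\omega_*)\neq 0$. Choosing $\omega=\omega_*$ then gives $\sigma^2_{\eta,N}\ge \eta^{-2}d_h(\omega_*,\overline{\omega}_+)>0$, while the right-hand side is $0$.
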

Then, \citet[Theorem 3.1]{hendrikx2024} states that if $f$ is $m$-strongly-convex relatively to $h$, the sequence $\{\omega_t\!\}_{\!t\in\mathbb{N}}$ generated by iterating \eqref{eq:omegaPlus} satisfies at every $t \in \mathbb{N}$
\begin{equation*}
    \begin{split}
        \eta \!\left( \!\mathbb{E}\left[ f_{\eta,N}(\omega_t) \right] - \inf_{\omega \in D} f_{\eta,N}(\omega) \!\right) + \mathbb{E} \left[ d_h(\omega_*, \omega_{t+1}) \right]\\ \leq (1 - m \eta)^{t+1} d_h(\omega_*, \omega_0) + \frac{\eta \sigma_{\eta,N}^2}{m}.
    \end{split}
\end{equation*}
When $\eta \leq 1/m$, we thus have geometric convergence to a small neighbourhood of $\omega_*$, whose size is controlled by $\sigma_{\eta,N}^2$.

\section{PROJECTED STOCHASTIC NGVI}
\label{sec:presentation}

We consider the constrained VI problem
\begin{equation}
    \label{eq:VIExpFamConstraint}
    \min_{\omega \in C \cap \domain A^*} f^D_\pi(\omega),
\end{equation}
which is an instance of Problem~\eqref{eq:VIproblem} with approximating family $\!\{ q_\omega \!\in \!\mathcal{Q},\,\omega \!\in\! C \!\cap\! \domain A^*\}$, meaning that we restrict densities from $\mathcal{Q}$ to have parameters in $C$. For  Gaussian distributions, $C$ may for instance enforce covariance matrices with constrained eigenvalues.

\subsection{Projected stochastic natural gradient}

To solve \eqref{eq:VIExpFamConstraint},
recall from Equation~\eqref{eq:updateNGVIasMirror} that an idealised natural gradient update for minimizing $f^D_\pi$ reads
\begin{equation*}
    \nabla A^*(\omega_+) = (1-\eta) \nabla A^*(\omega) + \eta \nabla \mathbb{E}_{q_\omega}[\log \pi(X)],
\end{equation*}
where we need to approximate $\nabla \mathbb{E}_{q_\omega}[\log \pi(X)]$. We suppose that we have an estimator $g^N(\omega)$ of $\nabla \mathbb{E}_{q_\omega}[\log \pi(X)]$ where $N$ is a sample or batch size. Using formulation \eqref{eq:NGVIMirror}, this amounts to approximate $\nabla f^D_\pi(\omega)$ by $G^N(\omega) = \nabla A^*(\omega) - g^N(\omega)$.

We propose the following Algorithm~\ref{alg}, a projected mirror descent algorithm, with possibly varying step and sample/batch sizes, which alternates stochastic mirror descent steps as in \eqref{eq:omegaPlus} and Bregman projection on $C$ as defined in Definition~\ref{def:BregmanProj}. We detail specific instances of $g^N$ in Sections~\ref{subsec:BonnetPrice} and \ref{subsec:BayesianLinearRegression}. Examples of constraint sets $C$ which admit projection operators $\proj_C^{A^*}$ with closed form are provided in Appendix~\ref{app:projections}.
\begin{algorithm}
    \SetAlgoLined
    Select step sizes $\{\eta_t \}_{t \in \mathbb{N}}$ in $(0,1]$, $\{ N_t \}_{ t\in \mathbb{N}}$ in $\mathbb{N}_{>0}$, and a starting point $\omega_0 \in \interior \domain A^*$.\\
    \For{$t \geq 0$}{
\vspace{-.3cm}
    \begin{align*}
        \nabla A^* ((\omega_t)_+) &= (1 - \eta_t) \nabla A^*(\omega_t) + \eta_t g^{N_t}(\omega_t)\\
        \omega_{t+1} &= \proj_C^{A^*}((\omega_t)_+).
    \end{align*}
    \vspace{-.4cm}
    }
    \caption{Projected NGVI}
    \label{alg}
\end{algorithm}

\subsection{Assumptions}

We describe below our main assumptions, which will hold throughout the rest of this work.
    \begin{itemize}
        \vspace{-.3cm}
        \item[\namedlabel{assumption:steep}{$\mathrm{(A1)}$}] $\interior \domain A \neq \emptyset$ and ${\cal Q}$ is minimal and steep.
        \vspace{-.2cm}
        \item[\namedlabel{assumption:constraint}{$\mathrm{(A2)}$}] $C$ is closed, convex, and $C \cap \interior \domain A^* \neq \emptyset$.
        \vspace{-.2cm}
        \item[\namedlabel{assumption:wellposedness}{$\mathrm{(A3)}$}] Iterates $\{\omega_t\}_{t \in \mathbb{N}}$ from Algorithm \ref{alg} satisfy $\omega_t \in \interior \domain A^*$ almost surely for any $t \in \mathbb{N}$.
        \vspace{-.2cm}
        \item[\namedlabel{assumption:estimator}{$\mathrm{(A4)}$}] For any $\omega \in \interior \domain A^*$ and $N \in \mathbb{N}_{>0}$, $\mathbb{E}\left[ g^N(\omega) \right] = \nabla \mathbb{E}_{q_\omega}[\log \pi(X)]$.
        \vspace{-.2cm}
        \item[\namedlabel{assumption:interior}{$\mathrm{(A5)}$}] Problem \eqref{eq:VIExpFamConstraint} admits a unique solution $\omega_* \in C \cap \interior \domain A^*$.
    \end{itemize}

Assumptions \ref{assumption:steep} and \ref{assumption:constraint} are mild assumptions on $\mathcal{Q}$ and $C$ that allow us to benefit from Propositions \ref{prop:expFamilyProperties} and \ref{prop:BregmanProjection}. Assumption \ref{assumption:wellposedness} ensures that the iterates are well-posed. 
As shown in Appendix \ref{app:proof3}, \ref{assumption:wellposedness} is satisfied if $g^N(\omega) \in \interior \domain A$ almost surely for every $\omega \in \interior \domain A^*$, $N \in \mathbb{N}_{>0}$, or if the step sizes $\{ \eta_t \}_{t \in \mathbb{N}}$ are small enough. 
Assumption \ref{assumption:estimator} is a mild unbiasedness assumption on our estimators.
Finally, Assumption \ref{assumption:interior} requires the considered VI problem to have a unique solution that must belong to $\interior \domain A^*$, e.g., excluding cases where the optimum is reached by a singular covariance in the Gaussian case. A sufficient condition for \ref{assumption:interior} is provided in subsection \ref{subsec:sc}. 
We stress that $C \cap \interior \domain A^*$ will play the role of $D$ in Section~\ref{subsec:BregmanBackground}. In particular, the supremum in the definition of $\sigma_{\eta,N}^2$ will be taken over this set.

\section{CONVERGENCE OF PROJECTED NGVI}
\label{sec:convergence}

In this section, we provide two novel general convergence rates for Algorithm~\ref{alg} under different assumptions on the step sizes and sample/batch sizes. Both results require the relative strong convexity of $f^D_\pi$, as defined in  Definition~\ref{def:relativeProperties}, and involve the variance-like quantity $\sigma_{\eta, N}^2$, specified in  Definition~\ref{def:variance}. Although Algorithm~\ref{alg} refers to our projected stochastic NGVI algorithm, we stress that these results apply in the general context of projected stochastic mirror descent.

\paragraph{Fixed step sizes} Our first  Proposition \ref{prop:cvgceStochMD} assumes fixed $\eta_t\!\equiv \!\eta $. It states that Algorithm~\ref{alg} with $N_t \!\equiv\! N$ produces iterates that geometrically converge to a Bregman ball around $\omega_*$. If $N_t \!\!= \!\!(t+1)^\gamma$, our result implies that convergence to
$\omega_*$ is guaranteed and controlled by a sum of geometric terms and a $\mathcal{O}\!\left(\frac{1}{T^\gamma} \right)$ term.
\begin{proposition}
    \label{prop:cvgceStochMD}
    Assume $f^D_\pi$ is $m$-strongly-convex relatively to $A^*$.
    Let $\{ \omega_t\}_{t \in \mathbb{N}}$ be generated by Algorithm~\ref{alg} with $\eta_t \equiv \eta \in (0,m^{-1}]$. If $N_t \equiv N \in \mathbb{N}_{>0}$, we have for any $T \in \mathbb{N}$ that
    \begin{equation*}
        \mathbb{E} \left[ d_{A^*}(\omega_*, \omega_T) \right] \leq (1 - m \eta)^T d_{A^*}(\omega_*, \omega_0) + \frac{\eta \sigma_{\eta,N}^2}{m}.
    \end{equation*}
    If there exists $V > 0$ s.t.~$\sigma^2_{\eta, N} \leq \frac{V}{N}$ for any $\eta \in (0,m^{-1}]$, and $N_t = (t+1)^\gamma$, $\gamma > 0$, we have for any $T \in \mathbb{N}$ that
    \begin{multline*}
        \mathbb{E} \left[ d_{A^*}(\omega_*, \omega_T) \right] \leq (1 - m \eta)^T d_{A^*}(\omega_*, \omega_0)\\ + \frac{V}{m \eta}\left( (1-m\eta)^{\frac{T+1}{2}} + \frac{2^\gamma}{T^\gamma} \right).
    \end{multline*}
\end{proposition}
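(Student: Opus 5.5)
The argument rests on a one-step contraction in expectation, which is then unrolled for the two sample-size schedules. I will first prove that for every $t$, conditionally on $\omega_t$,
\begin{equation*}
\mathbb{E}\left[ d_{A^*}(\omega_*, \omega_{t+1}) \mid \omega_t \right] \leq (1 - m \eta_t) d_{A^*}(\omega_*, \omega_t) + \eta_t^2 \sigma^2_{\eta_t, N_t}.
\end{equation*}
For the unprojected point $(\omega_t)_+$, the mirror step gives $\nabla A^*((\omega_t)_+) = \nabla A^*(\omega_t) - \eta_t G^{N_t}(\omega_t)$, with $G^N$ unbiased by \ref{assumption:estimator}. The three-point identity for Bregman divergences then yields
\begin{equation*}
d_{A^*}(\omega_*, (\omega_t)_+) = d_{A^*}(\omega_*, \omega_t) - \eta_t \langle G^{N_t}(\omega_t), \omega_t - \omega_* \rangle + d_{A^*}(\omega_t, (\omega_t)_+).
\end{equation*}
Take the conditional expectation and use $\langle \nabla f^D_\pi(\omega_t), \omega_t - \omega_* \rangle = f^D_\pi(\omega_t) - f^D_\pi(\omega_*) + d_{f^D_\pi}(\omega_*, \omega_t)$. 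Relative strong convexity gives $d_{f^D_\pi}(\omega_*,\omega_t) \geq m\, d_{A^*}(\omega_*,\omega_t)$. The leftover term $\mathbb{E}[d_{A^*}(\omega_t,(\omega_t)_+)] - \eta_t (f^D_\pi(\omega_t) - f^D_\pi(\omega_*))$ equals $\eta_t (f^D_\pi(\omega_*) - f_{\eta_t,N_t}(\omega_t))$, and by Definition~\ref{def:variance} this is at most $\eta_t^2 \sigma^2_{\eta_t,N_t}$. This uses $\omega_t \in C \cap \interior\domain A^*$, which holds by \ref{assumption:wellposedness} and Proposition~\ref{prop:BregmanProjection}. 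The step reproduces the recursion behind \citet[Theorem 3.1]{hendrikx2024}, but keeps only the distance term.

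\textbf{Projection step (main obstacle).} I must show $d_{A^*}(\omega_*, \omega_{t+1}) \leq d_{A^*}(\omega_*, (\omega_t)_+)$. This is the Bregman Pythagorean inequality: since $\omega_* \in C$ and $\omega_{t+1} = \proj_C^{A^*}((\omega_t)_+)$,
\begin{equation*}
d_{A^*}(\omega_*, (\omega_t)_+) \geq d_{A^*}(\omega_*, \omega_{t+1}) + d_{A^*}(\omega_{t+1}, (\omega_t)_+).
\end{equation*}
I will derive it from the first-order optimality condition $\langle \nabla A^*(\omega_{t+1}) - \nabla A^*((\omega_t)_+), \omega - \omega_{t+1} \rangle \geq 0$ for all $\omega \in C$. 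This condition is legitimate because the projection lies in $\interior\domain A^*$ (Proposition~\ref{prop:BregmanProjection}), so $A^*$ is differentiable there, and $C$ is convex by \ref{assumption:constraint}. Combining it with the three-point identity gives the inequality. The delicate part is making sure the variance supremum is taken over $C \cap \interior \domain A^*$, which contains all iterates, and that the expectations are well defined.

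\textbf{Unrolling.} With $\eta_t \equiv \eta$ and $N_t \equiv N$, taking total expectation and iterating gives
\begin{equation*}
\mathbb{E}[d_{A^*}(\omega_*,\omega_T)] \leq (1-m\eta)^T d_{A^*}(\omega_*,\omega_0) + \eta^2 \sigma^2_{\eta,N} \sum_{k\geq 0} (1-m\eta)^k,
\end{equation*}
and the geometric sum is bounded by $\eta\sigma^2_{\eta,N}/m$, as claimed. With $N_t = (t+1)^\gamma$, the variance term becomes $\eta^2 V \sum_{t=0}^{T-1} (1-m\eta)^{T-1-t} (t+1)^{-\gamma}$. I split this sum at $t \approx \thalf$.
\begin{itemize}
\item For $t < \thalf$, bound $(t+1)^{-\gamma} \leq 1$. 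The remaining geometric tail is at most $(1-m\eta)^{\lfloor T/2 \rfloor}/(m\eta)$, which is at most $(1-m\eta)^{(T+1)/2-1}/(m\eta)$.
\item For $t \geq \thalf$, bound $(t+1)^{-\gamma} \leq (2/T)^\gamma$ and the full geometric sum by $1/(m\eta)$.
\end{itemize}
After multiplying by $\eta^2 V$, both pieces fit under $\frac{V}{m\eta}\big((1-m\eta)^{\frac{T+1}{2}} + \frac{2^\gamma}{T^\gamma}\big)$. The stated prefactor $V/(m\eta)$ is larger than the $\eta V/m$ this calculation produces, so it leaves room for the off-by-one exponent adjustments. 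The case $T=0$ is trivial.
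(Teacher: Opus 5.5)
Your overall route matches the paper's proof. The one-step identity $\mathbb{E}[d_{A^*}(\omega_*,(\omega_t)_+)]=d_{A^*}(\omega_*,\omega_t)-\eta_t d_{f^D_\pi}(\omega_*,\omega_t)-\eta_t(f^D_\pi(\omega_t)-f^D_\pi(\omega_*))+\mathbb{E}[d_{A^*}(\omega_t,(\omega_t)_+)]$ is the same; the paper gets it by expanding the divergences and you get it from the three-point identity. The projection is handled by the Bregman Pythagorean inequality: the paper cites Bauschke et al., and you derive it from first-order optimality, which is valid since $\omega_{t+1}\in\interior\domain A^*$ and $\omega_*\in C\cap\interior\domain A^*$. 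Both then unroll the recursion and split the sum near $T/2$. Your unrolled sum $\sum_{t=0}^{T-1}(1-m\eta)^{T-1-t}N_t^{-1}$ is correctly indexed; the paper's display uses $\sum_{t=0}^{T}(1-m\eta)^{T-t}N_t^{-1}$.

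The flaw is in the last step. Write $q=1-m\eta$. With your split at $t\ge\lceil T/2\rceil$, the early block only carries $q^{\lfloor T/2\rfloor}$, not $q^{(T+1)/2}$. Your claim that the gap between $V/(m\eta)$ and $\eta V/m$ absorbs the missing power of $q$ is false in general, because the ratio of the two prefactors is only $\eta^{-2}$. For example, take $m<1$ and $\eta=1$, which is admissible: Algorithm~\ref{alg} allows $\eta_t\in(0,1]$, and $1\le m^{-1}$. Then there is no slack at all, and for $T=2$ your chain of bounds gives $\frac{V}{m}(q+1)$. This is strictly larger than the target $\frac{V}{m}(q^{3/2}+1)$, so the argument as written does not close.

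The fix is to move the index $t=\lceil T/2\rceil-1$ into the late block. There $(t+1)^{-\gamma}\le\lceil T/2\rceil^{-\gamma}\le(2/T)^\gamma$ still holds. The early block $t\le\lceil T/2\rceil-2$ then has exponents $T-1-t\ge\lfloor T/2\rfloor+1\ge\frac{T+1}{2}$. This gives $\frac{\eta V}{m}\bigl(q^{\frac{T+1}{2}}+\frac{2^\gamma}{T^\gamma}\bigr)$ directly, which is below the stated bound because $\eta\le 1$. You should say explicitly that $\eta\le1$ comes from Algorithm~\ref{alg}, not from $\eta\le m^{-1}$.

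Finally, as in the paper, your $t=0$ step tacitly needs $\omega_0\in C$. Relative strong convexity and $\sigma^2_{\eta,N}$ are only controlled on $C\cap\interior\domain A^*$, and Algorithm~\ref{alg} only requires $\omega_0\in\interior\domain A^*$.
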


\paragraph{Decreasing step sizes} Our second result in Proposition \ref{prop:cvgceStochMDDecreasing} is set for decreasing step sizes. It gives convergence rates for the exact convergence of the iterates to $\omega_*$. If $N_t \equiv N$, this convergence is in $\mathcal{O}\left(\frac{1}{T} \right)$, but it can be made faster if the sample/batch size increases with iterations. For instance, if $N_t = (t+1)^\gamma$,  for some $\gamma \in (0,1)$, then the convergence is in $\mathcal{O}\left(\frac{1}{T^{1 + \gamma}}\right)$.
\begin{proposition}
    \label{prop:cvgceStochMDDecreasing}
    Assume $f^D_\pi$ is $m$-strongly-convex relatively to $A^*$ and  there exists $V > 0$ s.t.~$\sigma_{\eta,N}^2 \leq \frac{V}{N}$ for any $\eta \in (0,m^{-1}]$ and $N \in \mathbb{N}_{>0}$. 
    Consider $\{ \omega_t\}_{t \in \mathbb{N}}$ generated by Algorithm~\ref{alg} with $\eta_t = \frac{1}{m(t/2+1)}$. Then, we have for any $T \in \mathbb{N}$ that
    \begin{equation*}
        \mathbb{E} \left[ d_{A^*}(\omega_*, \omega_{T+1}) \right] \leq \frac{4V}{m^2 (T+2)(T+1)}\sum_{t=0}^T \frac{1}{N_t}. 
    \end{equation*}
\end{proposition}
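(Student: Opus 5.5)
Our approach is to first derive a one-step recursion for the projected stochastic mirror descent and then unroll it with weights adapted to the step sizes $\eta_t = \frac{2}{m(t+2)}$. With $h = A^*$ and $f = f^D_\pi$, we write $\omega_{t+1} = \proj_C^{A^*}((\omega_t)_+)$. The first ingredient is the Bregman Pythagorean inequality for projections onto the closed convex set $C$ (available under \ref{assumption:steep}, \ref{assumption:constraint} and Proposition~\ref{prop:BregmanProjection}). Since $\omega_* \in C$, it gives
\begin{equation*}
d_{A^*}(\omega_*, \omega_{t+1}) \leq d_{A^*}(\omega_*, (\omega_t)_+) - d_{A^*}(\omega_{t+1}, (\omega_t)_+) \leq d_{A^*}(\omega_*, (\omega_t)_+).
\end{equation*}
So the projection can only decrease the distance to $\omega_*$, and it suffices to control the unprojected step.

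For the unprojected step we reproduce the one-step argument of \citet[Theorem 3.1]{hendrikx2024}. The three-point identity gives
\begin{equation*}
d_{A^*}(\omega_*, (\omega_t)_+) = d_{A^*}(\omega_*, \omega_t) - \eta_t\langle G^{N_t}(\omega_t), \omega_t - \omega_*\rangle + d_{A^*}(\omega_t, (\omega_t)_+).
\end{equation*}
We take the conditional expectation, using unbiasedness \ref{assumption:estimator} so that $\mathbb{E}[G^{N_t}(\omega_t)] = \nabla f(\omega_t)$. Relative strong convexity then yields
\begin{equation*}
\langle \nabla f(\omega_t), \omega_t - \omega_* \rangle \geq f(\omega_t) - f(\omega_*) + m\, d_{A^*}(\omega_*, \omega_t).
\end{equation*}
Combining these and recalling the definition of $f_{\eta,N}$ gives
\begin{equation*}
\mathbb{E}_t[d_{A^*}(\omega_*, (\omega_t)_+)] \leq (1 - m\eta_t)\, d_{A^*}(\omega_*, \omega_t) - \eta_t\bigl(f_{\eta_t,N_t}(\omega_t) - f(\omega_*)\bigr).
\end{equation*}
By Definition~\ref{def:variance}, $f(\omega_*) - f_{\eta_t,N_t}(\omega_t) \leq \eta_t \sigma^2_{\eta_t,N_t}$ for $\omega_t \in C \cap \interior\domain A^*$, which holds by \ref{assumption:wellposedness}. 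Since $\eta_t \leq 1/m$ for all $t \geq 0$, the hypothesis $\sigma^2 \leq V/N$ applies. Setting $D_t = \mathbb{E}[d_{A^*}(\omega_*, \omega_t)]$, we obtain
\begin{equation*}
D_{t+1} \leq (1 - m\eta_t) D_t + \eta_t^2 \frac{V}{N_t}.
\end{equation*}
This is presumably also the recursion behind Proposition~\ref{prop:cvgceStochMD}.

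It remains to unroll with $\eta_t = \frac{2}{m(t+2)}$, for which $1 - m\eta_t = \frac{t}{t+2}$. We multiply the recursion by $(t+2)(t+1)$. Since $(t+2)(t+1)\cdot\frac{t}{t+2} = (t+1)t$, we get
\begin{equation*}
(t+2)(t+1) D_{t+1} \leq (t+1)t\, D_t + \frac{4(t+1)}{m^2 (t+2)}\frac{V}{N_t} \leq (t+1)t\, D_t + \frac{4V}{m^2 N_t}.
\end{equation*}
Summing this telescoping inequality from $t = 0$ to $T$ removes $D_0$, since its coefficient is $0 \cdot 1 = 0$. This leaves $(T+2)(T+1) D_{T+1} \leq \frac{4V}{m^2}\sum_{t=0}^T \frac{1}{N_t}$, which is the claim.

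The main obstacle is the one-step bound. In particular, we must check that Hendrikx et al.'s variance-based inequality survives the projection, and this is why we apply Pythagoras before the three-point identity, so that the variance term is evaluated at the unprojected $(\omega_t)_+$ exactly as in Definition~\ref{def:variance}. We must also justify the Pythagorean inequality when $(\omega_t)_+ \in \interior\domain A^*$, which relies on $A^*$ being Legendre.
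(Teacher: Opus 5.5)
Your proof is correct and follows the paper's route. You derive the same one-step bound $\mathbb{E}[d_{A^*}(\omega_*,\omega_{t+1})]\le(1-m\eta_t)\,\mathbb{E}[d_{A^*}(\omega_*,\omega_t)]+\eta_t^2\sigma^2_{\eta_t,N_t}$ from the three-point identity, unbiasedness, relative strong convexity, the definition of $\sigma^2_{\eta,N}$ at the unprojected point, and the Bregman projection inequality, and then telescope in the same way: multiplying by $(t+1)(t+2)$ is the paper's ``divide by $\eta_t$, multiply by $t+1$'' up to the factor $m/2$. One small correction: $\omega_t\in C$, which the variance bound and relative strong convexity need, comes from the projection for $t\ge 1$ rather than from \ref{assumption:wellposedness}, and for $t=0$ it tacitly requires $\omega_0\in C$, which the paper's proof also leaves implicit.
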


\subsection{Sufficient conditions}

\label{subsec:sc}

We now introduce a new condition ensuring \ref{assumption:interior}, that $f^D_\pi$ is relatively strongly convex, and that allows to control $\sigma^2_{\eta, N}$. This condition requires $\pi$ to belong to an exponential family that extends $\mathcal{Q}$ in a precise way, without necessarily requiring $\pi \in \mathcal{Q}$ as often done in previous work, e.g., \citep{wu2024}.

\begin{definition}
    \label{def:linearEnlargement}
    The posterior $\pi$ satisfies the linearly extended recoverability condition (LERC) with respect to ${\cal Q}$ if $\pi$ belongs to an exponential family $\mathcal{Q}_\pi$ with sufficient statistic $\Gamma_\pi : \mathcal{X} \rightarrow \mathcal{H}_\pi$ for which there exists a linear operator $L_\pi : \mathcal{H} \rightarrow \mathcal{H}_\pi$ satisfying $\mathbb{E}_{q_\omega}[\Gamma_\pi(X)] = L_\pi \mathbb{E}_{q_\omega}[\Gamma(X)]$ for all $\omega \in \domain A^*$.
\end{definition}

The LERC extends the exact recoverability situation $\pi \in \mathcal{Q}$, as it is satisfied when $\mathcal{Q}_\pi =\mathcal{Q}$, but also in situations where $\mathcal{Q}_\pi$ is larger than $\mathcal{Q}$.

\begin{example}
    If $\mathcal{Q}$ is the family of centered Gaussian distributions with diagonal covariance matrices and $\pi = \mathcal{N}(\mu_\pi, \Sigma_\pi)$ with no restriction on $\mu_\pi$ and $\Sigma_\pi$, then $\pi$ satisfies the LERC with respect to $\mathcal{Q}$. The linear operator is $L_\pi: \omega \rightarrow (0, \diag \omega)$, from $ \mathbb{R}^d$ to $\mathbb{R}^d \times \mathbb{S}^d$, where for $\omega \in \mathbb{R}^d$, $\diag \omega$ denotes the diagonal matrix whose diagonal elements are the elements of $\omega$. In the special case $\mu_\pi=0$, this corresponds to a Gaussian mean-field variational approximation, as the correlation structure of $\pi$ is removed. More details can be found in Appendix~\ref{app:expFam}.
\end{example}

While generalizing exact recoverability, the LERC also has a direct simplifying impact on solving our VI problem, as shown in the next result. 

\begin{proposition}
    \label{prop:LERCConsequences}
    Consider $\pi \in \mathcal{Q}_\pi$ that satisfies the LERC with respect to $\mathcal{Q}$ with $\theta_\pi$ the parameter so that $\pi=q_{\theta_\pi}$.
    \begin{itemize}
        \item [$(i)$] Suppose that Assumptions \ref{assumption:steep} and \ref{assumption:constraint} hold. If $C$ is compact, then Assumption \ref{assumption:interior} is satisfied. Alternatively, if $L_\pi^\top \theta_\pi \in \interior \domain A$, then \ref{assumption:interior} is satisfied and $\omega_* = \proj_C^{A^*}(\nabla A(L_\pi^\top \theta_\pi))$. 
        \item [$(ii)$] Under Assumption \ref{assumption:steep}, $f^D_\pi$ is $1$-strongly-convex and $1$-smooth relatively to $A^*$.
    \end{itemize}
\end{proposition}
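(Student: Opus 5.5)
The whole argument rests on one computation: under the LERC, the term $\mathbb{E}_{q_\omega}[\log \pi(X)]$ in \eqref{def:lulu} is affine in $\omega$. Write $\pi = q_{\theta_\pi}$ in $\mathcal{Q}_\pi$, so that $\log \pi(x) = \langle \theta_\pi, \Gamma_\pi(x)\rangle - A_\pi(\theta_\pi)$, where $A_\pi$ is the log-partition function of $\mathcal{Q}_\pi$. Taking expectations under $q_\omega$ and using $\mathbb{E}_{q_\omega}[\Gamma_\pi(X)] = L_\pi \omega$ gives
\begin{equation*}
\mathbb{E}_{q_\omega}[\log \pi(X)] = \langle L_\pi^\top \theta_\pi, \omega\rangle - A_\pi(\theta_\pi), \quad \forall \omega \in \domain A^*.
\end{equation*}
Hence, with $\bar\theta = L_\pi^\top\theta_\pi$,
\begin{equation*}
f^D_\pi(\omega) = A^*(\omega) - \langle \bar\theta, \omega\rangle + A_\pi(\theta_\pi).
\end{equation*}
So $f^D_\pi$ equals $A^*$ plus an affine function.

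Part (ii) follows at once. Bregman divergences are invariant under adding affine functions, so $d_{f^D_\pi}(\omega,\omega') = d_{A^*}(\omega,\omega')$ for all $\omega \in \domain A^*$ and $\omega' \in \interior \domain A^*$, where $A^*$ is differentiable by Proposition~\ref{prop:expFamilyProperties}~(ii). Both inequalities of Definition~\ref{def:relativeProperties} therefore hold with $m=\ell=1$. Assumption \ref{assumption:steep} is used only to guarantee the differentiability and strict convexity of $A^*$.

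For the second claim of (i), suppose $\bar\theta \in \interior \domain A$ and set $\bar\omega = \nabla A(\bar\theta)$. By Proposition~\ref{prop:expFamilyProperties}~(iii), $\bar\omega \in \interior \domain A^*$ and $\nabla A^*(\bar\omega) = \bar\theta$. Expanding $d_{A^*}(\omega,\bar\omega)$ gives
\begin{equation*}
f^D_\pi(\omega) = d_{A^*}(\omega,\bar\omega) + \mathrm{const},
\end{equation*}
where the constant does not depend on $\omega$. Minimizing over $C\cap\domain A^*$ is therefore exactly the Bregman projection of Definition~\ref{def:BregmanProj}. Under \ref{assumption:steep}, $A^*$ is Legendre, since it is the conjugate of the Legendre function $A$. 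With \ref{assumption:constraint}, Proposition~\ref{prop:BregmanProjection} then gives a unique minimizer $\omega_* = \proj_C^{A^*}(\nabla A(L_\pi^\top\theta_\pi))$ lying in $\interior\domain A^*$, which is \ref{assumption:interior}.

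For the compact case of (i), I would argue as follows.
\begin{itemize}
\item Existence: $f^D_\pi$ is proper and lower semicontinuous, because $A^*$ is and the added term is linear. Its restriction to the compact set $C$, which meets $\domain A^*$ by \ref{assumption:constraint}, therefore attains its minimum.
\item Uniqueness: this follows from the strict convexity of $A^*$ and the convexity of $C\cap\domain A^*$.
\item Interiority: this is the step I expect to be the main obstacle. Suppose the minimizer $\omega_*$ lay on the boundary of $\domain A^*$. Pick $\omega_1 \in C\cap\interior\domain A^*$. Because $A^*$ is Legendre, hence essentially smooth, the one-sided directional derivative of $A^*$ at $\omega_*$ in the direction $\omega_1-\omega_*$ equals $-\infty$. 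I would obtain this via the standard fact that $\|\nabla A^*\|$ blows up along segments approaching the boundary, e.g.\ Rockafellar's Theorem 26.3 or the corresponding lemma in the Bauschke--Borwein Legendre framework recalled in Appendix~\ref{app:proofs}. Adding the linear term keeps this derivative at $-\infty$. The segment toward $\omega_1$ stays in $C$ by convexity, so $f^D_\pi$ strictly decreases along it, contradicting minimality. Hence $\omega_* \in \interior\domain A^*$.
\end{itemize}
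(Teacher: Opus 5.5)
Your proof is correct. The affine decomposition $f^D_\pi = A^* - \langle L_\pi^\top\theta_\pi,\cdot\rangle + A_\pi(\theta_\pi)$ and part $(ii)$ are the same as in the paper, but part $(i)$ takes a genuinely different route.

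\textbf{The case $L_\pi^\top\theta_\pi\in\interior\domain A$.} You note that $f^D_\pi = d_{A^*}(\cdot,\nabla A(L_\pi^\top\theta_\pi)) + \mathrm{const}$. So problem \eqref{eq:VIExpFamConstraint} \emph{is} a Bregman projection, and Proposition~\ref{prop:BregmanProjection} gives existence, uniqueness, interiority and the formula for $\omega_*$ at once. The paper instead builds these up separately:
\begin{itemize}
\item existence from lower semicontinuity plus coercivity, via Fact 2.11 of Bauschke--Borwein, which is exactly where $L_\pi^\top\theta_\pi\in\interior\domain A$ is used;
\item interiority from the first-order optimality condition under a qualification condition, together with $\domain\partial A^*=\interior\domain A^*$;
\item uniqueness from strict convexity;
\item the identification of $\omega_*$ with the projection, by matching optimality conditions.
\end{itemize}
Your route is shorter and shows where the formula comes from. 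The paper's route handles the compact and non-compact cases with one template, and calls on Proposition~\ref{prop:BregmanProjection} only for the final identification.

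\textbf{The compact case.} Your interiority step is the primal counterpart of the paper's. The paper deduces $\partial A^*(\omega_*)\neq\emptyset$ via the sum rule, which needs the qualification $C\cap\interior\domain A^*\neq\emptyset$, and then invokes Rockafellar's Theorem 26.1. You instead show that the one-sided derivative toward some $\omega_1\in C\cap\interior\domain A^*$ is $-\infty$. This avoids subdifferential calculus altogether.

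Two small remarks.

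First, the fact you flag as the main obstacle is true. However, Rockafellar's Theorem 26.3 is not the right reference, since it relates essential strict convexity of a function to essential smoothness of its conjugate. A direct proof is short:
\begin{itemize}
\item Set $\omega_\lambda=\omega_*+\lambda(\omega_1-\omega_*)$, which lies in $\interior\domain A^*$ for $\lambda\in(0,1)$.
\item Test the gradient inequality at $\omega_\lambda$ against a closed ball $\bar B(\omega_1,r)\subset\interior\domain A^*$ on which $A^*$ is bounded. This gives $(1-\lambda)\langle\nabla A^*(\omega_\lambda),\omega_1-\omega_*\rangle + r\|\nabla A^*(\omega_\lambda)\|\leq c$ uniformly in $\lambda$.
\item As $\lambda\to 0$, essential smoothness sends $\|\nabla A^*(\omega_\lambda)\|\to\infty$. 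Hence the slopes $\langle\nabla A^*(\omega_\lambda),\omega_1-\omega_*\rangle$ tend to $-\infty$, and so does the right derivative of $\lambda\mapsto A^*(\omega_\lambda)$ at $0$, which is bounded above by them.
\end{itemize}

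Second, it is slightly more robust to prove interiority for an arbitrary minimizer first. You can then deduce uniqueness from strict convexity on $C\cap\interior\domain A^*$, as the paper does. The reason is that Legendre duality guarantees strict convexity of $A^*$ on $\interior\domain A^*$, not on all of $\domain A^*$.
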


Proposition \ref{prop:LERCConsequences} $(i)$ shows that the LERC helps ensuring that Assumption \ref{assumption:interior} is satisfied, which amounts to showing that our VI problem \eqref{eq:VIExpFamConstraint} has a unique well-defined solution $\omega_*$. If $L_\pi^\top \theta_\pi \in \interior \domain A$, this solution can be further characterized. Without constraints ($C = \mathcal{H})$, we have in particular that $\omega_*$ is such that $\theta_* = \nabla A^*(\omega_*) = L_\pi^\top \theta_\pi$. However, this solution may still be hard to find due to the stochasticity of Algorithm~\ref{alg}. Fortunately, Proposition \ref{prop:LERCConsequences} $(ii)$ establishes that the LERC also implies useful properties for the convergence of Algorithm \ref{alg}. Indeed, having $f^D_\pi$ smooth relatively to $A^*$ allows to bound the variance $\sigma_{\eta, N}^2$ using Proposition~\ref{prop:varianceBoundHadrien} and the strong convexity of $f^D_\pi$ relatively to $A^*$ allows to obtain fast convergence rates in Proposition~\ref{prop:cvgceStochMD} and Proposition~\ref{prop:cvgceStochMDDecreasing}.

\section{ILLUSTRATIONS}
\label{sec:Illus}

In this section, we illustrate into more details  two  settings in which  our new convergence results apply, as stated in Propositions \ref{prop:varianceBonnetPrice} and \ref{prop:boundVarianceSubsampling}.

\subsection{Sampling-based gradients over Gaussians}
\label{subsec:BonnetPrice}

In this section, no specific assumption is made on $\pi$ at first but $\mathcal{Q}$ is the family of Gaussian distributions.
In this context, we consider Monte Carlo estimators of the gradients $\nabla_\omega \mathbb{E}_{q_\omega}[\log \pi(X)]$. Several estimators of this quantity have been used in VI, such as score-based estimators \citep{ranganath2014}, or estimators based on the reparametrization trick \citep{kingma2014}. However, we will use another estimator that leverages the fact that $\mathcal{Q}$ is a Gaussian family allowing to obtain better performance \citep{wu2024}.

For $\omega = (\mu, \Sigma + \mu \mu^{\top})$, using theorems from \cite{bonnet1964} and \cite{price1958} and the chain rule (see also \cite{khan2023}), we have
\begin{align*}
    \nabla_{\omega_1} \mathbb{E}_{q_{\omega}}[\log \pi(X)] &= \mathbb{E}_{q_{\omega}}[\nabla \log \pi(X) - \nabla^2 \log \pi(X) \mu]\\
    \nabla_{\omega_2} \mathbb{E}_{q_{\omega}}[\log \pi(X)] &= \frac{1}{2} \mathbb{E}_{q_{\omega}}[\nabla^2 \log \pi(X)].
\end{align*}
Motivated by these formula, the so-called Bonnet and Price estimator $g^N(\omega) = (g^N(\omega)_1, g^N(\omega)_2)$ is built from $N$ samples $X_n \sim q_{\omega}$, $n \in \llbracket 1, N \rrbracket$ by
\begin{eqnarray}
        g^N(\omega)_1 &=& \frac{1}{N} \sum_{n=1}^N \left( \nabla \log \pi(X_n) - \nabla^2 \log \pi(X_n) \mu \right) \nonumber\\
        g^N(\omega)_2 &=& \frac{1}{2N} \sum_{n=1}^N \nabla^2 \log \pi(X_n).
    \label{eq:BonnetPriceEstimator}
\end{eqnarray}
This estimator is unbiased. If $\pi$ is Gaussian, the following proposition states that its associated variance $\sigma_{\eta,N}^2$ is upper-bounded.
\begin{proposition}
    \label{prop:varianceBonnetPrice}
    Suppose that $\pi$ is a Gaussian distribution with covariance $\Sigma_\pi \in \mathbb{S}^d_{>0}$. Then, $g^N$ is such that Assumptions \ref{assumption:wellposedness} and \ref{assumption:estimator} are satisfied and there exists a constant $V > 0$ that depends only on $\Sigma_\pi$, $C$, and $\omega_0$ s.t.~$\sigma_{\eta,N}^2 \leq \frac{V}{N}$ for all $\eta \in (0,1]$.
\end{proposition}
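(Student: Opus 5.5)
The plan rests on one observation: when $\pi=\mathcal{N}(\mu_\pi,\Sigma_\pi)$, the function $\log\pi$ is quadratic, so the Bonnet and Price estimator \eqref{eq:BonnetPriceEstimator} is stochastic only through a Gaussian sample mean. Concretely, $\nabla^2\log\pi\equiv-\Sigma_\pi^{-1}$ and $\nabla\log\pi(x)=-\Sigma_\pi^{-1}(x-\mu_\pi)$. Substituting these gives
\begin{equation*}
g^N(\omega)_2=-\tfrac12\Sigma_\pi^{-1},\qquad g^N(\omega)_1=\Sigma_\pi^{-1}\mu_\pi-\Sigma_\pi^{-1}(\bar X_N-\mu),
\end{equation*}
where $\bar X_N=\frac1N\sum_n X_n\sim\mathcal{N}(\mu,\Sigma/N)$. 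The second component is therefore deterministic, and the first equals its expectation plus the noise $\delta:=-\Sigma_\pi^{-1}(\bar X_N-\mu)$. This noise has mean zero and covariance $\Sigma_\pi^{-1}\Sigma\Sigma_\pi^{-1}/N$.

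\textbf{Assumptions (A4) and (A3).} Unbiasedness (A4) follows from the Bonnet and Price identities, or directly from $\mathbb{E}[\bar X_N]=\mu$. For (A3), recall that the natural parameter is $\theta=(\Sigma^{-1}\mu,-\tfrac12\Sigma^{-1})$. The second component of $\nabla A^*((\omega_t)_+)$ is $-\tfrac12\big((1-\eta_t)\Sigma_t^{-1}+\eta_t\Sigma_\pi^{-1}\big)$. This is negative definite for every $\eta_t\in(0,1]$, and the first component is unconstrained, so the updated natural parameter always lies in $\interior\domain A$. Hence $(\omega_t)_+\in\interior\domain A^*$. Proposition~\ref{prop:BregmanProjection} then gives $\omega_{t+1}\in\interior\domain A^*$, and (A3) follows by induction.

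\textbf{Variance bound.} Since $\pi$ is Gaussian, the LERC holds with $\mathcal{Q}_\pi=\mathcal{Q}$ and $L_\pi=I$. Proposition~\ref{prop:LERCConsequences}~(ii) then gives that $f^D_\pi$ is $1$-smooth relative to $A^*$. Because $\eta\le 1$, Proposition~\ref{prop:varianceBoundHadrien} applies:
\begin{equation*}
\sigma^2_{\eta,N}\le\eta^{-2}\sup_{\omega}\mathbb{E}[d_{A^*}(\overline{\omega}_+,\omega_+)].
\end{equation*}
By Proposition~\ref{prop:expFamilyProperties}~(iv), $d_{A^*}(\overline{\omega}_+,\omega_+)=\text{KL}(q_{\overline{\theta}_+},q_{\theta_+})$. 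The two distributions share the precision $\Sigma_+^{-1}=(1-\eta)\Sigma^{-1}+\eta\Sigma_\pi^{-1}$, and their natural first components differ by $\eta\delta$. Their means therefore differ by $\eta\Sigma_+\delta$, and the Gaussian KL formula gives $\tfrac{\eta^2}{2}\delta^\top\Sigma_+\delta$. Taking expectations,
\begin{equation*}
\sigma^2_{\eta,N}\le\frac{1}{2N}\sup_\omega\tr\!\big(\Sigma_+\Sigma_\pi^{-1}\Sigma\Sigma_\pi^{-1}\big).
\end{equation*}
Matrix convexity of the inverse yields $\Sigma_+\preceq(1-\eta)\Sigma+\eta\Sigma_\pi\preceq\Sigma+\Sigma_\pi$. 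The bound thus reduces to $\frac{1}{2N}\big(\tr((\Sigma\Sigma_\pi^{-1})^2)+\tr(\Sigma\Sigma_\pi^{-1})\big)$, which is uniform in $\eta\in(0,1]$.

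\textbf{Main obstacle.} The remaining difficulty is to bound $\Sigma$ uniformly over the relevant set. This is where the dependence on $C$ and $\omega_0$ enters. Since $g^N_2$ is deterministic, the pre-projection precision of every iterate is a convex combination of the current precision and $\Sigma_\pi^{-1}$. Without projection this gives $\Sigma_t^{-1}\succeq\min\{\lambda_{\min}(\Sigma_0^{-1}),\lambda_{\min}(\Sigma_\pi^{-1})\}I$ deterministically, so $\Sigma_t$ stays bounded. I would first try to show that the projection on $C$ preserves a covariance bound of this type, either directly or by intersecting with the eigenvalue-type constraints that $C$ encodes. I would then take the supremum in Definition~\ref{def:variance} over $C\cap\interior\domain A^*$ intersected with this deterministic bounded set that contains all iterates. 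That produces a constant $V$ depending only on $\Sigma_\pi$, $C$ and $\omega_0$. If the projection does not obviously preserve the bound, the fallback is to use the covariance bounds implied by the constraint set $C$ directly.
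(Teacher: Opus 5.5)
Your argument follows the paper's proof step for step:

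\begin{itemize}
\item the LERC with $L_\pi=I$ gives $1$-relative smoothness;
\item Proposition~\ref{prop:varianceBoundHadrien} then reduces $\sigma^2_{\eta,N}$ to the shared-precision Gaussian KL $\frac{\eta^2}{2N}\tr(\Sigma_+\Sigma_\pi^{-1}\Sigma\Sigma_\pi^{-1})$;
\item the supremum is taken only over covariances reachable from $\omega_0$, whose precisions are convex combinations of $\Sigma_0^{-1}$ and $\Sigma_\pi^{-1}$.
\end{itemize}

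The one real difference is in bounding the trace. The paper diagonalizes $\Sigma^{1/2}\Sigma_\pi^{-1}\Sigma^{1/2}$, with eigenvalues $\Delta_i$, and gets the exact value $\frac{1}{2N}\sum_i\Delta_i^2/(1-\eta+\eta\Delta_i)$. Your operator-convexity step $\Sigma_+\preceq(1-\eta)\Sigma+\eta\Sigma_\pi$ gives $\frac{1}{2N}\sum_i(\Delta_i^2+\Delta_i)$ directly, uniformly in $\eta$, with an explicit constant.

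Restricting to reachable iterates is forced, not just convenient. For $C=\mathcal{H}$ and $\eta=1$, the quantity inside the supremum of Definition~\ref{def:variance} equals exactly $\frac{1}{2N}\tr(\Sigma\Sigma_\pi^{-1})$, which is unbounded on $\interior\domain A^*$. Under the reachable-set reading, your proof is complete for $C=\mathcal{H}$.

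For general $C$, the step you leave open is a genuine gap. It cannot be closed in the stated generality, because the claim fails when the constraint is active. The paper does not close it either. Its remark that $C\cap\interior\domain A^*$ is smaller than $\interior\domain A^*$ overlooks two things: the $C=\mathcal{H}$ bound was only established on the reachable set, and $\omega_*$ itself depends on $C$.

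Suppose $\omega_\pi\notin C$, so that $\omega_*\neq\omega_\pi$. You may then evaluate the supremum at $\omega=\omega_*\in C\cap\interior\domain A^*$. Let $\omega_+$ and $\overline{\omega}_+$ be the stochastic and exact mirror steps from $\omega_*$. Since $\nabla A^*(\omega_+)$ is affine in $g^N$ with mean $\nabla A^*(\overline{\omega}_+)$, we have $\mathbb{E}[d_{A^*}(\omega_*,\omega_+)]=d_{A^*}(\omega_*,\overline{\omega}_+)+\mathbb{E}[d_{A^*}(\overline{\omega}_+,\omega_+)]$. Hence
\[
\sigma^2_{\eta,N}\;\geq\;\frac{1}{\eta^2}\,\mathbb{E}\big[d_{A^*}(\omega_*,\omega_+)\big]\;\geq\;\frac{1}{\eta^2}\,\text{KL}\big(q_{\omega_*},q_{\overline{\omega}_+}\big)>0,\qquad \nabla A^*(\overline{\omega}_+)=(1-\eta)\theta_*+\eta\theta_\pi\neq\theta_* .
\]
This floor does not depend on $N$, so no bound of the form $V/N$ is possible.

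In particular, Proposition~\ref{prop:varianceBoundHadrien} breaks down in this regime. Its argument needs $f^D_\pi(\omega_*)\leq f^D_\pi(\overline{\omega}_+)$, which is only guaranteed when $\omega_*$ is the unconstrained minimizer.

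What your argument does establish is the statement under two extra hypotheses:
\begin{itemize}
\item $\omega_\pi\in C$, so that $\omega_*=\omega_\pi$;
\item a uniform upper bound on the covariances of the projected iterates.
\end{itemize}
The second holds, for instance, if $C$ is bounded (since $\Sigma\preceq\omega_2$). It also holds if $C$ constrains only $\omega_1$, because the Bregman projection then leaves $\theta_2$, and hence $\Sigma$, unchanged.
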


\subsection{Data subsampling}
\label{subsec:BayesianLinearRegression}

In this section, no  assumption on ${\cal Q}$ is made at first, but  $\pi$ has the form $\pi(x | y) \propto p_0(x) \; \prod_{m=1}^M p(y_m | x) $ with  $y= \{y_m\}_{m=1}^M$ and $M$ very large. The prior distribution $p_0$ is in an exponential family ${\cal Q}_0$ and satisfies the LERC with respect to ${\cal Q}$. Let $L_0$, $\Gamma_0$, $A_0$ and $\theta_0$ be respectively the linear transformation, sufficient statistic, log-partition, and natural parameter corresponding to $p_0$, so that
$\mathbb{E}_{q_\omega}\left[\log p_0(X)\right] = \langle \theta_0, \mathbb{E}_{q_\omega}\left[\Gamma_0(X)\right]\rangle  -A_0(\theta_0) = \langle L_0^\top \theta_0, \omega\rangle -A_0(\theta_0)$. 
Equation \eqref{def:lulu} writes
\begin{align*}
    f^D_\pi(\omega) = & A^*(\omega) - \langle L_0^\top \theta_0, \omega\rangle  \\
     & + \sum_{m=1}^M  \mathbb{E}_{q_\omega}\left[\log p(y_m | X)\right] + \textrm{constant}. 
\end{align*}
and 
$  \nabla f^D_\pi(\omega) = \theta - \left( L_0^{\top}\theta_0 + \sum_{m=1}^M \theta_{y_m}(\omega) \right)$,
where 
$$\theta_{y_m}(\omega)  = \nabla \mathbb{E}_{q_\omega}\left[\log p(y_m | X)\right]$$
is assumed to be tractable.
However, for the gradient evaluation, the large sum of $M$ terms is problematic.
To circumvent this issue, subsampling only a subset of the data points at every iteration can be considered, thus creating a cheap stochastic estimator of the optimal solution. 
Consider a uniform random variable $U \sim \mathcal{U}_M$. The gradient can be rewritten as 
    $\nabla f^D_\pi(\omega) = \theta - \left( L_0^{\top}\theta_0 + M \mathbb{E}_{\mathcal{U}_M}\left[\theta_{y_U}(\omega) \right] \right)$,
and the expectation  approximated by sampling a smaller batch of $N$ data points uniformly, yielding the estimator $g^N(\omega)$ of $\nabla \mathbb{E}_{q_{\omega}}[\log \pi(X)]$ defined by
\begin{equation}
    \label{eq:subsamplingEstimator}
    g^N(\omega) = L_0^\top \theta_0 + \frac{M}{N} \sum_{n=1}^N \theta_{y_{U_n}}(\omega),
\end{equation}
where $U_n \sim \mathcal{U}_M$ for $n \in \llbracket 1, N \rrbracket$. This estimator can then be plugged in 
Algorithm \ref{alg} to get an approximation of the optimal $\theta^*$.

\paragraph{Bayesian linear regression} As an illustration of the above situation, we investigate a specific Bayesian regression task with the subsampling estimator  \eqref{eq:subsamplingEstimator}. We consider the same setting as \cite{wu2024}, that is Bayesian linear regression with a centered Gaussian prior, Gaussian likelihood, and Gaussian approximating family ${\cal Q}$. More specifically,

$X \sim p_0 = \mathcal{N}(0,\Sigma_0)$ with $\theta_0 = -\frac{1}{2}\Sigma_0^{-1}$ and $Y \in \mathbb{R}$ with
\begin{equation}
    \label{eq:linearRegressionLikelihood}
    Y | X=x \sim \mathcal{N}(x^\top z, \sigma^2),
\end{equation}
with $z$ a vector of fixed covariates values associated to $Y$.
Given $M$ data points $\{y_m\}_{m=1}^M$ and associated covariates $\{z_m\}_{m=1}^M$, the posterior $\pi (x | y_1 \ldots y_M)$  is a Gaussian distribution with natural parameter
\begin{equation}
    \label{eq:BayesianLinearRegressionTarget}
    \theta_\pi = \left( \frac{1}{\sigma^2} \sum_{m=1}^M y_m z_m, \theta_0 - \frac{1}{2 \sigma^2} \sum_{m=1}^M z_m z_m^\top \right).
\end{equation}
Assume  ${\cal Q}\! =\!\{q_\omega, \; \omega\in\domain A^*\}$ is the set of Gaussian distributions  with expected parameters $\omega \!\!= \!\!(\omega_1, \omega_2)\!\! =\! \!(\mu, \Sigma + \mu \mu^\top\!)$. It follows that $p_0$ satisfies the LERC since it is Gaussian and thus in $\mathcal{Q}$. For each $y_m$, $\theta_{y_m}\!(\omega)~=~(\nabla_{\omega_1}\! \mathbb{E}_{q_\omega}\!\!\left[\log p(y_m | X)\right]\!,\nabla_{\omega_2} \!\mathbb{E}_{q_\omega}\!\!\left[\log p(y_m | X)\right])$ is available in closed form:
\begin{align*}
\theta_{y_m}(\omega)
&= \frac{1}{2 \sigma^2}\left(2y_m z_m, - z_m z_m^\top\right).
\end{align*}
It follows that $g^N$ in \eqref{eq:subsamplingEstimator} writes
\begin{equation*}
    g^N(\omega) = \frac{M}{N2 \sigma^2} \sum_{n=1}^N  \left( 2 y_{U_n} z_{U_n},  -  \frac{1}{M}\sigma^2 \Sigma_0^{-1} -  z_{U_n}z_{U_n}^\top \right). 
\end{equation*}
We are in the required setting to define $g^N$ as in \eqref{eq:subsamplingEstimator} and  this estimator takes values in $\interior \domain A$.

\begin{proposition}
    \label{prop:boundVarianceSubsampling}
    When $\pi = q_{\theta_\pi}$ as in Eq.~\eqref{eq:BayesianLinearRegressionTarget} and when $g^N$ is defined as in \eqref{eq:subsamplingEstimator}, Assumptions \ref{assumption:wellposedness} and \ref{assumption:estimator} are verified and there exists a constant $V > 0$ which depends only on $\Sigma_0$ and on the data such that $\sigma_{\eta,N}^2 \leq \frac{V}{N}$ for any $\eta \in (0,1]$.
\end{proposition}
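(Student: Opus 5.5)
The proof has three parts: (A4) first, then (A3), then the variance bound through Proposition~\ref{prop:varianceBoundHadrien}.

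\textbf{Unbiasedness and well-posedness.} For (A4), linearity of expectation and $\mathbb{E}_{U\sim\mathcal{U}_M}[\theta_{y_U}(\omega)] = \frac{1}{M}\sum_m \theta_{y_m}(\omega)$ give $\mathbb{E}[g^N(\omega)] = L_0^\top\theta_0 + \sum_m \theta_{y_m}(\omega) = \nabla \mathbb{E}_{q_\omega}[\log \pi(X)]$. For (A3), I would use the sufficient condition recalled after the assumptions (Appendix~\ref{app:proof3}): it is enough that $g^N(\omega)\in\interior\domain A$ almost surely. Here $\interior \domain A = \mathbb{R}^d\times\mathbb{S}^d_{<0}$, i.e.\ the second natural component must be negative definite. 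The second component of $g^N$ is $-\frac{1}{2}\Sigma_0^{-1} - \frac{M}{2N\sigma^2}\sum_n z_{U_n}z_{U_n}^\top$, which is negative definite. Since $\interior\domain A$ is convex and $\eta_t\in(0,1]$, the update $(1-\eta_t)\theta_t+\eta_t g^{N_t}$ stays in $\interior\domain A$. The Bregman projection then keeps the iterate in $\interior\domain A^*$ by Proposition~\ref{prop:BregmanProjection}.

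\textbf{Reduction to a KL between Gaussians.} Since $\pi\in\mathcal{Q}$, the LERC holds with $L_\pi = \mathrm{Id}$. By Proposition~\ref{prop:LERCConsequences}(ii), $f^D_\pi$ is $1$-smooth relative to $A^*$, so Proposition~\ref{prop:varianceBoundHadrien} applies for every $\eta\le 1$:
$$\sigma^2_{\eta,N}\le \eta^{-2}\sup_{\omega}\mathbb{E}[d_{A^*}(\overline{\omega}_+,\omega_+)].$$
Here $\overline{\theta}_+ = (1-\eta)\theta+\eta\theta_\pi$ and $\theta_+ = \overline{\theta}_+ + \eta\xi$, with $\xi = g^N(\omega)-\theta_\pi$. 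The key structural fact is that $\xi$ does not depend on $\omega$: each $\theta_{y_m}$ is constant. So $\xi$ is an average of $N$ i.i.d.\ centered bounded vectors, $\xi = \frac{1}{N}\sum_n \zeta_{U_n}$ with $\zeta_u = M\theta_{y_u} - \sum_m\theta_{y_m}$, and it takes finitely many values. By Proposition~\ref{prop:expFamilyProperties}(iv), $d_{A^*}(\overline{\omega}_+,\omega_+) = d_A(\theta_+,\overline{\theta}_+) = \mathrm{KL}(q_{\overline{\theta}_+}, q_{\theta_+})$, a KL divergence between two Gaussians whose natural parameters differ by $\eta\xi$.

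\textbf{Main obstacle: a uniform second-order bound.} I need $d_A(\overline{\theta}+\eta\xi,\overline{\theta})\le c\,\eta^2\|\xi\|^2$ with $c$ uniform over all reachable $\overline{\theta}$ and realizations of $\xi$. Taking expectations would then give $\mathbb{E}\|\xi\|^2\le \frac{1}{N}\max_u\|\zeta_u\|^2$ and hence $\sigma^2_{\eta,N}\le cK/N$.

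First I would localize $\overline{\theta}$. The supremum is over $\omega\in C\cap\interior\domain A^*$, and $\overline{\theta}_+$ is a convex combination of $\theta$ and $\theta_\pi$. By Taylor's theorem, $d_A(\overline{\theta}+\eta\xi,\overline{\theta}) = \frac{\eta^2}{2}\xi^\top\nabla^2A(\tilde\theta)\xi$ for some $\tilde\theta$ on the segment between $\overline{\theta}$ and $\overline{\theta}+\eta\xi$. The Hessian is controlled by the covariance $\Sigma$ of $q_{\tilde\theta}$: it involves terms like $\Sigma\otimes\Sigma$ and $\mu$-dependent terms. So I must bound $\tilde\Sigma = (-2\tilde\theta_2)^{-1}$ from above and $\tilde\mu$ uniformly.

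The upper bound on $\tilde\Sigma$ is the easy part. The precision along the segment is at least $(1-\eta)(-2\theta_2) + \eta\Sigma_0^{-1}$ plus PSD data terms. In the worst case $\eta$ is near $0$, which requires $\theta$ itself to have bounded covariance. I would obtain that from the constraint set $C$ (e.g.\ bounded eigenvalues), or, failing that, from an inductive invariant on the iterates. Since the statement claims $V$ depends only on $\Sigma_0$ and the data, I expect the intended route to be the following. The second component of every $g^N$, and of $\theta_\pi$, is bounded above by $-\frac{1}{2}\Sigma_0^{-1}$. Hence the precision of $\overline{\theta}_+$ and of $\theta_+$ is at least $(1-\eta)P+\eta\Sigma_0^{-1}$, with $P$ the current precision.

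I would therefore bound the KL more cleverly using the explicit Gaussian formula. Since $\xi$ contributes the additive precision term $\eta\Delta$, one can write
$$\mathrm{KL}\le \tfrac{1}{2}\big[\tr(\bar P^{-1}\eta\Delta)-\logdet(I+\bar P^{-1}\eta\Delta)\big]+\text{mean terms}.$$
Using $\bar P\succeq \eta\Sigma_0^{-1}$ gives $\|\bar P^{-1}\eta\Delta\|\le \|\Sigma_0\|\|\Delta\|$, bounded independently of $\eta$, and the same argument handles the mean terms. This yields a quadratic bound in $\bar P^{-1}\eta\Delta$, i.e.\ $O(\eta^2\|\xi\|^2/\eta^2)$, which is too weak by a factor $\eta^{-2}$. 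This tension is the crux. I would resolve it by combining the bound $\bar P\succeq \eta\Sigma_0^{-1}$ with the bound $\bar P\succeq(1-\eta)P$ (for $\eta\le 1/2$, with $P$ controlled via the constraints/$\omega_0$-reachable set), while keeping track of how $V$ may depend on these quantities.
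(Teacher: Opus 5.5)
\textbf{There is a genuine gap: the proposal stops before the variance bound is proved.}

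Your unbiasedness and well-posedness arguments are correct. So is the reduction through Proposition~\ref{prop:varianceBoundHadrien} to $\eta^{-2}\sup_\omega \mathbb{E}[d_A(\theta_+,\bar{\theta}_+)]$, with $\theta_+-\bar{\theta}_+=\eta\xi$ and $\xi$ independent of $\omega$. But the proposal stops at the step that carries the content. No uniform bound on $\nabla^2 A$ along $[\bar{\theta}_+,\theta_+]$ is produced, and the mean $\tilde{\mu}$ is never controlled. The direction you lean towards, a sharper Gaussian KL estimate based on $\bar{P}\succeq\eta\Sigma_0^{-1}$, cannot succeed. Over all of $C\cap\interior\domain A^*$ with $C=\mathcal{H}$, the quantity $\eta^{-2}\mathbb{E}[d_A(\theta_+,\bar{\theta}_+)]$ really does blow up as $\eta\to 0$. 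Take $d=1$ and $z_m\equiv 1$, so the second component of $g^N$ is deterministic and equal to $\theta_{\pi,2}$. Then $\eta^{-2}\mathbb{E}[d_A(\theta_+,\bar{\theta}_+)]=\frac{1}{2}\bar{v}_+\mathbb{E}[\xi_1^2]$ with $\bar{v}_+^{-1}=(1-\eta)v^{-1}+\eta v_\pi^{-1}$. Its supremum over the current variance $v>0$ is $v_\pi\mathbb{E}[\xi_1^2]/(2\eta)$. So no refinement of the KL bound helps; you must restrict the set of $\omega$ over which the supremum is taken.

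\textbf{The missing idea} is to apply your own observation to the current iterate, not only to the increment. Let $\mathcal{K}=\{\theta:\ \theta_2\preceq-\frac{1}{2}\Sigma_0^{-1},\ \|\theta_1\|\le Mb/\sigma^2\}$ with $b=\max_m\|y_m z_m\|$. This convex set contains $\theta_\pi$ and every realisation of $g^N$. Hence it is invariant under $\theta\mapsto(1-\eta)\theta+\eta g^N(\omega)$ for all $\eta\in(0,1]$; iterates started in it, e.g.\ at the prior, stay in it when $C=\mathcal{H}$. For $\theta\in\mathcal{K}$ the whole segment $[\bar{\theta}_+,\theta_+]$ also lies in $\mathcal{K}$. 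On $\mathcal{K}$ one has $\Sigma\preceq\Sigma_0$ and $\|\mu\|\le\|\Sigma_0\|Mb/\sigma^2$, uniformly in $\eta$. Your ``tension'' disappears, since $P\succeq\Sigma_0^{-1}$ gives $(1-\eta)P+\eta\Sigma_0^{-1}\succeq\Sigma_0^{-1}$. Hence $\nabla^2 A=\mathrm{Cov}(X,XX^\top)$ is bounded on $\mathcal{K}$ by a constant depending only on $\Sigma_0$ and the data. Your Taylor step then yields $\mathbb{E}[d_A(\theta_+,\bar{\theta}_+)]\le\frac{\eta^2}{2}\sup_{\mathcal{K}}\|\nabla^2A\|\,\frac{1}{N}\max_u\|\zeta_u\|^2$.

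\textbf{Comparison with the paper.} The paper bounds $d_{A^*}(\bar{\omega}_+,\omega_+)$ by the symmetrised divergence and recognises the variance of \cite{hanzely2021}. It then imports the bound $V_2/N$ of \cite[Lemma 4]{wu2024}. The ingredients of that bound, $s=\max\{1,\|\Sigma_0\|\}$ and $b$, appear to play exactly the role of the covariance and mean controls on $\mathcal{K}$. Completed as above, your route would be a self-contained replacement for that citation. In both versions, getting a constant that depends only on $\Sigma_0$ and the data requires restricting to iterates in such an invariant set, hence $\omega_0$ in it. The paper leaves this point implicit.
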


\subsection{Convergence of Algorithm~\ref{alg}}

Both previous Propositions allow to conclude about the convergence of Algorithm~\ref{alg}.
\begin{corollary}
    \label{corollary:BayesianLinearRegression}
    Suppose that Algorithm~\ref{alg} is run in either Propositions \ref{prop:varianceBonnetPrice} or \ref{prop:boundVarianceSubsampling} settings, then
    \begin{itemize}
        \item [$(i)$] If $\eta_t \equiv \eta \in (0,1]$ and $N_t \equiv N \in \mathbb{N}_{>0}$, then Algorithm~\ref{alg} yields iterates $\{ \omega_t \}_{t\in \mathbb{N}}$ s.t.~for any $T \in \mathbb{N}$,
        \begin{equation*}
            \mathbb{E}\left[ d_{A^*}(\omega^*, \omega_T) \right] \leq (1 - \eta)^T d_{A^*}(\omega^*, \omega_0) + \frac{\eta V}{N}.
        \end{equation*}
        \item [$(ii)$] If $\eta_t \equiv \eta \in (0,1]$ and $N_t = (t+1)^\gamma$,$\gamma>0$, then Algorithm~\ref{alg} yields iterates $\{ \omega_t \}_{t\in \mathbb{N}}$ s.t.~for any $T \in \mathbb{N}$,
        \begin{equation*}
            \mathbb{E} \left[ d_{A^*}(\omega_*, \omega_T) \right] = \mathcal{O}\left( \frac{1}{T^\gamma} \right)
        \end{equation*}
        \item [$(iii)$]
        If $\eta_t = \frac{1}{t/2+1}$, then Algorithm~\ref{alg} yields iterates $\{ \omega_t \}_{t \in \mathbb{N}}$ s.t.~for any $T \in \mathbb{N}$,
        \begin{equation*}
            \mathbb{E}\left[ d_{A^*}(\omega^*, \omega_{T+1}) \right] = \mathcal{O}\left( \frac{1}{T^2} \sum_{t=0}^T \frac{1}{N_t}\right).
        \end{equation*}
    \end{itemize}
\end{corollary}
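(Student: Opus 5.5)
The corollary follows by plugging Propositions~\ref{prop:LERCConsequences}, \ref{prop:varianceBonnetPrice} and \ref{prop:boundVarianceSubsampling} into Propositions~\ref{prop:cvgceStochMD} and \ref{prop:cvgceStochMDDecreasing} with $m=1$. The first task is to check the standing assumptions \ref{assumption:steep}--\ref{assumption:interior} and the relative strong convexity in both settings.

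\textbf{Step 1: LERC, steepness and the constraint.} In both settings the target $\pi$ is Gaussian: it is Gaussian by hypothesis in Proposition~\ref{prop:varianceBonnetPrice}, and in Proposition~\ref{prop:boundVarianceSubsampling} it equals $q_{\theta_\pi}$ from \eqref{eq:BayesianLinearRegressionTarget}. The family $\mathcal{Q}$ is the full Gaussian family, which is minimal and steep (its natural domain is open), so \ref{assumption:steep} holds. Moreover $\pi$ satisfies the LERC with $\mathcal{Q}_\pi=\mathcal{Q}$ and $L_\pi$ the identity. We take \ref{assumption:constraint} as part of the algorithm's setup. Proposition~\ref{prop:LERCConsequences}$(ii)$ then gives that $f^D_\pi$ is $1$-strongly convex relative to $A^*$, so $m=1$.

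\textbf{Step 2: existence of the minimizer.} For \ref{assumption:interior} we use Proposition~\ref{prop:LERCConsequences}$(i)$ with $L_\pi^\top\theta_\pi=\theta_\pi$. Since $\pi$ is a nondegenerate Gaussian, $\theta_\pi\in\interior\domain A$: the precision matrix $-2(\theta_\pi)_2$ is positive definite, and in the regression case this uses $\Sigma_0^{-1}\succ0$. Hence $\omega_*=\proj_C^{A^*}(\nabla A(\theta_\pi))$ exists, is unique, and lies in the interior. The remaining assumptions \ref{assumption:wellposedness} and \ref{assumption:estimator} are given directly by Propositions~\ref{prop:varianceBonnetPrice} and \ref{prop:boundVarianceSubsampling}, together with the bound $\sigma^2_{\eta,N}\le V/N$ for all $\eta\in(0,1]=(0,m^{-1}]$.

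\textbf{Step 3: the three schedules.}
\begin{itemize}
\item[$(i)$] Apply the first part of Proposition~\ref{prop:cvgceStochMD} with $m=1$ and bound $\eta\sigma^2_{\eta,N}\le \eta V/N$.
\item[$(ii)$] Apply its second part. This gives $(1-\eta)^T d_{A^*}(\omega_*,\omega_0)+\frac{V}{\eta}\bigl((1-\eta)^{(T+1)/2}+2^\gamma T^{-\gamma}\bigr)$. The geometric terms decay faster than any polynomial when $\eta<1$, and vanish for $T\ge1$ when $\eta=1$. The whole bound is therefore $\mathcal{O}(T^{-\gamma})$, with constants depending on $\eta,\gamma,V$ and the initialization.
\item[$(iii)$] Apply Proposition~\ref{prop:cvgceStochMDDecreasing} with $m=1$, so $\eta_t=1/(t/2+1)\in(0,1]$. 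The prefactor $4V/((T+2)(T+1))$ is $\mathcal{O}(T^{-2})$.
\end{itemize}

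\textbf{Main obstacle.} The only point needing care is Step 2. We must confirm that $\theta_\pi$ is an interior natural parameter, so that the second branch of Proposition~\ref{prop:LERCConsequences}$(i)$ applies without assuming $C$ compact. We must also confirm that the constant $V$ of Propositions~\ref{prop:varianceBonnetPrice} and \ref{prop:boundVarianceSubsampling} is uniform in $\eta\in(0,1]$ as Propositions~\ref{prop:cvgceStochMD} and \ref{prop:cvgceStochMDDecreasing} require, which those statements already assert. Everything else is direct substitution.
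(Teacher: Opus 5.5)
Your proposal is correct and follows the paper's approach: the paper gives no separate proof and obtains the corollary by plugging the $m=1$ relative strong convexity from Proposition~\ref{prop:LERCConsequences}$(ii)$ and the bounds $\sigma^2_{\eta,N}\le V/N$ of Propositions~\ref{prop:varianceBonnetPrice} and \ref{prop:boundVarianceSubsampling} into Propositions~\ref{prop:cvgceStochMD} and \ref{prop:cvgceStochMDDecreasing}. Your extra checks are sound details the paper leaves implicit by treating \ref{assumption:interior} as a standing assumption: verifying \ref{assumption:interior} through the second branch of Proposition~\ref{prop:LERCConsequences}$(i)$ using $\theta_\pi\in\interior\domain A$, and treating the $\eta=1$ case separately for the geometric terms in $(ii)$.
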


\section{RELATED WORK}
\label{sec:related}

The convergence of stochastic NGVI has seldom been studied. Existing work exploits the connection between natural gradient descent and mirror descent when optimising over an exponential family (see Section~\ref{sec:background}). \cite{khan2016} gave convergence rates to stationarity assuming strong convexity of $A$ and a bounded variance, which may fail on some parts of the search space \citep{guilmeau2025, domke2019}. \cite{wu2024} exploited a variance-like quantity proposed by \cite{hanzely2021}, which is provably larger (possibly infinite) than $\sigma_{\eta,N}^2$ \citep{hendrikx2024}. When $\pi \in \mathcal{Q}$, they showed $\mathcal{O}\left( \frac{1}{T}\right)$ convergence to the minimizer using decreasing step sizes, fixed sample/batch sizes, and Polyak-Ruppert averaging \citep{polyak1992}. They demonstrated that their considered variance is finite in a conjugate Bayesian linear regression setting. \cite{sun2025} also used the variance introduced by \cite{hanzely2021} to derive convergence rates for convex and non-convex VI problems, covering for instance logistic regression, but which are restricted to mean-field Gaussian approximating families. Also, \cite{sun2025} used a projection step to a compact constraint set that is necessary for convergence.
In Proposition~\ref{prop:cvgceStochMDDecreasing}, we show a similar rate to \cite{wu2024} without averaging. 
In addition, we provide in Propositions~\ref{prop:cvgceStochMD}-\ref{prop:cvgceStochMDDecreasing} additional possibilities to combine fixed or decreasing step sizes and fixed or increasing sample/batch sizes. These allow to further accelerate convergence and may be closer to what practitioners use. Our result apply for any exponential family $\mathcal{Q}$ and constraint set $C$ (possibly, $C = \mathcal{H}$) and not only for Gaussian mean-field approximations and compact $C$, although we do require convexity properties. In Section \ref{sec:Illus}, we apply these results to various settings, covering in particular the conjugate Bayesian linear regression task studied by \cite{wu2024}.

Our Corollary \ref{corollary:BayesianLinearRegression} $(i)$ showcases geometric rates with a bias decreasing inversely with the step size. Similar results were obtained by \cite{lambert2022} and \cite{diao2023} for VI algorithms leveraging the Bures-Wasserstein geometry over Gaussian distributions and by \cite{domke2023} for location-scale VI. However, they considered a sample size $N=1$, while our result highlight the inverse dependence on $N > 0$ and the case of increasing $N_t$ is covered in Corollary \ref{corollary:BayesianLinearRegression} $(ii)$.

Although schedules such as in Propositions \ref{prop:cvgceStochMD} and \ref{prop:cvgceStochMDDecreasing} were seldom considered in the VI literature, joint conditions on the step sizes and the sample/batch sizes have been considered within the optimization community, for instance by \cite{ghadimi2016} for projected gradient descent or by \cite{xiao2021} for Bregman proximal gradient descent. Our results compliment theirs by using a tighter variance definition and exploiting the relative strong convexity of the objective to yield precise rates. Empirical performance is moreover investigated in Section \ref{sec:experiments}.

Our analysis allows for projection steps (explicit projection operators can be found in Appendix~\ref{app:projections} and additional experiments in Appendix \ref{app:additionalExperiments}). Projections have been deemed necessary in the context of stochastic (Euclidean) gradient descent algorithm for VI over location-scale families by \cite{domke2023} and \cite{kim2023} but also in the context of natural gradient descent by \cite{sun2025}. Note that in our work, projection is possible, but not mandatory. For natural gradient descent, our projection step can also be seen as an alternative to the Riemannian construction proposed by \cite{lin2020} and as a way to guarantee the assumptions made by \cite{kumar2025}.

\cite{margossian2025} identified conditions on $\pi$ under which location-scale VI (generally a non-convex problem) admits no local minimizer. These conditions do not force $\pi$ to belong to the family of approximating distributions, but still require some similarity between $\pi$ and its approximations. Similarly, our LERC from Definition~\ref{def:linearEnlargement} requires $\pi$ to be close to $\mathcal{Q}$ and ensures that the VI problem admits a unique solution.

\section{NUMERICAL EXPERIMENTS}
\label{sec:experiments}

Numerical illustrations are provided in the settings of Section \ref{sec:Illus}: $\mathcal{Q}$ is the family of Gaussian distributions and gradients are estimated either with Bonnet and Price estimators or using subsampling. Projections are not used in this section ($C = \mathcal{H})$. We consider different schedules, namely constant $\eta_t \equiv \eta \in (0,1]$ as in Proposition \ref{prop:cvgceStochMD} and decreasing $\eta_t = \frac{1}{t/2+1}$ as in Proposition \ref{prop:cvgceStochMDDecreasing}, as well as constant sample/batch size $N_t \equiv N \in \mathbb{N}_{>0}$ and increasing sample/batch size $N_t = t+1$. Comparing such schedules in terms of iterations can be misleading, as  iterations may not involve the same number of data points. Therefore, we also provide plots with respect to the total computational budget used until iteration $t$, that is, $\sum_{\tau=0}^t N_\tau$. 

All experiments were run on a personal laptop with 7.6 GB RAM and 8 Intel Core i5-8265U cores. Code can be found at \href{https://github.com/tGuilmeau/Projected_Stochastic_NGVI/}{this Github repository}.

\paragraph{Bonnet and Price gradients}

We test Algorithm~\ref{alg} with the different proposed schedules in the setting of Proposition \ref{prop:varianceBonnetPrice}, with $\pi$ a synthetic Gaussian target in dimension $d=10$ with condition number $\kappa = 10^2$ generated as in \cite{more1989}, and $g^N$ is computed  using the estimator \eqref{eq:BonnetPriceEstimator}.
Figure~\ref{fig:BonnetPriceIterations} confirms the result of Propositions \ref{prop:cvgceStochMD} and \ref{prop:cvgceStochMDDecreasing}: for constant step sizes and sample sizes, the iterates converge geometrically to a neighborhood of $\omega_*$, but if  either step sizes decrease or sample sizes increase, the iterates converge to $\omega_*$ at a sub-geometric rate. In particular, the convergence is the fastest with decreasing step sizes and increasing sample sizes.
Figure~\ref{fig:BonnetPriceComputationBudget} shows that in terms of computational budget, it may be beneficial to keep a fixed sample size $N_t \equiv N$, as the schedule with decreasing $\eta_t$ and fixed $N_t$ achieves the best performance within the given computational budget.

\paragraph{Data subsampling}

We now test Algorithm~\ref{alg} in the setting of Proposition \ref{prop:boundVarianceSubsampling}, where  $\pi$ is a Bayesian linear regression posterior, and $g^N$ is computed by subsampling data points as in \eqref{eq:subsamplingEstimator}. We use the \cite{dataset} dataset \citep{kaya2019} with Gaussian prior $\mathcal{N}(0, 5 I)$ and Gaussian likelihood as in \eqref{eq:linearRegressionLikelihood} with $\sigma^2 = 1$. Here, $M = 36,733$ and $d=9$.

Figure~\ref{fig:subSamplingIteration} and \ref{fig:subSamplingBudget} showcase results for subsampling-based estimators that are similar to the results obtained with estimators based on Bonnet's and Price's theorems. Namely, quick convergence to a neighborhood of the solution when fixed step and batch sizes are used, and convergence to the solution when either step sizes decrease or batch sizes increase. Again, note that decreasing step sizes and increasing batch sizes may be faster in terms of iterations, but this is not so clear anymore in terms of computational budget. Note also that schedules where $N_t \rightarrow +\infty$ may be unrealistic in practice, as batch sizes would reach the total number of data points. 

\begin{figure}[H]
    \centering
    \begin{subfigure}[t]{.49\linewidth}
        \centering
        \includegraphics[width=\linewidth]{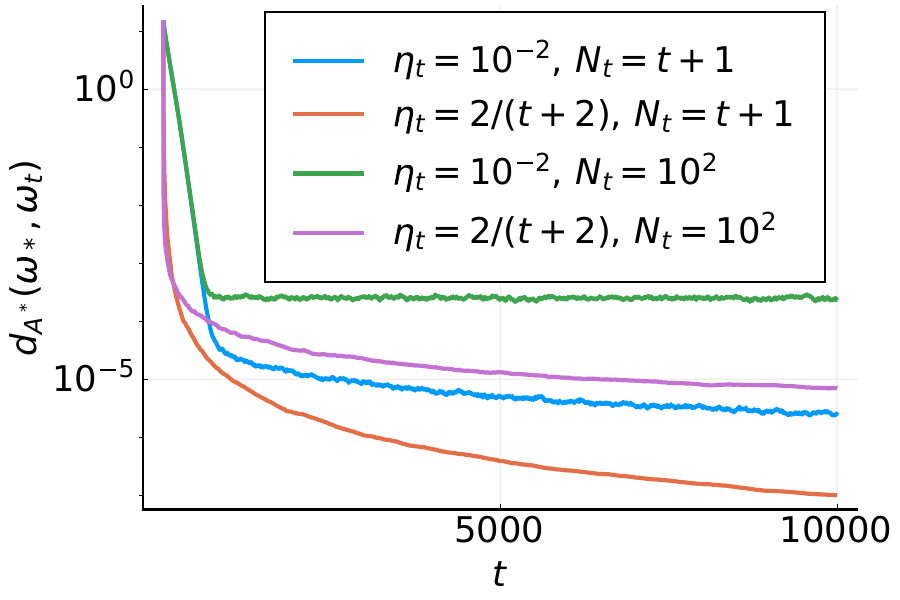}
        \caption{wrt iterations.}
        \label{fig:BonnetPriceIterations}
    \end{subfigure}\hfill
    \begin{subfigure}[t]{.49\linewidth}
        \centering
        \includegraphics[width=\linewidth]{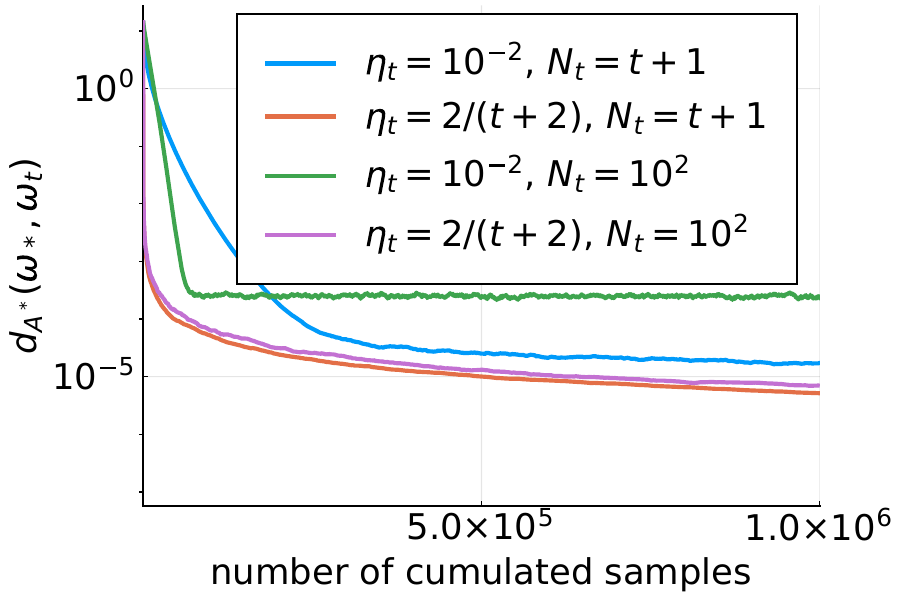}
        \caption{wrt  numbers of samples.}
        \label{fig:BonnetPriceComputationBudget}
    \end{subfigure}
    \centering
    \begin{subfigure}[t]{.49\linewidth}
        \centering
        \includegraphics[width=\linewidth]{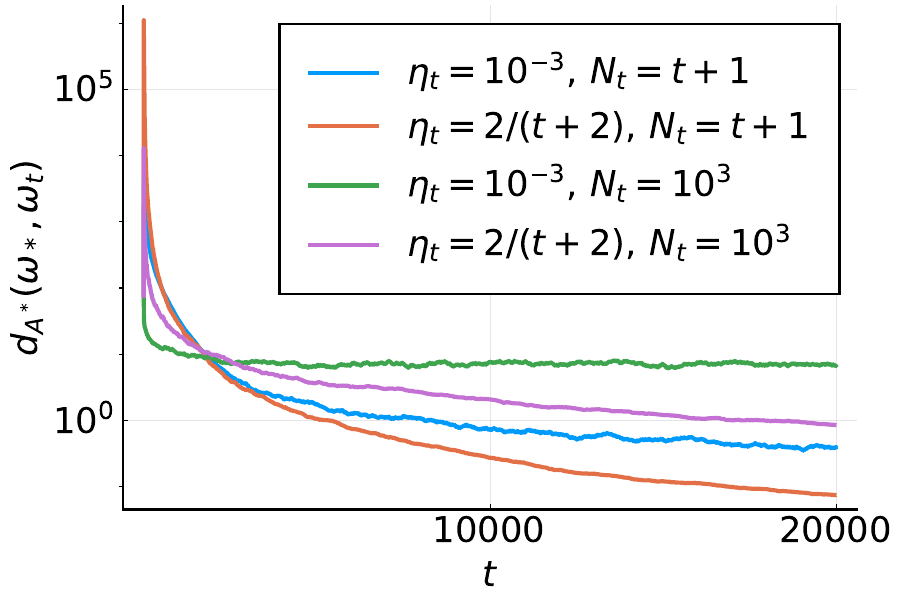}
        \caption{wrt iterations.}
        \label{fig:subSamplingIteration}
    \end{subfigure}
     \hfill 
     \begin{subfigure}[t]{.49\linewidth}
    \centering
        \includegraphics[width=\linewidth]{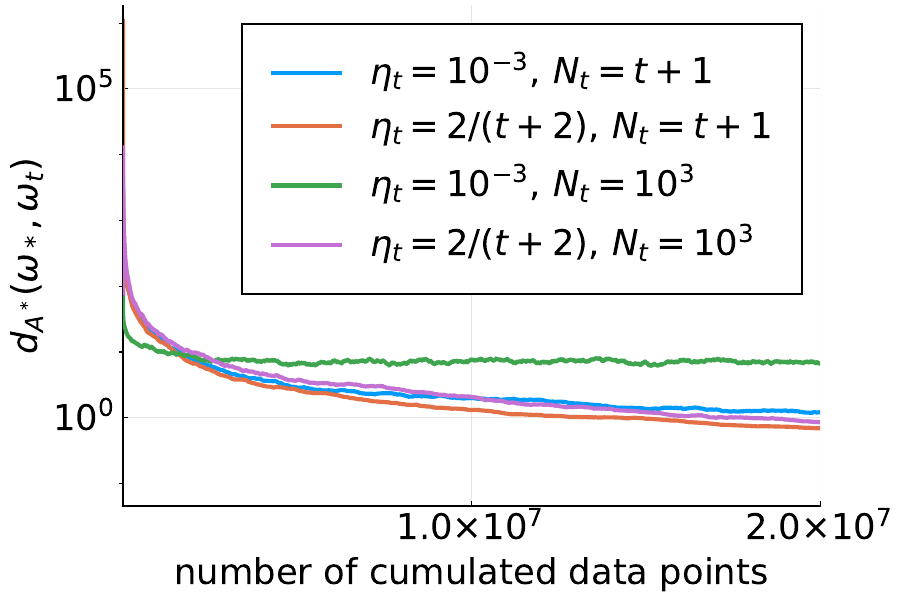}
        \caption{wrt  numbers of samples.}
        \label{fig:subSamplingBudget}
    \end{subfigure} 
    \caption{Mean Bregman divergence between current and optimal parameters, over $100$ runs, for different NGVI schedules in Prop.\ref{prop:varianceBonnetPrice} (a,b) and  \ref{prop:boundVarianceSubsampling} (c,d) settings.}
    \label{fig:BonnetPrice}
\end{figure}

\paragraph{Logistic regression} We then consider a logistic regression task  to test Algorithm \ref{alg} in a setting when $\pi$ does not satisfy the LERC with respect to $\mathcal{Q}$. Although $\pi$ does not satisfy the LERC,  $\log \pi$ is still convex and twice-differentiable, implying in particular that \ref{assumption:wellposedness} holds (see Appendix \ref{app:proof3}). The logistic regression setting is tested in dimension $d=5$. $M=100$ data points $\{z_m\}_{m=1}^M$ are generated uniformly in $[-5,5]^d$, and for each $m \in \llbracket 1,M \rrbracket$, $y_m$ follows a Bernoulli distribution with success probability equal to $1 / (1 + \exp(-\langle x_*, z_m \rangle))$ where $x_* \in \mathbb{R}^d$ with all its components being equal to $5$.
Algorithm \ref{alg} is run with $\mathcal{Q}$ set to the family of Gaussian distributions with diagonal covariances. 
The $\text{KL}(q_\omega, \pi)$  evolution is monitored equivalently by  the ELBO. A higher ELBO translates to a lower KL \citep{zhang2019}. The exact ELBO is approximated by a Monte Carlo sum using $100$ samples.  
Figure \ref{fig:logReg} depicts this Monte-Carlo ELBO averaged over 50 runs, for different sample and step size schedules. Figure \ref{fig:logReg} shows a slightly different picture from the results obtained in Figure \ref{fig:BonnetPrice}. The best ELBO in the transient regime is not reached by the schedule with increasing sample and decreasing step size anymore. Note that our theoretical results only describe the convergence of the Bregman divergence $d_{A^*}$ between the current iterate and the minimizer.

\begin{figure}[H]
    \centering
    \begin{subfigure}[t]{.49\linewidth}
        \centering
        \includegraphics[width=\linewidth]{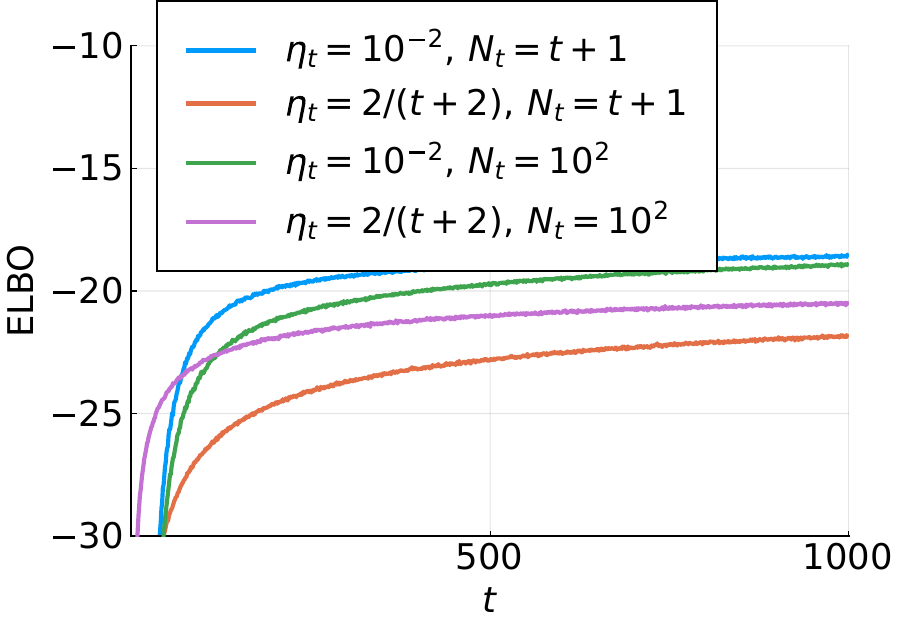}
        \caption{wrt iterations.}
        \label{fig:logRegIterations}
    \end{subfigure}\hfill
    \begin{subfigure}[t]{.49\linewidth}
        \centering
        \includegraphics[width=\linewidth]{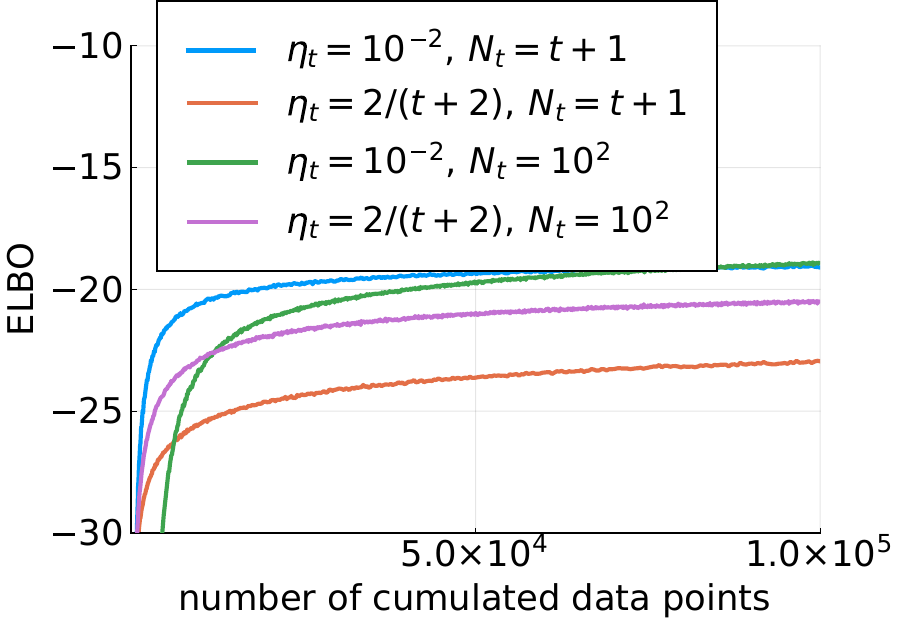}
        \caption{wrt  numbers of samples.}
        \label{fig:logRegComputationBudget}
    \end{subfigure}
    \caption{Logistic regression: Average ELBO over $50$ runs, for different NGVI schedules.}
    \label{fig:logReg}
\end{figure}

\paragraph{Additional experiments} Additional experiments are deferred to Appendix \ref{app:additionalExperiments}. In particular, the impact of potential projection steps is studied on a logistic regression task and on a robust linear regression task where log-convexity of the Bayesian posterior fails.

\section{CONCLUSION}
\label{sec:conclusion}

In this work, we have proposed new non-asymptotic convergence rates for projected stochastic mirror descent algorithms, which may be of independent interest, and showed that they can be applied to NGVI, including in the case of posterior distributions that cannot be exactly recovered. In particular, our results allow combinations of fixed or decreasing step sizes and fixed or increasing sample/batch sizes. These results extend existing results for NGVI to a wider range of settings and explicit the influence of the choice of step sizes and sample/batch sizes schedules.

Our analysis may apply to slightly extended settings. For instance, generalisations of the exponential families, such as the $q$-exponential \citep{amari2011} and $\lambda$-exponential \citep{wong2022, guilmeau2024} families which include heavy-tailed distributions, or such as mixtures \citep{lin2019, nguyen2025}, represent interesting perspectives to extend our analysis.

\subsubsection*{Acknowledgements}

We are grateful to the reviewers for their constructive comments that helped improve the quality of the paper.

\bibliography{references}

\clearpage
\onecolumn
\appendix

\section{EXPONENTIAL FAMILIES}
\label{app:expFam}

\subsection{Steep exponential families}

We recall the notion of steepness from \cite{barndorff-nielsen2014}, which we will use in our proofs.

\begin{definition}
    \label{def:steepness}
    Suppose that $\mathcal{Q}$ is an exponential family with log-partition function $A$ that is differentiable on $\interior \domain A$. We say that $\mathcal{Q}$ is steep if for any sequence $\{ \theta_t \}_{t \in \mathbb{N}}$ such that $\theta_t \in \interior \domain A$ for any $t \in \mathbb{N}$ and $\theta_t \rightarrow \theta$ which belongs to the boundary of $\domain A$, then $\| \nabla A(\theta_t) \| \rightarrow +\infty$.
\end{definition}

It has been established in \cite[Theorem 8.2]{barndorff-nielsen2014} that if $\domain A$ is open, then $\mathcal{Q}$ is steep.

In this section, we choose the base measure $\nu$ to be the Lebesgue measure on $\mathbb{R}^d$ multiplied by $(2 \pi)^{-\frac{d}{2}}$.

\subsection{Gaussian distributions with full covariance}

\begin{proposition}
    \label{prop:fullGaussiansExpFam}
    The family of Gaussian distributions on $\mathbb{R}^d$ forms an exponential family with sufficient statistic $\Gamma : \mathbb{R}^d \rightarrow \mathbb{R}^d \times \mathbb{S}^d$ with $\Gamma(x) = (x, x x^\top)$. For each $\mu \in \mathbb{R}^d$ and $\Sigma \in \mathbb{S}_{> 0}^d$, the density of $\mathcal{N}(\mu, \Sigma)$ can be written under the form $q_\theta$ outlined in Eq.~\eqref{eq:expFamilyDensity} with
    \begin{align*}
        \theta = (\theta_1, \theta_2) &= \left( \Sigma^{-1}\mu, -\frac{1}{2} \Sigma^{-1}  \right),\\
        A(\theta) &= -\frac{1}{4} \theta_1^\top \theta_2^{-1} \theta_1 - \frac{1}{2}\logdet(-2\theta_2).
    \end{align*}
    We have $\domain A = \mathbb{R}^d \times \mathbb{S}^d_{<0}$, where $\mathbb{S}^d_{<0}$ denotes the semi-definite negative matrices. Further, the resulting exponential family is steep.
\end{proposition}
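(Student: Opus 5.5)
The plan is to write each Gaussian density $\mathcal{N}(\mu,\Sigma)$, $\Sigma\in\mathbb{S}^d_{>0}$, in the form \eqref{eq:expFamilyDensity} by expanding the quadratic in the exponent, read off $\theta$ and $A$ by matching terms, determine $\domain A$ by an integrability argument, and obtain steepness from the openness of $\domain A$ via \cite[Theorem 8.2]{barndorff-nielsen2014}.

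\textbf{Exponential-family form.} With respect to $\nu$ (Lebesgue measure times $(2\pi)^{-d/2}$), the density is
\begin{equation*}
\det(\Sigma)^{-1/2}\exp\left(-\tfrac12 (x-\mu)^\top\Sigma^{-1}(x-\mu)\right).
\end{equation*}
Expanding the quadratic gives
\begin{equation*}
-\tfrac12 (x-\mu)^\top\Sigma^{-1}(x-\mu) = \langle \Sigma^{-1}\mu, x\rangle + \langle -\tfrac12\Sigma^{-1}, xx^\top\rangle - \tfrac12\mu^\top\Sigma^{-1}\mu,
\end{equation*}
where the scalar product on $\mathbb{S}^d$ is the Frobenius product $\tr(\cdot\,\cdot)$. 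This identifies $\Gamma(x)=(x,xx^\top)$ and $\theta=(\Sigma^{-1}\mu,-\tfrac12\Sigma^{-1})$. The remaining terms give $A(\theta)=\tfrac12\mu^\top\Sigma^{-1}\mu+\tfrac12\logdet\Sigma$. Using $\Sigma=(-2\theta_2)^{-1}$ and $\mu=\Sigma\theta_1$, we get $\mu^\top\Sigma^{-1}\mu=\theta_1^\top(-2\theta_2)^{-1}\theta_1=-\tfrac12\theta_1^\top\theta_2^{-1}\theta_1$ and $\logdet\Sigma=-\logdet(-2\theta_2)$. Hence $A(\theta)=-\tfrac14\theta_1^\top\theta_2^{-1}\theta_1-\tfrac12\logdet(-2\theta_2)$, which is the claimed formula.

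\textbf{Domain of $A$.} Here $\mathbb{S}^d_{<0}$ should be read as the negative \emph{definite} matrices; this is what makes the domain open and the formula meaningful. I will show $\int\exp(\langle\theta_1,x\rangle+x^\top\theta_2x)\,\nu(dx)<\infty$ if and only if $\theta_2$ is negative definite.
\begin{itemize}
\item If $\theta_2$ is negative definite, completing the square gives a Gaussian integral equal to $\exp(A(\theta))$, which is finite.
\item Otherwise there is $v\neq0$ with $v^\top\theta_2v\ge0$. Restrict the integral to the tube $\{sv+w\}$, with $s\in\mathbb{R}$ and $w$ in a small ball orthogonal to $v$. In the direction $v$ the exponent is $s^2 v^\top\theta_2 v+s\langle\theta_1,v\rangle$ plus terms bounded by $C(1+|s|)$. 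So the integrand does not decay along $v$ and the integral diverges. The case $v^\top\theta_2v=0$ with a linear term still diverges, because the integrand behaves like an exponential or a constant along a full line.
\end{itemize}
This perturbation argument for semidefinite or indefinite $\theta_2$ is the one step that needs care.

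\textbf{Steepness.} The set $\mathbb{R}^d\times\{\theta_2\in\mathbb{S}^d:\theta_2\prec0\}$ is open, since negative definiteness is preserved under small perturbations (eigenvalues depend continuously on the matrix). So $\domain A$ is open, and \cite[Theorem 8.2]{barndorff-nielsen2014}, recalled after Definition~\ref{def:steepness}, gives steepness.

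Finally, the map $(\mu,\Sigma)\mapsto\theta$ is a bijection onto $\domain A$, with inverse $\Sigma=(-2\theta_2)^{-1}$, $\mu=\Sigma\theta_1$. Therefore the exponential family with sufficient statistic $\Gamma$ coincides with the set of Gaussian distributions.
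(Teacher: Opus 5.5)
Your proof is correct and follows the same route as the paper: expand the quadratic in the Gaussian density to read off $\Gamma$, $\theta$ and $A$, then get steepness from the openness of $\domain A$ via \cite[Theorem 8.2]{barndorff-nielsen2014}. Your line-by-line divergence argument simply supplies the domain computation that the paper calls ``readily derived.'' You are also right to read $\mathbb{S}^d_{<0}$ as the negative \emph{definite} matrices; the paper's ``semi-definite'' wording is a slip. For singular $\theta_2$ the integral diverges, the formula for $A$ involves $\theta_2^{-1}$, and openness of $\domain A$ is exactly what the steepness argument requires.
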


\begin{proof}
    Denote by $q_{\mu,\Sigma}$ the density of $\mathcal{N}(\mu, \Sigma)$ with respect to $\nu$. Then, at any $x \in \mathbb{R}^d$, we have
    \begin{align*}
        q_{\mu,\Sigma}(x) &= \exp \left(- \frac{1}{2}(x - \mu)^\top \Sigma^{-1} (x - \mu) - \frac{1}{2} \logdet(\Sigma) \right)\\
        &= \exp \left( \langle -\frac{1}{2}\Sigma^{-1}, xx^\top \rangle + \langle \Sigma^{-1} \mu, x \rangle - \frac{1}{2} \mu^\top \Sigma^{-1} \mu  - \frac{1}{2} \logdet(\Sigma) \right).
    \end{align*}
    We can readily derive the expression of $\theta$, $A$, and $\domain A$ from here. The steepness property comes from recognising that $\domain A$ is open and applying \cite[Theorem 8.2]{barndorff-nielsen2014}.    
\end{proof}

Proposition~\ref{prop:fullGaussiansExpFam} indicates in particular that Gaussian distributions (as an exponential family) enjoy the properties outlined in Proposition~\ref{prop:expFamilyProperties}. In particular, the parameters $\theta =( \Sigma^{-1} \mu, -\frac{1}{2} \Sigma^{-1})$ are in bijection with the parameters $\omega = (\mu, \Sigma + \mu \mu^\top)$.

\subsection{Gaussian distributions with diagonal covariance matrices}
\label{subsec:diagonalGaussian}

 We denote the Hadamard product by $\bullet$.

\begin{proposition}
    The family of Gaussian distributions on $\mathbb{R}^d$ with diagonal covariance matrices forms an exponential family with sufficient statistic $\Gamma: \mathbb{R}^d \rightarrow \mathbb{R}^d \times \mathbb{R}^d$ with $\Gamma(x) = (x, x \bullet x)$. For each $\mu \in \mathbb{R}^d$ and $\sigma^2 \in \mathbb{R}^d_{> 0}$, the density of $\mathcal{N}(\mu, \diag \sigma^2)$ can be written under the form $q_\theta$ outlined in Eq.~\eqref{eq:expFamilyDensity} with
    \begin{align*}
        \theta = (\theta_1, \theta_2) &= \left(\left(\frac{1}{\sigma^2}\right)\bullet\mu, - \frac{1}{2}\frac{1}{\sigma^2}\right),
    \end{align*}
    with the inversion operations being taken point-wise, that is, $\left(\frac{1}{\sigma^2}\right)_i = \frac{1}{(\sigma^2)_i} $ for $i \in \llbracket 1,d \rrbracket$ with $(\cdot)_i$ denoting the i$^{th}$ element of a vector. Moreover, we have
    \begin{equation*}
        A(\theta) = -\frac{1}{4} \sum_{i=1}^d \frac{(\theta_{1}^2)_i}{(\theta_{2})_i} - \frac{1}{2} \sum_{i=1}^d \log(-2 (\theta_{2})_i),
    \end{equation*}
    with $\domain A = \mathbb{R}^d \times \mathbb{R}^d_{<0}$ and the resulting exponential family is steep.
\end{proposition}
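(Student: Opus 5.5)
The statement is the diagonal-covariance analogue of Proposition~\ref{prop:fullGaussiansExpFam}, and I will prove it the same way: write the density of $\mathcal{N}(\mu, \diag \sigma^2)$ with respect to $\nu$ (Lebesgue measure times $(2\pi)^{-d/2}$) in exponential-family form and read off $\theta$, $\Gamma$ and $A$. Because the covariance is diagonal, the density factorizes over coordinates:
\begin{equation*}
    q(x) = \exp\left( \sum_{i=1}^d \left[ -\frac{(x_i - \mu_i)^2}{2 (\sigma^2)_i} - \frac{1}{2}\log (\sigma^2)_i \right] \right).
\end{equation*}
Expanding the squares gives three kinds of terms. The terms $x_i^2$ carry the coefficient $-\frac{1}{2(\sigma^2)_i}$. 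The terms $x_i$ carry the coefficient $\frac{\mu_i}{(\sigma^2)_i}$. What remains is $-\frac{\mu_i^2}{2(\sigma^2)_i} - \frac{1}{2}\log(\sigma^2)_i$. The coefficients match the claimed $\theta = (\theta_1,\theta_2)$ paired with $\Gamma(x) = (x, x\bullet x)$ under the Euclidean inner product on $\mathbb{R}^d\times\mathbb{R}^d$.

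Next I express the leftover term in terms of $\theta$. Inverting the parametrization gives $(\sigma^2)_i = -\frac{1}{2(\theta_2)_i}$ and $\mu_i = (\sigma^2)_i(\theta_1)_i = -\frac{(\theta_1)_i}{2(\theta_2)_i}$. Substituting, $\frac{\mu_i^2}{2(\sigma^2)_i} = -\frac{(\theta_1)_i^2}{4(\theta_2)_i}$, and $\frac{1}{2}\log(\sigma^2)_i = -\frac{1}{2}\log(-2(\theta_2)_i)$. So $A(\theta)$ equals the stated formula. To confirm that this is genuinely the log-partition function, I will use the coordinatewise Gaussian integral: $\int \exp(\langle\theta,\Gamma(x)\rangle)\,\nu(dx)$ factorizes into one-dimensional integrals. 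Each factor is finite if and only if $(\theta_2)_i<0$, and then it equals $\exp$ of the $i$-th summand of $A$.

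This integrability criterion is the only point needing care, and it is what determines the domain. If some $(\theta_2)_i \ge 0$, the corresponding factor diverges. This holds for any $(\theta_1)_i$, including the case $(\theta_2)_i = 0$ with $(\theta_1)_i = 0$, where the integrand is constant. Hence $\domain A = \mathbb{R}^d\times\mathbb{R}^d_{<0}$ exactly, with $\mathbb{R}^d_{<0}$ the strictly negative orthant. This set is open, so steepness follows from \cite[Theorem 8.2]{barndorff-nielsen2014}, as in Proposition~\ref{prop:fullGaussiansExpFam}.

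Finally, I will note that $\Gamma(x) = (x, x\bullet x)$ has affinely independent components: no nontrivial combination $\langle a, x\rangle + \langle b, x\bullet x\rangle$ is constant $\nu$-a.e. This gives minimality, which the paper's assumptions require, and with it the bijection $\omega = (\mu, \sigma^2 + \mu\bullet\mu)$.
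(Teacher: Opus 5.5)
Your proposal is correct and takes the same route as the paper, which simply defers to the proof of Proposition~\ref{prop:fullGaussiansExpFam}: expand the density with respect to $\nu$, read off $\theta$ and $A$, and deduce steepness from the openness of $\domain A$ via Theorem 8.2 of Barndorff-Nielsen. Your coordinatewise integrability check (including the case $(\theta_2)_i = 0$) and your minimality remark fill in details the paper leaves implicit, and they are consistent with its argument.
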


\begin{proof}
    The proof can be obtained using the same steps as in the proof of Proposition~\ref{prop:fullGaussiansExpFam}.
\end{proof}

\begin{proposition}
    The family of centered Gaussian distributions with diagonal covariance matrices on $\mathbb{R}^d$ forms an exponential family with sufficient statistic $\Gamma : \mathbb{R}^d \rightarrow \mathbb{R}^d$ with $\Gamma(x) = x \bullet x$. For any $\sigma^2 \in \mathbb{R}^d_{>0}$, the density of $\mathcal{N}(0, \diag \sigma^2)$ can be written as in Eq.~\eqref{eq:expFamilyDensity} with $\theta = -\frac{1}{2} \frac{1}{\sigma^2}$ and $A(\theta) = - \frac{1}{2} \sum_{i=1}^d \log(-2 \theta_i)$. We have that $\domain A = \mathbb{R}_{<0}^d$ and the resulting exponential family is steep.
\end{proposition}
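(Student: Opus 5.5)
The statement concerns the family of centered Gaussian distributions on $\mathbb{R}^d$ with diagonal covariance. I will treat it exactly as in the proof of Proposition~\ref{prop:fullGaussiansExpFam}: write the density explicitly and read off the natural parameter, sufficient statistic and log-partition function.

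\textbf{Step 1 (rewrite the density).} Fix $\sigma^2 \in \mathbb{R}^d_{>0}$ and recall that $\nu$ is Lebesgue measure scaled by $(2\pi)^{-d/2}$. For every $x \in \mathbb{R}^d$,
\begin{equation*}
q(x) = \exp\Big( -\tfrac{1}{2}\sum_{i=1}^d \tfrac{x_i^2}{\sigma_i^2} - \tfrac{1}{2}\sum_{i=1}^d \log \sigma_i^2 \Big) = \exp\big( \langle \theta, x \bullet x \rangle - A(\theta) \big).
\end{equation*}
Here $\theta = -\tfrac{1}{2}\,\tfrac{1}{\sigma^2}$, taken entrywise. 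Substituting $\sigma_i^2 = -1/(2\theta_i)$ into the normalizing term gives $\tfrac{1}{2}\sum_{i}\log \sigma_i^2 = -\tfrac{1}{2}\sum_i \log(-2\theta_i)$, which is exactly the stated $A$. This establishes the exponential-family form with sufficient statistic $\Gamma(x) = x \bullet x$.

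\textbf{Step 2 (domain).} The integral $\int \exp(\langle \theta, x\bullet x\rangle)\,\nu(dx)$ factorizes over coordinates into one-dimensional integrals $\int \exp(\theta_i t^2)(2\pi)^{-1/2}\,dt$.
\begin{itemize}
\item If $\theta_i < 0$, the integral is finite and equals $(-2\theta_i)^{-1/2}$, which confirms the formula for $A$ independently.
\item If $\theta_i \geq 0$, the integrand does not decay, so the integral is $+\infty$.
\end{itemize}
Hence $\domain A = \mathbb{R}^d_{<0}$. Conversely, every $\theta \in \mathbb{R}^d_{<0}$ corresponds to a unique $\sigma^2 = -1/(2\theta) \in \mathbb{R}^d_{>0}$, so the map between the family and its parameters is onto.

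\textbf{Step 3 (steepness).} The domain $\mathbb{R}^d_{<0}$ is open, so steepness follows from \cite[Theorem 8.2]{barndorff-nielsen2014}. One can also check it directly: $\nabla A(\theta)_i = -1/(2\theta_i)$, which blows up as $\theta_i \to 0^-$.

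No step is a real obstacle. The only point requiring care is Step 2, namely confirming that the boundary $\theta_i = 0$ is excluded from the domain so that the domain is genuinely open.
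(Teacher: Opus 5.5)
Your proof is correct and follows the paper's route. The paper simply defers to the argument for the full-covariance Gaussian family: rewrite the density with respect to $\nu$, read off $\theta$, $A$ and $\domain A$, then use openness of the domain and \cite[Theorem 8.2]{barndorff-nielsen2014} for steepness; your coordinate-wise integral computation of $\domain A$ and the direct check $\nabla A(\theta)_i = -1/(2\theta_i) \to +\infty$ just make explicit details the paper leaves implicit.
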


\begin{proof}
    The proof of this proposition can also be obtained using the same steps as in the proof of Proposition~\ref{prop:fullGaussiansExpFam}.
\end{proof}

We now check that the family of Gaussian distributions over $\mathbb{R}^d$ with full covariance matrices linearly enlarges the family of centered Gaussian distributions with diagonal covariance matrices over $\mathbb{R}^d$. Indeed, consider $q_\theta$ in the latter family. Then, $\mathbb{E}_{q_\theta}[X \bullet X] = \sigma^2$ while $\mathbb{E}_{q_\theta}\left[ X \right] = 0$ and $\mathbb{E}_{q_\theta}[X X^\top] = \diag \sigma^2$. Hence, we have that $\mathbb{E}_{q_\theta}\left[ (X, XX^\top)\right] = (0, \diag \mathbb{E}_{q_\theta}\left[ X \bullet X \right])$, which shows the LERC.

\section{PROOFS OF SECTION \ref{sec:background}}
\label{app:proofs}

Most proofs of Section \ref{sec:background} are made easier by using the notion of Legendre functions, which we define now.

\begin{definition}
    \label{def:LegendreFunction}
    Consider a proper convex function $h : \mathcal{H} \rightarrow \mathbb{R} \cup \{ \infty\}$. 
    
    Then, $h$ is called essentially smooth if $\interior \domain h \neq \emptyset$, $h$ is differentiable on $\interior \domain h$, and if for every sequence $\{ \omega_t \}_{t \in \mathbb{N}}$ such that $\omega_t \in \interior \domain h$ for any $t \in \mathbb{N}$ and $\omega_t \rightarrow \omega$ in the boundary of $\domain h$, we have the limit $\| \nabla h(\omega_t) \| \rightarrow +\infty$.
    
    If $h$ proper, lower-semicontinuous, strictly convex, and essentially smooth, then $h$ will be called Legendre.
\end{definition}

Legendre functions benefit from \cite[Theorem 26.5]{rockafellar1970}, which states the following.

\begin{proposition}
    \label{prop:legendreProperties}
    Consider a Legendre function $h$. Then, we have the following:
    \begin{itemize}
        \item [$(i)$] $h$ is Legendre if and only if its convex conjugate $h^*$ is Legendre.
        \item [$(ii)$] $\nabla h$ is a bijection from $\interior \domain h$ to $\interior \domain h^*$ and its inverse is $\nabla h^*$.
    \end{itemize}
\end{proposition}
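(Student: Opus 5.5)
The statement is Proposition~\ref{prop:legendreProperties}, which is Rockafellar's Theorem 26.5. The plan is to prove it from standard facts of convex analysis: Fenchel--Young equality, subdifferential inversion, and the characterization of essential smoothness via the subdifferential.

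\textbf{Step 1: a subdifferential reformulation of essential smoothness.} A proper lsc convex $h$ is essentially smooth if and only if $\partial h$ is single-valued on its domain and $\operatorname{dom}\partial h=\interior\domain h$. In that case $\partial h(\omega)=\{\nabla h(\omega)\}$ there. The key point is the boundary blow-up condition. Suppose some $\omega\in\partial(\domain h)$ had $\partial h(\omega)\neq\emptyset$. Approaching $\omega$ along a segment from an interior point, monotonicity of $\partial h$ together with convexity of $h$ would keep the directional derivatives bounded. That contradicts $\|\nabla h(\omega_t)\|\to\infty$ (Rockafellar, Thm.~26.1).

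\textbf{Step 2: the dual statement.} For proper lsc convex $h$, the Fenchel--Young equality gives $\partial h^*=(\partial h)^{-1}$. Strict convexity of $h$ on convex subsets of $\operatorname{dom}\partial h$ is equivalent to $\partial h^*$ being single-valued. Indeed, if $\theta\in\partial h(\omega_1)\cap\partial h(\omega_2)$ with $\omega_1\neq\omega_2$, then $h$ is affine on $[\omega_1,\omega_2]$. Symmetrically, $h$ essentially smooth is equivalent to $h^*$ being strictly convex on convex subsets of $\operatorname{dom}\partial h^*$.

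\textbf{Step 3: proof of $(i)$.} Combining Steps 1 and 2, $h$ Legendre means that $\partial h$ is single-valued with domain $\interior\domain h$, and that $\partial h^*$ is single-valued. The remaining task is to show $\operatorname{dom}\partial h^*=\operatorname{range}\partial h$ equals $\interior\domain h^*$, so that Step 1 applies to $h^*$.

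\textbf{Step 4: proof of $(ii)$.} This is immediate once $(i)$ holds. We have $\nabla h:\interior\domain h\to\interior\domain h^*$, and it is a bijection with inverse $\nabla h^*$, because $\partial h^*=(\partial h)^{-1}$ and both maps are single-valued on these open sets.

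\textbf{Main obstacle.} The delicate step is the domain identification in Step 3, namely $\operatorname{range}\nabla h=\interior\domain h^*$. The inclusion $\operatorname{ri}\domain h^*\subset\operatorname{dom}\partial h^*$ is standard. The reverse inclusion, that no boundary point of $\domain h^*$ carries a subgradient of $h^*$, would follow from Step 1 applied to $h^*$, but that is circular. So I would follow Rockafellar and argue directly. Suppose $\theta$ is a boundary point of $\domain h^*$ with $\omega\in\partial h^*(\theta)$. Then $\omega$ lies in the normal cone of $\domain h^*$ at $\theta$ shifted, which forces $\partial h(\omega)$ to contain a nontrivial ray. That contradicts single-valuedness of $\partial h$.

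Since the paper only cites this result, I would present this outline briefly and refer to Rockafellar, Thm.~26.3 and Thm.~26.5, for the full details.
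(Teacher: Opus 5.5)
Your overall route is the paper's. The paper gives no argument and only cites Rockafellar's Theorem 26.5, and your Steps 1--4 are the standard chain behind it (Theorems 26.1 and 26.3), so ending with that citation is fine. But the sketch you would present contains a step that fails as written, in the paragraph you call the main obstacle. Let $\theta$ be a boundary point of $\domain h^*$ and $\omega\in\partial h^*(\theta)$. Then $\partial h^*(\theta)\supseteq\omega+N_{\domain h^*}(\theta)$ for some normal $n\neq 0$, so the ray $\{\omega+sn : s\geq 0\}$ lies in $\partial h^*(\theta)=(\partial h)^{-1}(\theta)$. In other words, $\theta\in\partial h(\omega+sn)$ for all $s\geq 0$; this does not put a ray inside $\partial h(\omega)$. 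What it contradicts is single-valuedness of $\partial h^*$, i.e.\ strict convexity of $h$ on $\domain\partial h=\interior\domain h$ (by your Step 2, $h$ is affine on that ray). It does not contradict single-valuedness of $\partial h$, which by itself cannot give the conclusion. For instance, $h(\omega)=\langle a,\omega\rangle$ has $\partial h\equiv\{a\}$, yet $h^*=\iota_{\{a\}}$ and $\partial h^*(a)=\mathcal{H}$ at the boundary point $a$.

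Moreover, the circularity you worry about comes only from stating Step 1 as a two-condition characterisation. Theorem 26.1 says that, for any closed proper convex $f$, single-valuedness of $\partial f$ alone is equivalent to essential smoothness. The identity $\domain\partial f=\interior\domain f$ is then a consequence of the normal-cone inclusion $\partial f(x)\supseteq\partial f(x)+N_{\domain f}(x)$, and the gradient blow-up follows from closedness of the graph of $\partial f$. Nothing here refers to $h$, so it applies to $f=h^*$. Step 2 gives $\partial h^*$ single-valued, hence $h^*$ is essentially smooth with $\mathrm{range}\,\nabla h=\domain\partial h^*=\interior\domain h^*$. Step 2 with the roles exchanged gives strict convexity of $h^*$ on $\interior\domain h^*$, and (ii) follows as in your Step 4.

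Two smaller points. First, in Step 1 monotonicity only bounds the single component $\langle\nabla h(\omega_t),x_0-\omega\rangle$, which does not by itself contradict $\|\nabla h(\omega_t)\|\to\infty$. The missing link is Rockafellar's Lemma 26.2: since $h(\omega_t)$ and this directional derivative stay bounded, a unit cluster point $e$ of $\nabla h(\omega_t)/\|\nabla h(\omega_t)\|$ would satisfy $\langle e,z-x_0\rangle\leq 0$ for all $z\in\domain h$, which is impossible at the interior point $x_0$. Second, what you obtain for $h^*$ is strict convexity only on $\interior\domain h^*$, which is Rockafellar's notion. So ``strictly convex'' in the paper's definition must be read that way, since otherwise (i) fails. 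For example, $h(u,v)=\exp(u-1-\sqrt{1-v^2})$ on $\mathbb{R}\times[-1,1]$ is essentially smooth and strictly convex on its whole domain, whereas $h^*(x,y)=x\log x+\sqrt{x^2+y^2}$ for $x\geq 0$ is affine on $\{0\}\times[1,2]$.
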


In this work, we use the notion of steepness from Definition~\ref{def:steepness}. Remark that the steepness of the exponential family $\mathcal{Q}$ is equivalent to the essential smoothness (defined in Definition~\ref{def:LegendreFunction}) of its log-partition  function $A$.

Now that we have laid out some preliminary notions, we turn to the proofs of Section \ref{sec:background}.

\begin{lemma}
    \label{lemma:propertiesA}
    Suppose that $\interior \domain A \neq \emptyset$. Then, $A$ is proper, lower semi-continuous, strictly convex, and all the partial derivatives of $A$ exist on $\interior \domain A$. Furthermore, if $\mathcal{Q}$ is steep, then $A$ is Legendre.
\end{lemma}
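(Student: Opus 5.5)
We will establish the properties of $A$ directly from its integral representation $A(\theta) = \log \int \exp(\langle \theta, \Gamma(x)\rangle)\,\nu(dx)$, and then obtain the Legendre property by checking Definition~\ref{def:LegendreFunction} item by item.

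\textbf{Properness and lower semicontinuity.} Properness holds because the integrand is positive, so $A > -\infty$ everywhere, and $\interior \domain A \neq \emptyset$ gives a point where $A < \infty$. For lower semicontinuity, take $\theta_t \to \theta$. The integrands $\exp(\langle \theta_t, \Gamma(x)\rangle)$ converge pointwise to $\exp(\langle \theta, \Gamma(x)\rangle)$ and are nonnegative. Fatou's lemma then gives $\int e^{\langle \theta,\Gamma\rangle}d\nu \le \liminf_t \int e^{\langle \theta_t,\Gamma\rangle}d\nu$, and composing with the increasing continuous map $\log$ on $(0,\infty]$ yields lower semicontinuity.

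\textbf{Convexity.} Hölder's inequality with exponents $1/\lambda$ and $1/(1-\lambda)$ gives $A(\lambda\theta+(1-\lambda)\theta') \le \lambda A(\theta) + (1-\lambda)A(\theta')$. Equality in Hölder forces $e^{\langle \theta,\Gamma\rangle}$ and $e^{\langle \theta',\Gamma\rangle}$ to be proportional $\nu$-a.e. This means $\langle \theta-\theta',\Gamma(x)\rangle$ is constant $\nu$-a.e., which minimality of $\mathcal{Q}$ excludes when $\theta\neq\theta'$. So I will invoke minimality here. This is implicit in the paper's use of ``strictly convex'', since without minimality $A$ is only convex along the non-identifiable directions.

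\textbf{Differentiability.} On the open set $\interior \domain A$, I will show that all partial derivatives exist, and in fact that $\theta \mapsto \int e^{\langle \theta,\Gamma\rangle}d\nu$ is $C^\infty$, via the standard differentiation-under-the-integral argument for Laplace transforms. Fix $\theta \in \interior\domain A$ and $\varepsilon>0$ with the ball $B(\theta,2\varepsilon)\subset \domain A$. Then $|\Gamma_i(x)|^k e^{\langle \theta',\Gamma(x)\rangle}$ is dominated, uniformly for $\theta'\in B(\theta,\varepsilon)$, by a constant times $e^{\langle \theta+2\varepsilon e_i,\Gamma\rangle}+e^{\langle \theta-2\varepsilon e_i,\Gamma\rangle}$. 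This uses $|s|^k \le C_k(e^{\varepsilon s}+e^{-\varepsilon s})$, and the dominating function is integrable. Dominated convergence then justifies differentiation, and dividing by the normaliser gives $\nabla A(\theta)=\mathbb{E}_{q_\theta}[\Gamma(X)]$ and $\nabla^2 A(\theta)$ equal to the covariance of $\Gamma$, which is the Fisher information as claimed in Proposition~\ref{prop:expFamilyProperties}. Since the partial derivatives are continuous, $A$ is differentiable (indeed $C^1$) on $\interior\domain A$.

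\textbf{Legendre property.} Steepness is, as remarked before the lemma, exactly the boundary blow-up condition in the definition of essential smoothness. Combining it with nonempty interior and differentiability gives essential smoothness. Together with properness, lower semicontinuity and strict convexity, $A$ is therefore Legendre. The only delicate step is the domination bound in the differentiability argument; everything else is routine.
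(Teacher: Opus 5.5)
Your skeleton matches the paper's: properness, lower semicontinuity, (strict) convexity, differentiability on $\interior \domain A$, and then steepness read as essential smoothness to get the Legendre property. The paper cites \cite[Theorem 1.13]{brown1986} for lower semicontinuity and convexity and \cite[Theorem 8.1]{barndorff-nielsen2014} for the derivatives. You instead prove these facts from Fatou, H\"older and dominated convergence, which makes the argument self-contained. Your remark on minimality is correct and worth keeping. If $\langle v, \Gamma \rangle$ is constant $\nu$-a.e. for some $v \neq 0$, then $s \mapsto A(\theta + s v)$ is affine. So strict convexity genuinely requires the minimality in \ref{assumption:steep}: the lemma (and Proposition~\ref{prop:expFamilyProperties}~(i)) omits it, but the paper assumes it throughout.

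One step fails as written. The bound $|\Gamma_i|^k e^{\langle \theta', \Gamma \rangle} \le C (e^{\langle \theta + 2\varepsilon e_i, \Gamma \rangle} + e^{\langle \theta - 2\varepsilon e_i, \Gamma \rangle})$ cannot hold uniformly over the ball $B(\theta, \varepsilon)$. A perturbation $\theta' - \theta$ with a component along some $e_j$, $j \neq i$, produces a factor $e^{\langle \theta' - \theta, \Gamma \rangle}$ that the right-hand side does not control. For instance, take $\Gamma(x) = x \in \mathbb{R}^2$, $\nu$ a Gaussian measure, $\theta = 0$, $i = 1$, $\theta' = (0, \varepsilon/2)$, and $x_1 \in [1,2]$ with $x_2 \to \infty$. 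The bound is only valid on the line $\theta + s e_i$, $|s| < \varepsilon$. That suffices for existence of $\partial_i A(\theta)$, which is all the lemma literally claims. It does not give continuity of the partials or mixed derivatives, and you use these to conclude that $A$ is $C^1$ (and that $\nabla^2 A$ is the covariance).

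Either repair is routine. First option: choose $r > 0$ with $\theta \pm r e_j \in \domain A$ for all $j$ and dominate by $C_k \sum_j (e^{\langle \theta + r e_j, \Gamma \rangle} + e^{\langle \theta - r e_j, \Gamma \rangle})$. This holds uniformly for $\|\theta' - \theta\|_1 \le r/2$, because $u \mapsto e^{\langle u, \Gamma(x) \rangle}$ is convex and the $\ell_1$-ball of radius $r$ around $\theta$ is the convex hull of the points $\theta \pm r e_j$. For the polynomial factor, use $\|\Gamma\|_\infty^k \le C_k e^{(r/2)\|\Gamma\|_\infty}$. Second option: keep the line-wise bound and invoke \cite[Theorem 25.2]{rockafellar1970}, which says a convex function finite near $\theta$ whose partial derivatives exist at $\theta$ is differentiable there. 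That is all the Legendre step requires.
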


\begin{proof}
    The properness of $A$ comes from the definition of $A$. The lower semi-continuity and convexity come from \cite[Theorem 1.13]{brown1986}. The derivability result comes from \cite[Theorem 8.1]{barndorff-nielsen2014}. Finally, if $\mathcal{Q}$ is steep, then $A$ is essentially smooth. Combined with the previous properties, we see that stepness implies that $A$ is Legendre.
\end{proof}

\subsection{Proof of Proposition~\ref{prop:expFamilyProperties} }
\begin{proof}
    For this proof, we will leverage the results from Lemma \ref{lemma:propertiesA}.

    $(i)$ We use the first part of Lemma \ref{lemma:propertiesA}. Next, denoting by $\mathbb{V}_{q_\theta}[\Gamma(X)]$ the variance of  $\Gamma(X)$, we have from \cite[Chapter 8, Eq.~(12)-(13)]{barndorff-nielsen2014} that
    \begin{align*}
        \nabla A(\theta) &= \mathbb{E}_{q_\theta}[\Gamma(X)] \qquad \qquad \mbox{and} &  \nabla^2 A(\theta) = \mathbb{V}_{q_\theta}[\Gamma(X)].
    \end{align*}
    Then, the result comes by checking that $\nabla_\theta \log q_\theta(x) = \Gamma(x) - \mathbb{E}_{q_\theta}[\Gamma(X)]$, thus showing that $\nabla^2 A(\theta) = \mathbb{E}_{q_\theta}\left[ (\nabla_\theta \log q_\theta(X))(\nabla_\theta \log q_\theta(X))^\top  \right]$ which is the Fisher information matrix.

    $(ii)$ $A$ is Legendre from Lemma \ref{lemma:propertiesA}, so $A^*$ is also Legendre from Proposition~\ref{prop:legendreProperties}. This shows that $A^*$ is proper, lower semi-continuous, strictly convex, and differentiable on $\interior \domain A^*$. For any $\omega \in \interior \domain A^*$, we can rewrite $A^*(\omega) = \langle \theta, \nabla A(\theta) \rangle - A(\theta)$ where $\omega \in \nabla A^*(\theta)$ using the Fenchel-Young equality \cite[Prop.~16.9]{bauschke2011}. It follows that 
    $A^*(\omega) = \langle \theta, \mathbb{E}_{q_\theta}[\Gamma(X)] \rangle - A(\theta) = \mathbb{E}_{q_\theta}[\langle \theta, \Gamma(X) \rangle - A(\theta)]= \mathbb{E}_{q_\theta}[\log q_\theta(X)]$ as desired. 

    $(iii)$ This result comes from $A$ being Legendre (Lemma \ref{lemma:propertiesA}) and the result of Proposition~\ref{prop:legendreProperties} $(ii)$.

    $(iv)$ We first have that $d_A(\theta, \theta') = d_{A^*}(\omega', \omega)$ from \cite[Theorem 3.7 (v)]{bauschke1997}. Then, the equality with the KL divergence comes from \cite{nielsen2010}.
\end{proof}

\subsection{Proof of Proposition~\ref{prop:BregmanProjection}}

\begin{proof}
    We first show that the domain of $\proj_C^{h}$ is $\interior \domain h$, meaning that $\proj_C^h(\omega) \neq \emptyset$ if and only if $\omega \in \interior \domain h$. This is done by remarking that $h$ is Legendre so it is essentially strictly convex and that $C$ is closed convex with $C \cap \domain h \neq \emptyset$. Thus, we apply \cite[Proposition 3.31 (vi)]{bauschke2003} and obtain the result.

    We now show that $\proj_C^{h}(\omega) \subset \interior \domain h$ for any $\omega \in \interior \domain h$. This is done by remarking that $C \cap \interior \domain h \neq \emptyset$ and that $h$ is Legendre so essentially smooth, and using \cite[Proposition 3.33 (v)(b)]{bauschke2003}.

    We now demonstrate that $\proj_C^{h}$ is single-valued on $\interior \domain h$. To do so, remark that we can apply \cite[Proposition 3.32 (ii)(d)]{bauschke2003} due to our previous point and the fact that $h$ restricted to $\interior \domain h$ is strictly convex.
\end{proof}

Note that when $\proj_C^h(\omega)$ is single-valued, we will consider that $\proj_C^h(\omega)$ is a point of $\mathcal{H}$ instead of being a singleton of $\mathcal{H}$. In the paper, the above result will typically be applied with $h= A^*$.

\subsection{Proof of Proposition~\ref{prop:varianceBoundHadrien}}
\begin{proof}
    We can proceed as in the proof of \cite[Prop.~2.4]{hendrikx2024} to obtain that for any $\omega \in D$, we have
    \begin{equation*}
        f(\omega_*) - f_\eta(\omega) \leq \frac{1}{\eta} \mathbb{E}\left[d_h(\bar{\omega}_+, \omega_+) \right].
    \end{equation*}
    We then obtain our result by taking the supremum over $D$.
\end{proof}

\section{PROOFS OF SECTION \ref{sec:presentation}}

\label{app:proof3}

\subsection{Necessary conditions ensuring Assumption \ref{assumption:wellposedness}}

Assumption \ref{assumption:wellposedness} requires the iterates generated by Algorithm \ref{alg} to stay in $\interior \domain A^*$. Proposition \ref{prop:sufficientWellPosedness} below  provides necessary conditions that ensure that any iterate in $\interior \domain A^*$ will generate a new iterate in $\interior \domain A^*$ as well. Then, this result can be applied recursively to satisfy Assumption \ref{assumption:wellposedness}.

\begin{proposition}
    \label{prop:sufficientWellPosedness}
    Suppose that Assumptions \ref{assumption:steep} and \ref{assumption:constraint} hold and consider an iterate $\omega_t \in \interior \domain A^*$, a step size $\eta_t > 0$, as well as the resulting updated point $\omega_{t+1}$ defined by the recursion of Algorithm \ref{alg}. Then, we have the following:
    \begin{itemize}
        \item [$(i)$] If for any $\omega \in \interior \domain A^*$, $N \in \mathbb{N}_{>0}$, $g^N(\omega) \in \interior \domain A^*$ almost surely, then $\omega_{t+1} \in \interior \domain A^*$ almost surely for any step size $\eta_t \in [0,1]$.
        \item [$(ii)$] There exists $\bar{\eta}$ such that if $\eta_t \in (0,\bar{\eta})$, $\omega_{t+1} \in \interior \domain A^*$ almost surely.
    \end{itemize}
\end{proposition}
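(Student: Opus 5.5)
The plan is to work in the natural-parameter space, where the mirror step of Algorithm~\ref{alg} is a plain convex combination, and then transfer the conclusion back to $\omega$ through $\nabla A$ and the Bregman projection. Set $\theta_t = \nabla A^*(\omega_t)$. Since $\omega_t \in \interior \domain A^*$, Proposition~\ref{prop:expFamilyProperties}~$(iii)$ (available under \ref{assumption:steep}) gives $\theta_t \in \interior \domain A$. The first line of Algorithm~\ref{alg} then reads
\begin{equation*}
\theta_+ := \nabla A^*((\omega_t)_+) = (1-\eta_t)\theta_t + \eta_t g^{N_t}(\omega_t).
\end{equation*}
The key reduction is: if $\theta_+ \in \interior \domain A$, then $(\omega_t)_+ = \nabla A(\theta_+) \in \interior \domain A^*$ by Proposition~\ref{prop:expFamilyProperties}~$(iii)$. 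Next, $A^*$ is Legendre, by Lemma~\ref{lemma:propertiesA} together with Proposition~\ref{prop:legendreProperties}~$(i)$, and $C$ satisfies \ref{assumption:constraint}. Hence Proposition~\ref{prop:BregmanProjection} with $h = A^*$ yields $\omega_{t+1} = \proj_C^{A^*}((\omega_t)_+) \in \interior \domain A^*$. Both items therefore reduce to showing $\theta_+ \in \interior \domain A$ almost surely.

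For $(i)$, I read the hypothesis as $g^N(\omega) \in \interior \domain A$ almost surely. This matches the main text, and it is the only version for which the convex combination makes sense, since $g^N$ lives in the $\theta$-space. $\domain A$ is convex because $A$ is convex, and the interior of a convex set is convex. So for $\eta_t \in [0,1]$, $\theta_+$ is a convex combination of two points of $\interior \domain A$ and lies in $\interior \domain A$ almost surely. The endpoint cases $\eta_t \in \{0,1\}$ are immediate.

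For $(ii)$, I will use that $\interior \domain A$ is open. Pick $r > 0$ with the ball $B(\theta_t, r) \subset \interior \domain A$. Since $\theta_+ - \theta_t = \eta_t\,(g^{N_t}(\omega_t) - \theta_t)$, it suffices that $\eta_t \|g^{N_t}(\omega_t) - \theta_t\| < r$. Whenever $\|g^{N_t}(\omega_t)\| \le K$ almost surely, the choice $\bar\eta = r/(K + \|\theta_t\|)$ works. This covers, for instance, the subsampling estimator \eqref{eq:subsamplingEstimator}, which takes finitely many values.

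The main obstacle is this uniformity. Without an almost-sure bound on the estimator, such as Gaussian samples in the Bonnet--Price estimator, a single deterministic $\bar\eta$ cannot be extracted from the ball argument alone. I would first try to make the statement precise by allowing $\bar\eta$ to depend on $\omega_t$ and on the realization of $g^{N_t}(\omega_t)$; it is finite and positive almost surely since $g^{N_t}(\omega_t)$ is finite. Failing that, I would add a boundedness (or bounded-support) hypothesis on $g^N$, under which the explicit $\bar\eta$ above is deterministic. The rest of the argument is routine bookkeeping with Propositions~\ref{prop:expFamilyProperties} and~\ref{prop:BregmanProjection}.
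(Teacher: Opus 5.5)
Your argument is the paper's: pass to $\theta_+=(1-\eta_t)\theta_t+\eta_t g^{N_t}(\omega_t)$. For $(i)$ you use convexity of $\interior\domain A$, reading the hypothesis as $g^N(\omega)\in\interior\domain A$ exactly as the paper's proof does; for $(ii)$ you use an open ball around $\theta_t$; in both cases you conclude via $\nabla A$ and Proposition~\ref{prop:BregmanProjection}.

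Your caveat on $(ii)$ is well founded and is not a gap on your side. The paper picks $\bar\eta$ from $\|g^{N}(\omega_t)-\theta_t\|$ without addressing uniformity, so its $\bar\eta$ silently depends on $\omega_t$ and on the realization. A deterministic $\bar\eta$ genuinely needs something like your almost-sure bound: for $\domain A=(-\infty,0)$ and $g=\theta_t+Z$ with $Z$ Gaussian, every $\eta_t>0$ leaves $\interior\domain A$ with positive probability.
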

Note that using similar arguments, point $(i)$ holds with $g^N(\omega) \in \domain A^*$ almost surely as long as $\eta_t < 1$.
\begin{proof}
    In order to prove $(i)$ and $(ii)$, we will each time prove that $\omega_+ \in \interior \domain A^*$ almost surely. Then, the result on $\omega_{t+1}$ will come from applying Proposition \ref{prop:BregmanProjection}.

    $(i)$ Recall that $(\theta_{t})_+ = \nabla A^*((\omega_{t})_+)$ and  $(\theta_{t})_+ = (1-\eta_t) \theta_t + \eta_t g^N(\omega_t)$, with $\eta_t \in (0,1]$. Since $\theta_t = \nabla A^*(\omega_t)$ with $A^*$ Legendre and $\omega_t \in \interior \domain A^*$, $\theta_t \in \interior \domain A$. As this is also assumed to be the case of $g^N(\omega_t)$, we obtain $(\theta_{t})_+ \in \interior \domain A$ from the convexity of this latter set, and by the Legendre property of $A$ and $A^*$, $(\omega_{t})_+ = \nabla A((\theta_{t})_+)$ is in $\interior \domain A^*$.

    $(ii)$ Similarly as in the proof of $(i)$, we have that $(\theta_{t})_+ = (1-\eta_t) \theta_t + \eta_t g^N(\omega_t)$, with $\theta_t \in \interior \domain A$. Since this latter set is open, there exists an open ball of radius $r > 0$, centered on $\theta_t$, that is included in $\interior \domain A^*$. We can also compute
    \begin{equation*}
        \| (\theta_{t})_+ - \theta_t \| \leq \eta_t \| g^N(\omega_t) - \theta_t \|.
    \end{equation*}
    Since $r > 0$, there exists $\bar{\eta} > 0$ such that $\eta_t \leq \bar{\eta}$ implies $\eta_t \| g^N(\omega_t) - \theta_t \| < r$, thus showing that $(\theta_{t})_+ \in \interior \domain A^*$ and establishing our result.

\end{proof}
\section{PROOFS OF SECTION \ref{sec:convergence}}

\label{app:proof4}

\subsection{Proof of Proposition~\ref{prop:cvgceStochMD}}
\begin{proof}
Consider a current iterate $\omega_t \in \domain A^*$. We are now going to prove a result analogous to \citep[Lemma 4]{dragomir2021}. In order to do so, we compute
\begin{align*}
    d_{A^*}(\omega_*&, \omega_t) - \eta_t d_{f_\pi^D}(\omega_*, \omega_t) - \eta_t (f_\pi^D(\omega_t) - f_\pi^D(\omega_*)) + d_{A^*}(\omega_t, (\omega_{t})_+))\\
    &= d_{A^*}(\omega_*, \omega_t) + \eta_t \langle \nabla f_\pi^D(\omega_t), \omega_* - \omega_t\rangle + d_{A^*}(\omega_t, (\omega_{t})_+))\\
    &= A^*(\omega_*) - A^*((\omega_{t})_+)) - \langle \nabla A^*(\omega_t), \omega_* - \omega_t \rangle - \langle \nabla A^*((\omega_{t})_+)), \omega_t - (\omega_{t})_+) \rangle + \eta_t \langle \nabla f_\pi^D(\omega_t), \omega_* - \omega_t\rangle\\
    &= A^*(\omega_*) - A^*((\omega_{t})_+)) - \langle \nabla A^*(\omega_t), \omega_* - (\omega_{t})_+) \rangle + \eta_t \langle g^N(\omega_t), \omega_t - (\omega_{t})_+) \rangle + \eta_t \langle \nabla f_\pi^D(\omega_t), \omega_* - \omega_t\rangle.
\end{align*}
Now, taking the expectation conditionally on $\omega_t$ in the above, and using Assumption \ref{assumption:estimator}, we obtain
\begin{align*}
    d_{A^*}(\omega_*&, \omega_t) - \eta_t d_{f_\pi^D}(\omega_*, \omega_t) - \eta_t (f_\pi^D(\omega_t) - f_\pi^D(\omega_*)) + \mathbb{E}[d_{A^*}(\omega_t, (\omega_{t})_+))]\\
    &= A^*(\omega_*) - \mathbb{E}[A^*((\omega_{t})_+))] - \mathbb{E}[\langle \nabla A^*(\omega_t), \omega_* - (\omega_{t})_+) \rangle] + \eta_t \mathbb{E}[ \langle \nabla f_\pi^D(\omega_t), \omega_* - (\omega_{t})_+ \rangle]\\
    &= A^*(\omega_*) - \mathbb{E}[A^*((\omega_{t})_+))] - \mathbb{E}[\langle \nabla A^*((\omega_{t})_+)), \omega_* - (\omega_{t})_+\rangle]\\
    &= \mathbb{E}[d_{A^*}(\omega_*, (\omega_{t})_+))].
\end{align*}
The above yields, after rearranging and applying the $m$-strong convexity property of $f^D_\pi$ relatively to $A^*$, the following:
\begin{align*}
    \mathbb{E}[d_{A^*}(\omega_*, (\omega_{t})_+)] - (1 - m\eta_t)d_{A^*}(\omega_*, \omega_t)\leq \eta_t \left(f^D_\pi(\omega_*) - f^D_{\pi,\eta_t,N_t}(\omega_t) \right),
\end{align*}
where we have used the definition of $f^D_{\pi,\eta_t,N_t}$ in Definition \ref{def:variance}.
Since $f^D_\pi(\omega_*) - f^D_{\pi,\eta,N_t}(\omega_t) \leq \eta_t \sigma_{\eta_t,N_t}^2$, we finally obtain that $\mathbb{E}[d_{A^*}(\omega_*, (\omega_{t})_+)] - (1 - m\eta_t)d_{A^*}(\omega_*, \omega_t)\leq \eta_t^2 \sigma_{\eta_t,N_t}^2$. 

We now study the effect of the projection step. We have from Proposition~\ref{prop:BregmanProjection} and \cite[Proposition 3.32 (b)]{bauschke2003} that $\omega_*$ is a fixed point of $\proj_C^{A^*}$. Thus, we have from \cite[Prop.~3.3]{bauschke2003} and \cite[Cor.~3.35]{bauschke2003} that for any $\omega \in \domain A^*$, $d_{A^*}(\omega_*, \proj_C^{A^*}(\omega)) \leq d_{A^*}(\omega_*, \omega)$. Therefore, $d_{A^*}(\omega_*, \omega_{t+1}) \leq d_{A^*}(\omega_*, (\omega_{t})_+)$
from which we deduce that
\begin{equation}
    \label{eq:oneIterateEffect}
    \mathbb{E}\left[ d_{A^*}(\omega_*, \omega_{t+1}) \right] \leq (1-m \eta_t) d_{A^*}(\omega_*, \omega_t) + \eta_t^2 \sigma^2_{\eta_t,N_t}.
\end{equation}

We can obtain the first result with constant sample/batch size $N_t \equiv N$ by iterating  inequality \eqref{eq:oneIterateEffect}.

We now turn to the result with varying sample/batch size and under the assumption $\sigma^2_{\eta,N} \leq \frac{V}{N}$ for some $V > 0$ uniformly in $\eta \in (0,m^{-1}]$. Unrolling the inequality \eqref{eq:oneIterateEffect} with this additional assumption from $t=0$ to $T$ yields
\begin{equation}
    \label{eq:decompositionErrorUpdate}
    \mathbb{E}\left[ d_{A^*}(\omega_{T}, \omega_*) \right] \leq (1 - m \eta)^T \mathbb{E}\left[ d_{A^*}(\omega_{0}, \omega_*) \right] + \eta^2 V \sum_{t=0}^T \frac{(1 - m\eta)^{T-t}}{N_t}.
\end{equation}
We can now perform a change of variable $t^\prime = T - t$, so that the sum is equal to:
\begin{align}
    \sum_{t=0}^T \frac{(1 - m\eta)^{t}}{N_{T-t}} &= \sum_{t=0}^{\thalf} \frac{(1 - m\eta)^{t}}{N_{T-t}} + \sum_{t=\thalf+1}^T \frac{(1 - m\eta)^{t}}{N_{T-t}}\\
    &\leq \sum_{t=0}^{\thalf} \frac{(1 - m\eta)^{t}}{N_{T-\thalf}} + \sum_{t=\thalf+1}^T (1 - m\eta)^{t}.
\end{align}
where we used the fact that $N_t$ is monotonically increasing and $N_0 = 1$. In particular, by plugging $N_t = (t+1)^\gamma$, we obtain:
\begin{align}
    \sum_{t=0}^T \frac{(1 - m\eta)^{t}}{N_{T-t}} &\leq \frac{1}{(T - \thalf + 1)^\gamma}\sum_{t=0}^{\thalf} (1 - m\eta)^{t} + \left((1 - m\eta)^{\thalf+1}\right) \sum_{t=0}^{T - \thalf - 1} (1 - m\eta)^{t}\\
    &\leq \frac{1}{(T - \thalf + 1)^\gamma} \frac{1}{m \eta} + \left((1 - m\eta)^{\frac{T+1}{2}}\right) \frac{1}{m\eta}.
\end{align}
Since $\thalf - 1 < \frac{T}{2} \leq \thalf$, we can finally obtain that $T - \thalf + 1 > \frac{T}{2}$, yielding
\begin{equation}
    \sum_{t=0}^T \frac{(1 - m\eta)^{t}}{N_{T-t}} \leq \frac{1}{m \eta} \left( \frac{2^\gamma}{T^\gamma} + (1-m \eta)^{\frac{T+1}{2}} \right).
\end{equation}
We finally obtain the result by using the above in Eq.~\eqref{eq:decompositionErrorUpdate}. Remark that $T/2$ can actually be replaced by any $\alpha T$ for $0 < \alpha < 1$. 
\end{proof}

\subsection{Proof of Proposition~\ref{prop:cvgceStochMDDecreasing}}
\begin{proof}
    We have from Eq.~\eqref{eq:oneIterateEffect} that at any $t \in \mathbb{N}$ the iterates of Algorithm~\ref{alg} satisfy
    \begin{equation*}
        \mathbb{E}\left[ d_{A^*}(\omega_*, \omega_{t+1}) \right] \leq (1-m \eta_t) d_{A^*}(\omega_*, \omega_t) + \eta_t^2 \sigma^2_{\eta_t,N_t}.
    \end{equation*}
    Taking expectation in the above and dividing by $\eta_t$ yields
    \begin{equation*}
        0 \leq \eta_t \sigma^2_{\eta_t, N_t} + \left(\frac{1}{\eta_t} -m \right) \mathbb{E}\left[ d_{A^*}(\omega_*, \omega_t) \right] - \frac{1}{\eta_t}\mathbb{E}\left[ d_{A^*}(\omega_*, \omega_{t+1}) \right].
    \end{equation*}
    Then, using that $\eta_t = \frac{2}{m(t+2)}$, one obtains
    \begin{equation*}
        0 \leq \frac{2}{m(t+2)}\sigma^2_{\eta_t, N_t} + \frac{m}{2} t \mathbb{E}\left[ d_{A^*}(\omega_*, \omega_t) \right] - \frac{m}{2} (t+2) \mathbb{E}\left[ d_{A^*}(\omega_*, \omega_{t+1}) \right].
    \end{equation*}
    Then, multiplying by $t+1$  yields
    \begin{equation*}
        0 \leq \frac{2(t+1)}{m(t+2)}\sigma^2_{\eta_t, N_t} + \frac{m}{2} t(t+1) \mathbb{E}\left[ d_{A^*}(\omega_*, \omega_t) \right] - \frac{m}{2} (t+1) (t+2) \mathbb{E}\left[ d_{A^*}(\omega_*, \omega_{t+1}) \right].
    \end{equation*}
    Finally, using our additional assumption on the variance, it comes $\frac{t+1}{t+2}\sigma_{\eta_t,N_t}^2 \leq \sigma_{\eta_t,N_t}^2 \leq \frac{V}{N_t}$ and summing  for $t \in \llbracket 0, T \rrbracket$, we obtain that
    \begin{equation*}
        0 \leq \frac{4 V}{m^2} \sum_{t=0}^T \frac{1}{N_t} - (T+1)(T+2) \mathbb{E}\left[ d_{A^*}(\omega_*, \omega_{T+1}) \right]
    \end{equation*}
    from which we obtain the result. 
\end{proof}

\subsection{Proof of Proposition~\ref{prop:LERCConsequences}}

\begin{proof}
    We first state a preliminary result. Recall from Equation \eqref{def:lulu} that for any $\omega \in \domain A^*$, $f_\pi^D(\omega) = A^*(\omega) - \mathbb{E}_{q_\omega}[\log \pi(X)]$. Because $\pi = q_{\theta_\pi} \in \mathcal{Q}_\pi$, $\mathbb{E}_{q_\omega}[\log \pi(X)] = \langle \mathbb{E}_{q_\omega}[\Gamma_\pi(X)], \theta_\pi \rangle - A_\pi(\theta_\pi)$. Using the LERC, we have that $\mathbb{E}_{q_\omega}[\Gamma_\pi(X)] = L_\pi \mathbb{E}_{q_\omega}[\Gamma(X)] = L_\pi \omega$. We thus conclude that 
    \begin{align}
        f_\pi^D(\omega) &= A^*(\omega) - \langle L_\pi^\top \theta_\pi, \omega \rangle + A_\pi(\theta_\pi)\; .
        \label{eq:p6}
    \end{align}

    $(i)$ We first show that our Problem \eqref{eq:VIExpFamConstraint} admits solutions that are necessarily in $\interior \domain A^*$. Remark that solving Problem \eqref{eq:VIExpFamConstraint} is equivalent to minimizing $\omega \longmapsto f_\pi^D(\omega) + \iota_C(\omega)$ where $\iota_C$ is such that
    \begin{equation*}
        \iota_C(\omega) = \begin{cases}
            0 \text{ if } \omega \in C,\\
            +\infty \text{ otherwise}.
        \end{cases}
    \end{equation*}
    Because of \ref{assumption:steep} and Proposition \ref{prop:expFamilyProperties}, and using \eqref{eq:p6}, $f_\pi^D$ is lower semi-continuous. Because of \ref{assumption:constraint}, $\iota_C$ is also lower semi-continuous. Therefore, $f_\pi^D + \iota_C$ is lower semi-continuous. Further, $f_\pi^D + \iota_C$ is coercive if $C$ is compact ($\mathcal{H}$ being finite-dimensional, this implies $C$ is bounded) or if $L_\pi^\top \theta_\pi \in \interior \domain A$, from \cite[Fact 2.11]{bauschke1997}. Therefore, $f_\pi^D + \iota_C$ is lower semi-continuous and coercive, so there exist minimizers. 
    
    Now, let us show that these solutions necessarily belong to $\interior \domain A^*$. Assumption \ref{assumption:constraint} implies that $C \cap \interior \domain f_\pi^D \neq \emptyset$ because of the LERC and \eqref{eq:p6}. So we can apply \citep[Proposition 6.19 (vii)]{bauschke2011} and \citep[Proposition 26.5 (a)]{bauschke2011} to have that at any minimizer $\omega_*$, $\partial A^*(\omega_*) \neq \emptyset$, where $\partial A^*$ is the subdifferential of $A^*$ \citep[Chapter 16]{bauschke2011}. This necessarily implies that $\omega_* \in \interior \domain A^*$, because due to $A^*$ being Legendre (from Lemma \ref{lemma:propertiesA} and Proposition \ref{prop:legendreProperties}), $\domain \partial A^* = \interior \domain A^*$ from \cite[Theorem 26.1]{rockafellar1970}. Thus, we have shown that solutions of Problem \eqref{eq:VIExpFamConstraint} exist and are in $\interior \domain A^*$.

    Finally, let us prove the uniqueness. Because of the constraint and previous points, possible solutions are in $C \cap \interior \domain A^*$. Because of Assumption \ref{assumption:steep} and Proposition \ref{prop:expFamilyProperties}, $A^*$ is strictly convex on $\interior \domain A^*$, so $f_\pi^D$ is also strictly convex on $\interior \domain A^*$ due to the LERC and \eqref{eq:p6}. Additionally, Assumption \ref{assumption:constraint} implies that $\iota_C$ is convex on $\mathcal{H}$. Thus, $f_\pi^D + \iota_C$ is strictly convex on $C \cap \interior \domain A^*$, showing that there exist only a unique solution $\omega_*$ to Problem \eqref{eq:VIExpFamConstraint}, and establishing the result.

    We now show that if $L_\pi^\top \theta_\pi \in \interior \domain A$, then $\omega_* = \proj_C^{A^*}(\nabla A(L_\pi^\top \theta_\pi))$. Because of \eqref{eq:p6}, the fact that $A^*$ is Legendre, and that $\omega_*$ minimizes $\omega \longmapsto f_\pi^D(\omega) + \iota_C(\omega)$, $\omega_*$ satisfies the optimality conditions
    \begin{equation*}
        0 \in \nabla A^*(\omega_*) - \nabla A^*(\nabla A(L_\pi^\top \theta_\pi)) + \partial \iota_C(\omega_*),
    \end{equation*}
    where $\partial \iota_C$ is the subdifferential of $\iota_C$ \citep[Chapter 16]{bauschke2016}. We can thus recognize that $\omega_*$ satisfies the optimality conditions associated to the minimization of $\omega \longmapsto d_{A^*}(\omega, \nabla A(L_\pi^\top \theta_\pi)) + \iota_C(\omega)$, which is the optimization problem associated with the computation of $\proj_C^{A^*}(\nabla A(L_\pi^\top \theta_\pi))$. Because of Proposition \ref{prop:BregmanProjection}, this problem admits a unique solution that satisfies the optimality conditions, so we can deduce that $\omega_* = \proj_C^{A^*}(\nabla A(L_\pi^\top \theta_\pi))$, showing our result.

    $(ii)$ We can use Equation \eqref{eq:p6} above so that
    $f^D_\pi$ is the sum of $A^*$ and an affine term in $\omega$. Since the Bregman divergence induced by an affine function is zero and the Bregman divergence induced by a sum of functions is the sum of the Bregman divergences induced by each function, we obtain that $d_{f^D_\pi} = d_{A^*}$, thus showing the result.
\end{proof}

\subsection{Proof of Proposition~\ref{prop:varianceBonnetPrice}}
\begin{proof}
The target distribution $\pi$ is here assumed  to be a  Gaussian distribution with mean $\mu_\pi$ and covariance $\Sigma_\pi$. 
As ${\cal Q}$ is the Gaussian family, then
     $\pi$ satisfies the LERC  with respect to ${\cal Q}$, see Definition~\ref{def:linearEnlargement}.
     It follows that $f^D_\pi$ is $1$-smooth relatively to $A^*$ due to Proposition~\ref{prop:LERCConsequences} $(ii)$. We thus have from Proposition~\ref{prop:varianceBoundHadrien}  and Proposition~\ref{prop:expFamilyProperties} (iv) that
    \begin{equation*}
        \sigma_{\eta,N}^2 \leq \frac{1}{\eta^2} \sup_{\omega \in C \cap \interior \domain A^*} \left\{ \mathbb{E}\left[ d_A(\theta_+, \Bar{\theta}_+) \right] \right\},
    \end{equation*}
    where  $\theta_+ = (1-\eta) \theta + \eta g^N(\omega)$ and   $\Bar{\theta}_+ = (1-\eta) \theta + \eta \nabla \mathbb{E}_{q_\omega}[\log \pi(X)]$, so that  $\mathbb{E}[\theta_+]= \Bar{\theta}_+$. 

Let us  first specify these latter expressions.
When $q_\omega$ is a Gaussian distribution, as already used in Section \ref{subsec:BonnetPrice}, we have more generally with some arbitrary differentiable function $l$ that
    \begin{align*}
        \nabla_{\omega_1} \mathbb{E}_{q_{\omega}}[l(X)] &= \mathbb{E}_{q_\omega}[\nabla l(X)] - \mathbb{E}_{q_\omega}[\nabla^2 l(X)]\mu\\
        \nabla_{\omega_2} \mathbb{E}_{q_\omega}[l(X)] &= \frac{1}{2} \mathbb{E}_{q_\omega}[\nabla^2 l(X)],
    \end{align*}
    which can be estimated with the following estimators
    \begin{align*}
        g^N(\omega)_1 &= \frac{1}{N} \sum_{n=1}^N \left( \nabla l(X_n) - \nabla^2 l(X_n) \mu \right)\\
        g^N(\omega)_2 &= \frac{1}{N} \sum_{n=1}^N \frac{1}{2} \nabla^2 l(X_n).
    \end{align*}
    These estimators can be checked to be unbiased, thus satisfying Assumption \ref{assumption:estimator}.

    Thus,  for $\log \pi(x) =  - \frac{1}{2}(x- \mu_\pi)^{\top} \Sigma_\pi^{-1}(x - \mu_\pi) + \text{constant}$, it comes
    \begin{align*}
        \nabla_{\omega_1} \mathbb{E}_{q_{\omega}}[\log \pi(X)] &= - \Sigma_\pi^{-1} (\mu - \mu_\pi) + \Sigma_\pi^{-1}\mu = \Sigma_\pi^{-1} \mu_\pi &
        g^N(\omega)_1 &= \Sigma_\pi^{-1} \left( \mu_\pi + \mu - \frac{1}{N} \sum_{n=1}^N X_n \right)\\
        \nabla_{\omega_2} \mathbb{E}_{q_\omega}[\log \pi(X)] &= -\frac{1}{2} \Sigma_\pi^{-1} &
        g^N(\omega)_2 &= -\frac{1}{2} \Sigma_\pi^{-1}.
    \end{align*}
    We can readily notice that $g^N(\omega) \in \interior \domain A$ almost surely, thus establishing \ref{assumption:wellposedness} using Proposition \ref{prop:sufficientWellPosedness}.
    
We can then also remark that $\Bar{\theta}_+ = (1-\eta) \theta + \eta \theta_\pi$ and denoting $\bar{\theta}_+ = (\bar{\theta}_{+,1}, \bar{\theta}_{+,2})$, $\theta_+ = (\theta_{+,1}, \theta_{+,2})$ that  $\bar{\theta}_{+,2} = \theta_{+,2}$.

    Going back to computing $d_A(\theta_+, \Bar{\theta}_+)$ in the above bound, it is equivalent to computing the KL divergence between 2 Gaussian distributions respectively defined by $\Bar{\theta}_+$ and $\theta_+$ (Proposition~\ref{prop:expFamilyProperties} (iv)). The expression of this KL is
    $$ KL(q_{\Bar{\theta}_+},q_{\theta_+}) = A(\theta_+)- A(\Bar{\theta}_+) - \langle\theta_+ -\Bar{\theta}_+,\Bar{\omega}_+\rangle\; .$$
    
    For  a Gaussian distribution with natural parameter $\theta = (\theta_1, \theta_2)$ the expression of the log-partition function is $A(\theta) = -  \theta_1^\top (4 \theta_2)^{-1} \theta_1 - \frac{1}{2} \logdet(-2\theta_2)$. It follows that using 
    $\bar{\theta}_{+,2} = \theta_{+,2}$,
$$A(\theta_+)- A(\Bar{\theta}_+) = \theta_{+,1}^\top \left(-4 \theta_{+,2} \right)^{-1} \theta_{+,1} - \bar{\theta}_{+,1}^\top \left(-4 \theta_{+,2} \right)^{-1} \bar{\theta}_{+,1}\;,$$
and
$$ \langle\theta_+ -\Bar{\theta}_+,\Bar{\omega}_+\rangle = \langle\theta_{+,1} -\Bar{\theta}_{+,1},\Bar{\omega}_{+,1}\rangle $$
where $\Bar{\omega}_{+,1}= \Bar{\mu}_+ =  \left(-2 \theta_{+,2} \right)^{-1} \bar{\theta}_{+,1}$.

    Combining both terms, it comes 
    \begin{equation*}
        d_A(\theta_+, \bar{\theta}_+) =  \theta_{+,1}^\top \left(-4 \theta_{+,2} \right)^{-1} \theta_{+,1} - \bar{\theta}_{+,1}^\top \left(-4 \theta_{+,2} \right)^{-1} \bar{\theta}_{+,1} - \langle 2 \left(-4 \theta_{+,2} \right)^{-1} \bar{\theta}_{+,1}, \theta_{1,+}- \bar{\theta}_{+,1}\rangle.
    \end{equation*}
    From here, we deduce, using that $\mathbb{E}[\theta_{1,+}]=\bar{\theta}_{+,1}$, 
    \begin{align*}
        \mathbb{E}\left[d_A(\theta_+, \bar{\theta}_+) \right] &= \mathbb{E}\left[ \theta_{+,1}^\top \left(-4 \theta_{+,2} \right)^{-1} \theta_{+,1} \right] - \mathbb{E}\left[ \theta_{+,1} \right]^\top \left(-4 \theta_{+,2} \right)^{-1} \mathbb{E}\left[ \theta_{+,1} \right]\\
        &= \mathbb{E}\left[ (\theta_{+,1} - \mathbb{E}\left[ \theta_{+,1} \right])^\top \left(-4 \theta_{+,2} \right)^{-1} (\theta_{+,1} - \mathbb{E}\left[ \theta_{+,1} \right])  \right]\\
        &= \eta^2 \mathbb{E}\left[ (g^N(\omega)_1 - \theta_{\pi,1})^\top \left(-4 \theta_{+,2} \right)^{-1} (g^N(\omega)_1 - \theta_{\pi,1})  \right]\\
        &= \eta^2 \mathbb{E}\left[ (\mu - \frac{1}{N}\sum_{n=1}^N X_n)^\top \Sigma_\pi^{-1}\left(-4 \theta_{+,2} \right)^{-1} \Sigma_\pi^{-1} (\mu - \frac{1}{N}\sum_{n=1}^N X_n)  \right]\\
        &= \frac{\eta^2}{N^2} \left\langle \Sigma_\pi^{-1}\left(-4 \theta_{+,2} \right)^{-1} \Sigma_\pi^{-1}, \mathbb{E}\left[ (\sum_{n=1}^N (X_n-\mu))(\sum_{n=1}^N (X_n-\mu))^\top \right] \right\rangle.
    \end{align*}
    Since $X_n - \mu \sim \mathcal{N}(0,\Sigma)$ for any $n \in \llbracket 1,N \rrbracket$ the matrix-valued random variable $(\sum_{n=1}^N (X_n-\mu))(\sum_{n=1}^N (X_n-\mu))^\top$ follows a Wishart distribution with $N$ degrees of freedom and scale matrix $\Sigma$, so its expectation is $N \Sigma$. Therefore, we have shown that
    \begin{equation*}
        \mathbb{E}\left[d_A(\theta_+, \bar{\theta}_+) \right] = \frac{\eta^2}{N} \left\langle \Sigma_\pi^{-1}\left(-4 \theta_{+,2} \right)^{-1} \Sigma_\pi^{-1}, \Sigma \right\rangle.
    \end{equation*}

    We now have to bound the above quantity on $\interior \domain A^*$. The trace formulation of the previous identity is: 
    
    \begin{equation*}
        \mathbb{E}\left[d_A(\theta_+, \bar{\theta}_+) \right] = \frac{\eta^2}{N} \tr \left( \Sigma_\pi^{-1}\left(-4 \theta_{+,2} \right)^{-1} \Sigma_\pi^{-1} \Sigma\right).
    \end{equation*}
    We now handle the term in $\theta_{+,2}$. By definition $-4 \theta_{+,2}=2(1-\eta)\Sigma^{-1} + 2\eta \Sigma_\pi^{-1}$. We introduce the map
    \begin{align*}
        g(\Sigma, \eta) &=  \frac{1}{2}\tr \left( \Sigma_\pi^{-1}\left((1-\eta) \Sigma^{-1} + \eta \Sigma_\pi^{-1} \right)^{-1} \Sigma_\pi^{-1}\Sigma\right)\\
        &= \frac{1}{2}\tr \left( \tilde{\Sigma}\left((1-\eta) I + \eta \tilde{\Sigma} \right)^{-1} \tilde{\Sigma}\right),
    \end{align*}
    with $ \tilde{\Sigma} = \Sigma^\frac{1}{2} \Sigma_\pi^{-1} \Sigma^\frac{1}{2}$. Expressing $\tilde{\Sigma} = Q^\top \Delta Q$, where $Q^\top Q = Q Q^\top = I$ and $\Delta$ is a diagonal matrix, we obtain: 
    \begin{align*}
        g(\Sigma, \eta) &= \frac{1}{2}\tr \left( Q^\top \Delta Q \left((1-\eta) Q^\top Q + \eta Q^\top \Delta Q \right)^{-1} Q^\top \Delta Q\right)\\
        &=  \frac{1}{2}\tr \left(Q^\top Q \Delta Q\left(Q^\top [(1-\eta) I + \eta \Delta] V \right)^{-1} Q^\top \Delta \right)\\
        &= \frac{1}{2}\tr \left( \Delta \left[(1-\eta) I + \eta \Delta\right]^{-1} \Delta\right)\\
        &=  \frac{1}{2} \sum_{i=1}^d \frac{\Delta_{ii}^2}{1 - \eta + \eta \Delta_{ii}}.
    \end{align*}
    One can directly notice that $g(\Sigma, \eta)$ remains bounded for any $\eta \in (0,1]$ as any of the $\Delta_{ii}$ goes to $0$. However, we need the $\Delta_{ii}$ to remain bounded to ensure boundedness of $g(\Sigma, \eta)$. While this is not true in general, we actually do not need to take the supremum over $ C \cap \interior \domain A^*$, but actually only need to take the value $g(\Sigma, \eta)$ at the point that achieves the supremum in Definition~\ref{def:variance} (which is achieved). While possible, this derivation is rather cumbersome. 

    Instead, suppose that $C = \mathcal{H}$, i.e., no constraints are applied (the constrained case will be treated later on). Then, we notice that at each step, the updates in $\Sigma$ are such that the updated values of $\Sigma$ will belong to the set $C' = \{ \Sigma \in \mathbb{S}^d,\, \Sigma = ((1-\eta)\Sigma_\pi^{-1} + \eta \Sigma_0)^{-1},\,\eta \in [0,1]\}$. Because $\Sigma_0,\Sigma_\pi \in \mathbb{S}^d_{>0}$ by assumption, $C' \subset \mathbb{S}^d_{>0}$ and is compact.
 
    The above ensures that the eigenvalues of $\Sigma$ are bounded from above, so that the $\Delta_{ii}$ are also bounded  from above. This implies that there exists $\widetilde{V} > 0$ such that
    \begin{equation*}
        \mathbb{E}\left[d_A(\theta_+, \bar{\theta}_+) \right] \leq \frac{\eta^2\widetilde{V}}{N},\,\forall \omega \in \interior \domain A^*.
    \end{equation*}
    We can therefore apply Proposition~\ref{prop:varianceBoundHadrien} to obtain the result. 

    In the case where $C \neq \mathcal{H}$, meaning that additional constraints are imposed, one can simply notice that we take a supremum on $C \cap \interior \domain A^*$ in the definition of $\sigma^2_{\eta, N}$ instead of a supremum on $\interior \domain A^*$, which is a bigger set. Therefore, the bound that we derived for $C = \mathcal{H}$ also holds when $C$ is smaller.
    
\end{proof}

\subsection{Proof of Proposition~\ref{prop:boundVarianceSubsampling}}
\begin{proof}
    We can readily check that $g^N$ is an unbiased estimator of $\nabla_\omega \mathbb{E}_{q_\omega}[\log \pi(X)]$ and that it belongs almost surely to $\interior \domain A^*$, allowing to apply Proposition \ref{prop:sufficientWellPosedness}. Thus, Assumptions \ref{assumption:wellposedness} and \ref{assumption:estimator} are satisfied.
    
    We are going to bound $\sigma_{\eta,N}^2$ using Proposition~\ref{prop:varianceBoundHadrien} and the fact that $f^D_\pi$ is $1$-smooth relatively to $A^*$ from Proposition~\ref{prop:LERCConsequences} $(ii)$. Proposition~\ref{prop:varianceBoundHadrien} leads to 
    \begin{align*}
        \sigma_{\eta,N}^2 &\leq \frac{1}{\eta^2} \sup_{\omega \in \domain A^*} \left\{ \mathbb{E} \left[ d_{A^*}(\overline{\omega}_+, \omega_+) \right] \right\}\\
        &\leq \frac{1}{\eta^2} \sup_{\omega \in \domain A^*} \left\{ \mathbb{E} \left[ d_{A^*}(\overline{\omega}_+, \omega_+) + d_{A^*}(\omega_+, \overline{\omega}_+) \right] \right\}
        \end{align*}
        The right-hand side above is then equal to
        \begin{align*}
        \frac{1}{\eta^2} \sup_{\omega \in \domain A^*} \left\{ \mathbb{E} \left[ \langle \nabla A^*(\overline{\omega}_+) - \nabla A^*(\omega_+), \overline{\omega}_+ - \omega_+ \rangle \right] \right\}
        &= \frac{1}{\eta} \sup_{\omega \in \domain A^*} \left\{ \mathbb{E} \left[ \langle g^N(\omega) - \nabla f^D_\pi(\omega), \overline{\omega}_+ - \omega_+ \rangle \right] \right\}
    \end{align*}
    so that eventually
    \begin{align*}
        \sigma_{\eta,N}^2 &\leq 
        \frac{1}{\eta} \sup_{\omega \in \domain A^*} \left\{ \mathbb{E} \left[ \langle g^N(\omega) - \nabla f^D_\pi(\omega), \overline{\omega}_+ - \omega_+ \rangle \right] \right\}
    \end{align*}
    
    This latter right-hand side expression has been proposed by \cite{hanzely2021} as a variance-like quantity to analyse mirror descent. In the case of NGVI for Bayesian linear regression with Gaussian prior, likelihood, and approximating family, this quantity has been upper-bounded in \cite[Lemma 4]{wu2024} by the constant
    \begin{equation*}
        V_2 = (s s_1 + \frac{1}{2} s^2 s_2 + 2 s^2 b \sqrt{s_1 s_2} M + s^3 b^2 s_2 M^2) \frac{M^2}{\sigma^4 N},
    \end{equation*}
    where $s,s_1,s_2,b$ are given by the following expressions,
    \begin{align*}
        s &= \max \{ 1, \| \Sigma_0 \| \}, & b &= \max_{1 \leq n \leq N} \| y_m z_m \|,\\
        s_1 &= \mathbb{E}_{\mathcal{U}_M} \left[ \left\| y_X z_X - \frac{1}{N} \sum_{m=1}^M y_m z_m \right\|^2 \right], & s_2 &= \mathbb{E}_{\mathcal{U}_M} \left[ \left\| z_X z_X^\top - \frac{1}{N} \sum_{m=1}^M z_m z_m^\top \right\|^2 \right].
    \end{align*}

\end{proof}

\section{PROJECTION OPERATORS}
\label{app:projections}

In Section \ref{sec:presentation}, we have introduced the possibility of further restricting the search space from distributions $q_\theta \in \mathcal{Q}$ with $\theta \in \interior \domain A^*$ to distributions $q_\theta \in \mathcal{Q}$ with $\theta \in C \cap \interior \domain A^*$. This can be used to safeguard iterates of Algorithm \ref{alg} from problematic behaviours (e.g., covariance matrices becoming singular) that may be due, for instance, to a number of samples that is too low. Another interest of this constraint is to incorporate information about the target distribution $\pi$. This constraint translates to an additional projection step in Algorithm \ref{alg}. We now give two situations where this projection operator admits a closed form.

\subsection{Gaussian distributions with constrained covariance matrices}
\label{subsec:projectionBoundedEigen}

In the case of Gaussian distributions, an example of a constraints space $C$ with an explicit projection operator is defined for $0 < \alpha < \beta$ as:
\begin{equation}
    \label{eq:boundedEigenvalSet}
    C = \{ (\mu, \Sigma + \mu \mu^{\top}) \in \mathcal{H} \text{ s.t. } \alpha I \preccurlyeq \Sigma \preccurlyeq \beta I\}.
\end{equation}

\begin{proposition}
    \label{prop:projectionEigenvalue}
    Consider $\mathcal{Q}$ the exponential family of Gaussian distributions, the set $C$ defined in \eqref{eq:boundedEigenvalSet}, and $\omega = (\mu, \Sigma + \mu \mu^\top)$. Then, $\proj_C^{A^*}(\omega) = (\mu, \Sigma_P + \mu \mu^\top)$ with $\Sigma_P$ being a version of $\Sigma$ whose eigenvalues larger (resp.~smaller) than $\beta$ (resp.~$\alpha$) have been replaced by $\beta$ (resp.~$\alpha$).
\end{proposition}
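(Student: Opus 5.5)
Since $d_{A^*}(\omega',\omega)=\text{KL}(q_{\omega'},q_\omega)$ by Proposition~\ref{prop:expFamilyProperties} $(iv)$, the plan is to compute $\proj_C^{A^*}(\omega)$ as the minimizer over $(\mu',\Sigma')$ with $\alpha I\preccurlyeq\Sigma'\preccurlyeq\beta I$ of the Gaussian KL
\begin{equation*}
\text{KL}(\mathcal{N}(\mu',\Sigma'),\mathcal{N}(\mu,\Sigma)) = \frac12\Big(\tr(\Sigma^{-1}\Sigma') + (\mu-\mu')^\top\Sigma^{-1}(\mu-\mu') - \logdet(\Sigma^{-1}\Sigma') - d\Big).
\end{equation*}
The constraint involves only $\Sigma'$, and the mean term is nonnegative and vanishes only at $\mu'=\mu$. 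Hence the optimal mean is $\mu'=\mu$, and the problem reduces to minimizing $\phi(\Sigma')=\tr(\Sigma^{-1}\Sigma')-\logdet\Sigma'$ over $\alpha I\preccurlyeq\Sigma'\preccurlyeq\beta I$. Existence and uniqueness of the minimizer are already guaranteed by Proposition~\ref{prop:BregmanProjection}: $C$ is closed and convex in $\omega$-coordinates, being the preimage of a convex set of covariances under $\omega\mapsto\omega_2-\omega_1\omega_1^\top$, and this convexity should be checked. It therefore suffices to show that $\Sigma_P$ satisfies the optimality conditions.

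For the reduced problem, diagonalize $\Sigma=U\Lambda U^\top$. The objective $\phi$ and the constraint set are both invariant under $\Sigma'\mapsto U^\top\Sigma' U$, so we may take $\Sigma=\Lambda$ to be diagonal. Write $\Sigma'=(s_{ij})$ with eigenvalues $\lambda_i'$. Then $\tr(\Lambda^{-1}\Sigma')=\sum_i s_{ii}/\lambda_i$, and Hadamard's inequality gives $\logdet\Sigma'\le\sum_i\log s_{ii}$. Replacing $\Sigma'$ by its diagonal part leaves the trace term unchanged and does not increase $-\logdet$. The diagonal part also stays feasible, because each $s_{ii}=e_i^\top\Sigma' e_i$ lies in $[\alpha,\beta]$. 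We may therefore restrict to diagonal $\Sigma'=\diag(s_i)$. The problem then separates into scalar problems: minimize $s/\lambda_i-\log s$ over $s\in[\alpha,\beta]$. Each such function is strictly convex with unconstrained minimizer $s=\lambda_i$, so the constrained minimizer is the clipped value $\min(\max(\lambda_i,\alpha),\beta)$. Rotating back yields $\Sigma_P=U\,\mathrm{clip}(\Lambda)\,U^\top$, and the projection is $(\mu,\Sigma_P+\mu\mu^\top)$.

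The main obstacle is the reduction to diagonal $\Sigma'$, that is, showing that the optimal $\Sigma'$ commutes with $\Sigma$. The Hadamard argument handles this cleanly. An alternative is to verify the KKT conditions directly: the gradient $\Sigma^{-1}-\Sigma_P^{-1}$ is diagonal in the basis $U$, with signs matching the active constraints. A secondary check is the convexity of $C$ in the expectation parameters, which is needed to invoke Proposition~\ref{prop:BregmanProjection}.
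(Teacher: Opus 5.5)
Your argument is correct, and it departs from the paper at the key step. Both proofs rewrite $d_{A^*}(\omega',\omega)$ as the Gaussian KL in $(\mu',\Sigma')$, read off $\mu'=\mu$, and finish with the separable scalar problems $\min_{s\in[\alpha,\beta]}\{s/\lambda_i-\log s\}$, solved by clipping. To reach those scalar problems, the paper invokes the spectral result \cite[Theorem~1]{benfenati2020}: since $\iota_{\widetilde{C}}$ and $-\logdet$ depend only on eigenvalues, a minimizer shares the eigenvectors of $\Sigma^{-1}$. You instead diagonalize $\Sigma$ and use Hadamard's inequality. You also note that the diagonal entries of any $\Sigma'$ with $\alpha I\preccurlyeq\Sigma'\preccurlyeq\beta I$ lie in $[\alpha,\beta]$. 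Hence passing to the diagonal part keeps feasibility, leaves the trace term unchanged, and does not increase $-\logdet$.

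Your route is elementary and self-contained, and it gives uniqueness directly:
\begin{itemize}
\item equality in Hadamard forces $\Sigma'$ to be diagonal;
\item strict convexity of the scalar problems forces that diagonal to equal the clipped eigenvalues;
\item the mean term is positive unless $\mu'=\mu$.
\end{itemize}
The paper's route is shorter given the external theorem and extends directly to other constraints or penalties that depend only on eigenvalues. Yours exploits the specific structure of $-\logdet$ and of the box $[\alpha I,\beta I]$.

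You should drop the claim that $C$ is convex in $\omega$-coordinates. It is false, so Proposition~\ref{prop:BregmanProjection} cannot be invoked for existence and uniqueness. The lower constraint $\omega_2-\omega_1\omega_1^\top\succcurlyeq\alpha I$ is convex, by a Schur complement argument. The upper constraint is not. For $v\neq 0$, the points $(\pm v,\beta I+vv^\top)$ belong to $C$, since they have mean $\pm v$ and covariance $\beta I$. Their midpoint $(0,\beta I+vv^\top)$ has covariance $\beta I+vv^\top\not\preccurlyeq\beta I$. (Incidentally, this means this $C$ does not literally satisfy \ref{assumption:constraint}.)

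For the same reason, your step ``it suffices to check optimality conditions'' is not justified in $\omega$-coordinates. You do not need it. Your Hadamard chain, together with the equality cases above, already proves existence and uniqueness of the minimizer. Alternatively, the problem is convex in $(\mu',\Sigma')$ coordinates: the objective is strictly convex and the feasible set $\mathbb{R}^d\times\{\alpha I\preccurlyeq\Sigma'\preccurlyeq\beta I\}$ is convex. So your KKT alternative is valid if carried out in those coordinates. The paper's proof likewise works directly in $(\mu',\Sigma')$ and never uses convexity of $C$.
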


\begin{proof}
    We recall that $\proj_C^{A^*}(\omega) = \argmin_{\omega'} \left\{ \iota_C(\omega') + d_{A^*}(\omega', \omega) \right\}$. Since $\omega \in C$ depends only on $\Sigma$, we rewrite the optimisation problem associated to the projection operator in terms of the mean and covariance matrices, where $\omega= (\mu, \Sigma + \mu \mu^\top)$. We then get that $(\mu_P, \Sigma_P + \mu_P \mu_P^\top) = \proj_C^{A^*}(\omega)$ are the solutions to the optimisation problem
    \begin{equation*}
        \min_{\mu' \in \mathbb{R}^d, \Sigma' \in \mathbb{S}^d} \left\{ \iota_{\widetilde{C}}(\Sigma') + \frac{1}{2} \left(- \logdet \Sigma' + (\mu - \mu')^\top \Sigma^{-1}(\mu - \mu') + \langle \Sigma^{-1}, \Sigma \rangle_{\mathbb{S}^d} \right) \right\}.
    \end{equation*}
    We thus observe that $\mu_P = \mu$ and that $\Sigma_P$ is a solution to
    \begin{equation*}
        \min_{\Sigma' \in \mathbb{S}^d} \left \{ \iota_{\widetilde{C}}(\Sigma') - \frac{1}{2} \logdet \Sigma' + \langle \Sigma^{-1}, \Sigma' \rangle_{\mathbb{S}^d} \right \}.
    \end{equation*}
    We will now show using \cite[Theorem 1]{benfenati2020} that this problem can be reduced to a problem on $\mathbb{R}^d$ involving only the eigenvalues of $\Sigma'$.
    
    If we define by $\lambda' \in \mathbb{R}^d$ the set of eigenvalues of $\Sigma'$, we can remark that $\iota_{\widetilde{C}}(\Sigma') = \sum_{i=1}^d \iota_{[\alpha, \beta]}(\lambda'_i)$ and $-\frac{1}{2} \logdet \Sigma' = -\frac{1}{2} \sum_{i=1}^d \log \lambda_i$, meaning that these two functions only depend on the eigenvalues of $\Sigma'$. Further, each of these functions is lower semicontinuous, their sum is coercive, and $\Sigma$ is such that there exists an orthonormal matrix $Q$ and vector $\delta \in \mathbb{R}^d$ such that $\Sigma^{-1} = Q\diag(\delta) Q^{\top}$, meaning that we can invoke \cite[Theorem 1]{benfenati2020} to show that $\Sigma_P = Q \diag(\lambda_P) Q^\top$ with $\lambda_P$ being a solution to
    \begin{equation*}
        \min_{\lambda' \in \mathbb{R}^d} \left\{ \sum_{i=1}^d \iota_{[\alpha, \beta]}(\lambda'_i) - \frac{1}{2} \sum_{i=1}^d \log \lambda'_i + \frac{1}{2}\sum_{i=1}^d \lambda_i \delta_i \right\}.
    \end{equation*}
    This problem has a separable structure, implying that for each $i \in \llbracket 1,d \rrbracket$,
    \begin{equation*}
        (\lambda_P)_i = \argmin_{\alpha \leq \lambda' \leq \beta} \left\{ -\log \lambda' + \delta_i \lambda' \right\}
    \end{equation*}
    Since for each $i \in \llbracket 1,d \rrbracket$, $\delta_i = \frac{1}{\lambda_i}$ where $\lambda \in \mathbb{R}^d_{>0}$ is the vector of eigenvalues of $\Sigma$, we can finally check the result by solving the above problem using the Karush-Kuhn-Tucker conditions.
\end{proof}

\subsection{Diagonal Gaussians with non-negative means}
\label{subsec:projectionNonNegMean}

In this section, we show that when $\mathcal{Q}$ is the family of Gaussians with diagonal covariance, introduced in Section \ref{subsec:diagonalGaussian}, the projection to
\begin{equation}
    \label{eq:nonNegativityConstraint}
    C = \{(\mu, \sigma^2 + \mu \bullet \mu) \in \mathbb{R}^d \times \mathbb{R}^d \text{ s.t. } \mu \in \mathbb{R}^d_{\geq 0}\}
\end{equation}
admits an explicit expression.

\begin{proposition}
    Consider the exponential family of Gaussians with diagonal covariance, the set $C$ defined in \eqref{eq:nonNegativityConstraint}, and $\omega = (\mu, \sigma^2 + \mu \bullet \mu)$. Then, $\proj_C^{A^*}(\omega) = (\mu_P, \sigma^2 + \mu_P \bullet \mu_P)$ with $(\mu_P)_i = \max(0, \mu_i)$ for $i \in \llbracket 1,d \rrbracket$.
\end{proposition}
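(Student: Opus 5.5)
We follow the proof of Proposition~\ref{prop:projectionEigenvalue}. Write the projection problem $\proj_C^{A^*}(\omega) = \argmin_{\omega'} \{ \iota_C(\omega') + d_{A^*}(\omega', \omega)\}$ in mean/variance coordinates. By Proposition~\ref{prop:expFamilyProperties}~(iv), $d_{A^*}(\omega', \omega) = \text{KL}(q_{\omega'}, q_\omega)$. For diagonal Gaussians this divergence separates across coordinates. With $\omega' = (\mu', \sigma'^2 + \mu'\bullet\mu')$ it reads
\begin{equation*}
    \frac{1}{2}\sum_{i=1}^d \left( \frac{\sigma'^2_i}{\sigma^2_i} - \log \frac{\sigma'^2_i}{\sigma^2_i} - 1 + \frac{(\mu'_i - \mu_i)^2}{\sigma^2_i} \right).
\end{equation*}

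The constraint set $C$ in \eqref{eq:nonNegativityConstraint} constrains only $\mu'$, through $\mu' \in \mathbb{R}^d_{\geq 0}$, and leaves $\sigma'^2 \in \mathbb{R}^d_{>0}$ free. The objective therefore splits into two independent problems, one in $\sigma'^2$ and one in $\mu'$.

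The variance part $\sigma'^2_i/\sigma^2_i - \log(\sigma'^2_i/\sigma^2_i)$ is minimized at $\sigma'^2_i = \sigma^2_i$. This gives $\sigma_P^2 = \sigma^2$.

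The mean part is $\sum_i (\mu'_i - \mu_i)^2/\sigma^2_i$ under the constraint $\mu'_i \geq 0$. It is separable, and in each coordinate it is a one-dimensional weighted Euclidean projection onto $[0,\infty)$. Its minimizer is $\max(0,\mu_i)$, which one can check through the KKT conditions or by direct inspection of the convex quadratic. Recombining the two parts gives $\proj_C^{A^*}(\omega) = (\mu_P, \sigma^2 + \mu_P\bullet\mu_P)$.

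The one point that needs care is the legitimacy of the change of variables. The map $(\mu', \sigma'^2) \mapsto (\mu', \sigma'^2 + \mu'\bullet\mu')$ is a bijection from $\mathbb{R}^d \times \mathbb{R}^d_{>0}$ onto $\interior \domain A^*$. Under this map, $C \cap \interior \domain A^*$ corresponds exactly to $\mathbb{R}^d_{\geq 0} \times \mathbb{R}^d_{>0}$. Minimizing over either parametrization is therefore equivalent.

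By Proposition~\ref{prop:BregmanProjection}, the projection lies in $\interior \domain A^*$ and is unique. This requires $C$ to be closed and convex, which we must verify, and it is the main obstacle. The set $C = \{\omega : \omega_1 \geq 0\}$ is a half-space-type cone in the expectation coordinates, so it is closed and convex, and it meets the interior. Uniqueness then means the explicit minimizer found above is the projection.
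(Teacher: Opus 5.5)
Your proof is correct, but it takes a different route from the paper's proof of this proposition. The paper never writes the KL explicitly. It encodes $C$ through $g(\omega)=\iota_{\mathbb{R}^d_{\geq 0}}(\omega_1)$ and writes the first-order optimality condition $0 \in \nabla A^*(\omega_P) - \nabla A^*(\omega) + \partial g(\omega_P)$ in expectation coordinates. Since $\nabla A^*$ returns the natural parameter $(\mu/\sigma^2, -\tfrac{1}{2\sigma^2})$ and $\partial g$ acts only on the first block, the second block gives $\sigma_P^2=\sigma^2$ at once. The first block then reduces to $0 \in ((\mu_P)_i-\mu_i)/\sigma_i^2 + N_{\mathbb{R}_{\geq 0}}((\mu_P)_i)$, which is solved with the normal cone.

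You instead reparametrize by $(\mu',\sigma'^2)$, as the paper does in the proof of Proposition~\ref{prop:projectionEigenvalue}, and minimize the explicit objective $\mathrm{KL}(q_{\omega'},q_\omega)$. You correctly place $\omega'$ in the first argument. That is exactly why the mean term is weighted by the fixed $\sigma^2$ and the two blocks decouple.

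The paper's route is lighter on computation and uses only the mirror map, so it transfers directly to other constraints acting on one block of coordinates. However, it relies on subdifferential calculus and on Proposition~\ref{prop:BregmanProjection} to ensure $\omega_P\in\interior\domain A^*$, so that $\nabla A^*(\omega_P)$ makes sense.

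Your route is fully elementary, and it actually makes your self-declared ``main obstacle'' unnecessary. Here $\domain A^*=\{\omega_2>\omega_1\bullet\omega_1\}$ is open, and $d_{A^*}(\cdot,\omega)=+\infty$ off it. Hence the argmin over $C$ coincides with the argmin over $C\cap\interior\domain A^*$. Your explicit computation, which consists of strictly convex decoupled one-dimensional problems, already gives existence and uniqueness. You therefore need neither to check that $C$ is closed nor to invoke Proposition~\ref{prop:BregmanProjection}.
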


\begin{proof}
    The projection of $\omega \in \interior \domain A^*$ on $C$ as defined in \eqref{eq:nonNegativityConstraint} consists in solving
    \begin{equation*}
        \omega_P = \argmin_{\omega' \in \interior \domain A^*} \left( g(\omega') + d_{A^*}(\omega', \omega) \right),
    \end{equation*}
    where $g(\omega) = \iota_{\mathbb{R}_{\geq 0}^d}(\omega_1)$. Therefore, $\omega_P$ solves the optimality conditions $0 \in \nabla A^*(\omega_P) - \nabla A^*(\omega) + \partial g(\omega)$ where $\partial g$ is the subdifferential of $g$ (see \cite[Chapter 16]{bauschke2011}). Since $g$ depends only on $(\omega_P)_1 = \mu_P$ and is separable, we see that $\omega_P$ solves
    \begin{align*}
        0 &\in \frac{1}{(\sigma^2_P)_i}(\mu_P)_i - \frac{1}{(\sigma^2)_i}(\mu)_i + \partial \iota_{\mathbb{R}_{\geq 0}}((\mu_P)_i)\\ 
        0 &= \frac{1}{(\sigma^2_P)_i} - \frac{1}{(\sigma^2)_i},
    \end{align*}
    for any $i \in \llbracket 1,d \rrbracket$. We thus readily obtain that $\sigma^2_P = \sigma^2$. Now, using that fact that $\partial \iota_{\mathbb{R}_{\geq 0}}(s)$ is the normal cone $N_{\mathbb{R}_{\geq 0}}(s)$ for an scalar $s$ (see \cite[Example 16.12]{bauschke2011}), and that 
    \begin{equation*}
        N_{\mathbb{R}_{\geq 0}}(s) = \begin{cases}
            \mathbb{R}_{\leq 0} \text{ if } s=0,\\
            \emptyset \text{ if } s<0,\\
            \{0\} \text{ if } s>0,
        \end{cases}
    \end{equation*}
    we obtain that if $\mu_i <0$, $(\mu_P)_i = 0$ and that $\mu_i = (\mu_P)_i$ else.
\end{proof}

\section{ADDITIONAL NUMERICAL ILLUSTRATIONS}
\label{app:additionalExperiments}

\subsection{Additional details about the experiments of Section \ref{sec:experiments}}

For a numerical illustration of our results in the case of data subsampling estimators, we consider the \cite{dataset} dataset \citep{kaya2019}. We use the NOx emissions as a response variable $Y$ and the other variables (excluding the CO variable, thus $d=9$) as fixed covariates.

In the case of experiments performed in the conjugate setting, all the runs are initialized with initial mean simulated uniformly in $[-5,5]^d$ and initial covariance matrix being equal $10 I$. In the case of the logistic regression task, the initialization is similar except that the initial covariance matrix is equal to $0.5 I$.

We recall that the source code can be found at \url{https://github.com/tGuilmeau/Projected_Stochastic_NGVI/}.

\subsection{Bonnet and Price estimators with a Gaussian target}

In this Section, we consider the same setting as in Section \ref{sec:experiments}, namely a Gaussian target in dimension $d=10$ with Gaussian approximating distributions, and Algorithm \ref{alg} being run with the Bonnet and Price gradient estimators without performing a projection step ($C = \mathbb{R}^d \times \mathbb{S}^d$). In this setting, we investigate the influence of the hyperparameters of Algorithm \ref{alg}.

\paragraph{Impact of $\eta$ and $N$ in the fixed step sizes and fixed sample sizes schedule} When $\eta_t \equiv \eta$ and $N_t \equiv N$, Proposition \ref{prop:cvgceStochMD} predicts a geometric convergence, with rate $(1-\eta)$, of the iterates of Algorithm \ref{alg} to a neighborhood of the optimum $\omega_*$, whose size is controlled by $\sigma_{\eta, N}^2$, defined in Definition \ref{def:variance}. We further experimentally investigate the effect of $\eta$ and $N$.

Figure \ref{fig:BonnetPriceIterationsDifferentEta} shows the performance of Algorithm \ref{alg} across different values of the step size $\eta$. Figure \ref{fig:BonnetPriceIterationsDifferentEta} shows that low values of $\eta$ yield a slower convergence, but to a better value. Indeed, the slower convergence comes from the fact that the geometrically decreasing term in Proposition \ref{prop:cvgceStochMD} decreases with $(1-\eta)^t$. The fact that the iterates are closer to $\omega_*$ when $\eta$ is low indicates that $\sigma^2_{\eta,N}$ decreases with the step size $\eta$. 

Figure \ref{fig:BonnetPriceIterationsDifferentN} shows the performance of Algorithm \ref{alg} when different values of $N$ are considered, keeping $\eta$ fixed. All tested values of $N$ lead to a geometric decrease with the same rate, since all runs share the same step size $\eta$, but the size of the neighborhood around the solution $\omega_*$ decreases as $N$ increases. This can be expected from Proposition \ref{prop:varianceBonnetPrice}, which exactly predicts that $\sigma^2_{\eta, N} \leq \frac{V}{N}$ for some $V > 0$, uniformly in $\eta \in (0,1]$.

\begin{figure}[H]
    \centering
    \begin{subfigure}[t]{.49\linewidth}
        \centering
        \includegraphics[width=\linewidth]{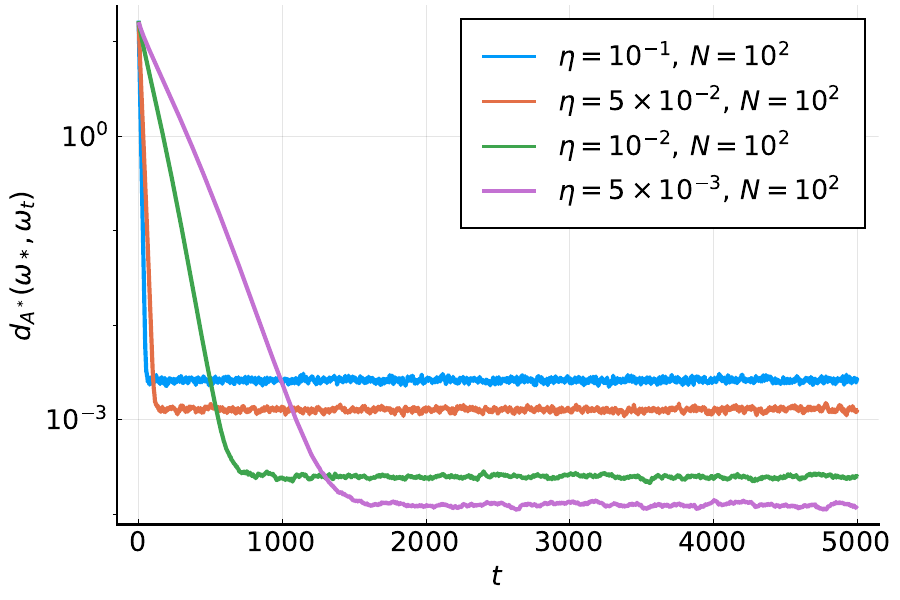}
        \caption{$N=100$ and different values of $\eta$ are tested}
        \label{fig:BonnetPriceIterationsDifferentEta}
    \end{subfigure}\hfill
    \begin{subfigure}[t]{.49\linewidth}
        \centering
        \includegraphics[width=\linewidth]{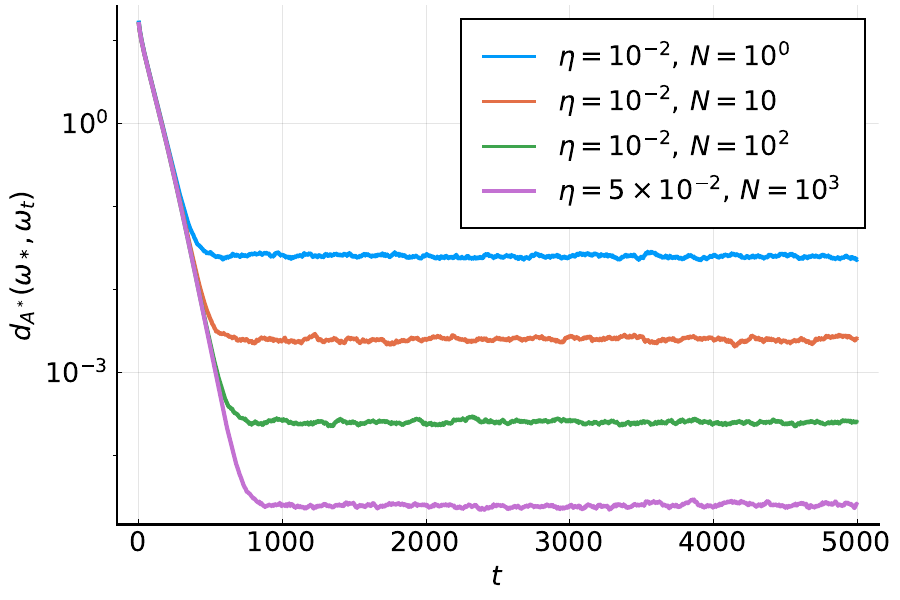}
        \caption{$\eta = 10^{-2}$ and different values of $N$ are tested}
        \label{fig:BonnetPriceIterationsDifferentN}
    \end{subfigure}
    
    \caption{Mean Bregman divergence between current and optimal parameters, over $100$ runs, for NGVI schedules with constant $\eta$ and constant sample size $N$}
    \label{fig:BonnetPriceNeighborhoodSize}
\end{figure}

\paragraph{Influence of the parameter $\gamma$ in the choice of the sample size} In Proposition \ref{prop:cvgceStochMD}, we showed that when constant step sizes $\eta_t \equiv \eta$ are chosen with increasing sample sizes $N_t = (t+1)^\gamma$, then, the iterates converge to the minimizer at a $\mathcal{O}(\frac{1}{T^\gamma})$ rate. We investigate this effect in the following.

Figure \ref{fig:BonnetPriceIterationsDifferentGamma} shows that, in terms of iterations, a higher value of $\gamma$ yields iterates that get closer to the minimizer at a faster rate, as predicted by Proposition \ref{prop:cvgceStochMD}. Note that in the first iterations, all the choice of $\gamma$ lead to the same geometric rate, the difference appearing later.

Figure \ref{fig:BonnetPriceComputationDifferentGamma} highlights the compromise that needs to be made in terms of computational budget when selecting $\gamma$. Indeed, if a larget value of $\gamma$ increase the performance in terms of iteration count, the resulting algorithm is more expensive in terms of computational budget. Figure \ref{fig:BonnetPriceComputationDifferentGamma} suggests that there is a "optimal" value $\gamma$ such that, for a given computational budget, allows to obtain the best performance.

\begin{figure}[H]
    \centering
    \begin{subfigure}[t]{.49\linewidth}
        \centering
        \includegraphics[width=\linewidth]{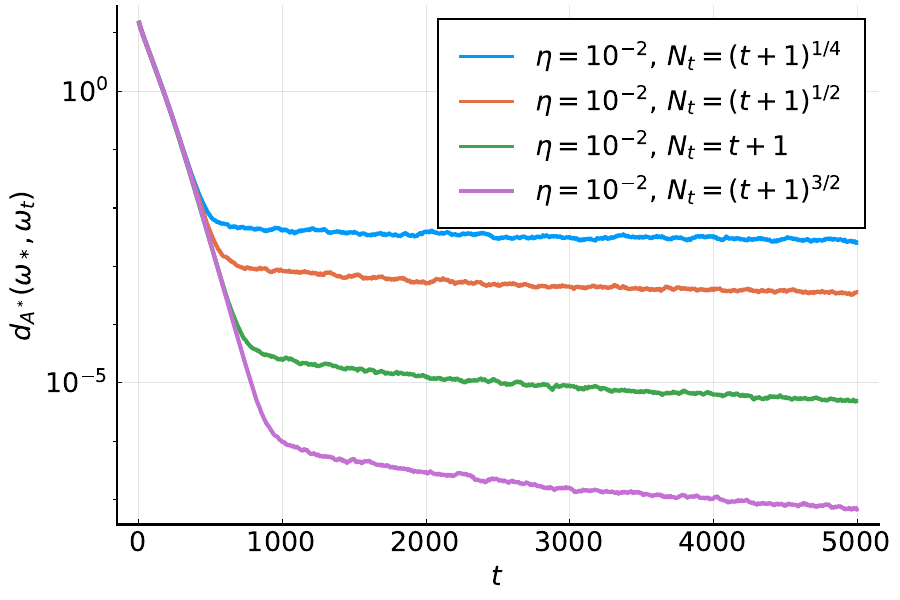}
        \caption{wrt iterations}
        \label{fig:BonnetPriceIterationsDifferentGamma}
    \end{subfigure}\hfill
    \begin{subfigure}[t]{.49\linewidth}
        \centering
        \includegraphics[width=\linewidth]{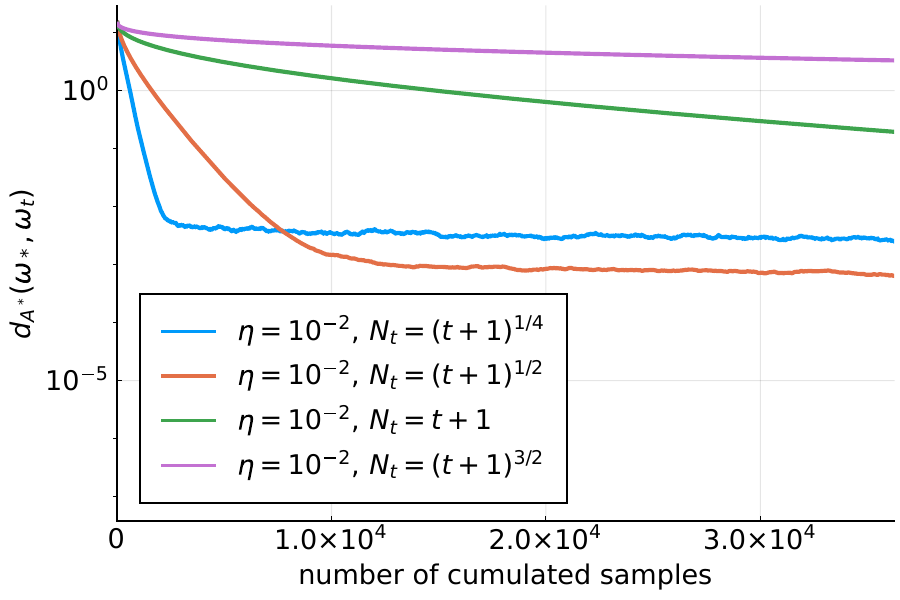}
        \caption{wrt number of samples}
        \label{fig:BonnetPriceComputationDifferentGamma}
    \end{subfigure}
    
    \caption{Mean Bregman divergence between current and optimal parameters, over $100$ runs, for NGVI schedules with constant $\eta$ and increasing sample size $N_t = (t+1)^\gamma$}
    \label{fig:BonnetPriceInfluenceGamma}
\end{figure}

\subsection{Projection impact in a robust regression setting}

In this section, we illustrate the positive effect of a projection step in Algorithm~\ref{alg} in a non-recoverable case, which is not theoretically covered by our previous results. More specifically, we consider a Bayesian Student linear regression similar to the Gaussian Bayesian regression in Section \ref{subsec:BayesianLinearRegression} but with a Student-distributed noise. The target posterior distribution is not conjugate with the Gaussian prior distribution. We can consider a Gaussian variational family ${\cal Q}$, but we are not in a recoverable case as the posterior distribution is not Gaussian and actually not of a known standard form.
Nevertheless, the Bonnet and Price estimator $g^N$ of the gradient (\ref{eq:BonnetPriceEstimator}) is tractable and can be computed in order to implement Algorithm~\ref{alg}.  

We assume 
$X \sim p_0 = \mathcal{N}(\mu_0,\Sigma_0)$ with $X \in \mathbb{R}^d$  and $Y \in \mathbb{R}$ with
\begin{equation*}
    \label{eq:StudentLikelihood}
    Y | X=x \sim \mathcal{S}(x^\top z, \sigma^2, \varrho),
\end{equation*}
where $z \in \mathbb{R}^d$ is a vector of fixed covariate values associated to $Y$, $\mathcal{S}(x^\top z, \sigma^2, \varrho)$ denotes the Student distribution with mean $x^\top z$, scale $\sigma^2$ and degrees-of-freedom parameter $\varrho$.  
Given $M$ data points $\{y_m\}_{m=1}^M$ and associated covariates $\{z_m\}_{m=1}^M$, the goal is to approximate the posterior $\pi (x | y_1, \ldots, y_M)$. 
In this setting, $g^N$ can be computed with (\ref{eq:BonnetPriceEstimator}) using that $\nabla \log \pi(x)$ and $\nabla^2 \log \pi(x)$ are available in closed-form,
\begin{align*}
    \nabla\log \pi(x) &= \sum_{m=1}^M \frac{(\varrho+1)\, (y_m - z_m^\top x)}{\varrho \sigma^2 + (y_m - z_m^\top x)^2}\, z_m - \Sigma_0^{-1} (x - \mu_0)\\
    \nabla^2 \log \pi(x) &= - \sum_{m=1}^M \left[ \frac{(\varrho+1)\, \big(\varrho \sigma^2 - (y_m - z_m^\top x)^2 \big)}{(\varrho \sigma^2 + (y_m - z_m^\top x)^2)^2}\right]z_m z_m^\top- \Sigma_0^{-1} \; .
\end{align*}

For a numerical illustration, we consider a subset of the \cite{dataset} dataset \citep{kaya2019}. We use the first $M=715$ samples given in this dataset for year 2013. The so-called turbine energy yield (TEY) variable is used as a response variable $Y$ and the  $d=10$ other  variables available in the dataset as fixed covariates. All variables are standardized for numerical stability.
To significantly depart from the Gaussian case, we set $\varrho=3$ to get an heavy-tailed noise distribution. We then set $\mu_0= 0$, $\Sigma_0 = 5 I$ and $\sigma^2=1$. 

The main difficulty in this setting is that $\pi$ is not log-concave, as it can be observed from the expression of $\nabla^2 \log \pi$ above. Therefore, there is a risk that the covariance adaptation fails, leading to singular covariance matrices, or covariance matrices very close to zero. In order to counteract this, we propose to constrain the search to the set $C = \{ (\mu, \Sigma + \mu \mu^{\top}) \in \mathcal{H} \text{ s.t. } \alpha I \preccurlyeq \Sigma \preccurlyeq \beta I\}$ defined in \eqref{eq:boundedEigenvalSet}. Here, we choose $\alpha = 10^{-4}$ and $\beta = 10^4$. The Bregman projection operator to this set is computed in Section \ref{subsec:projectionBoundedEigen}. Projection to this set ensures that we are working with covariance matrices whose conditioning is controlled. We run the algorithms initialized from the prior distribution, which is a Gaussian distribution  with mean $\mu_0$ and covariance $\Sigma_0$.

Results in terms of averaged ELBO are shown in Figure \ref{fig:StudentInfluenceProjection}, comparing the performance of Algorithm \ref{alg} with and without a projection step, in case of schedules with constant step sizes and constant or increasing sample sizes. We can observe that when no projection is performed, the ELBO stays constant across iterations, indicating a failure of the algorithm. However, adding the projection step allows to obtain a significant increase in ELBO, showcasing the positive impact of adding a projection step to counteract the lack of log-convexity of the target.

\begin{figure}[H]
    \centering
    \begin{subfigure}[t]{.49\linewidth}
        \centering
        \includegraphics[width=\linewidth]{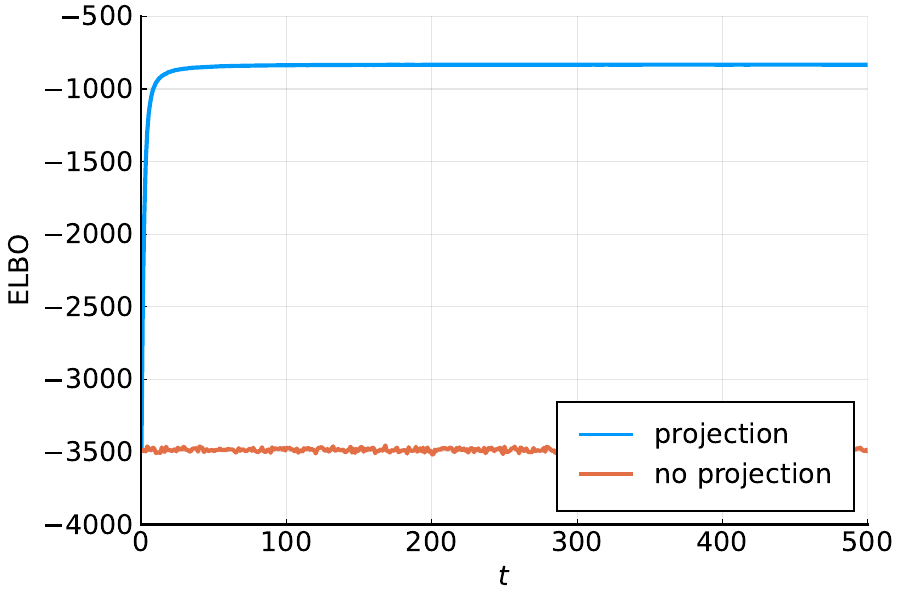}
        \caption{$\eta_t = 5\cdot 10^{-3}$ and $N_t = 250$}
        \label{fig:Student_constantEta_constantN}
    \end{subfigure}\hfill
    \begin{subfigure}[t]{.49\linewidth}
        \centering
        \includegraphics[width=\linewidth]{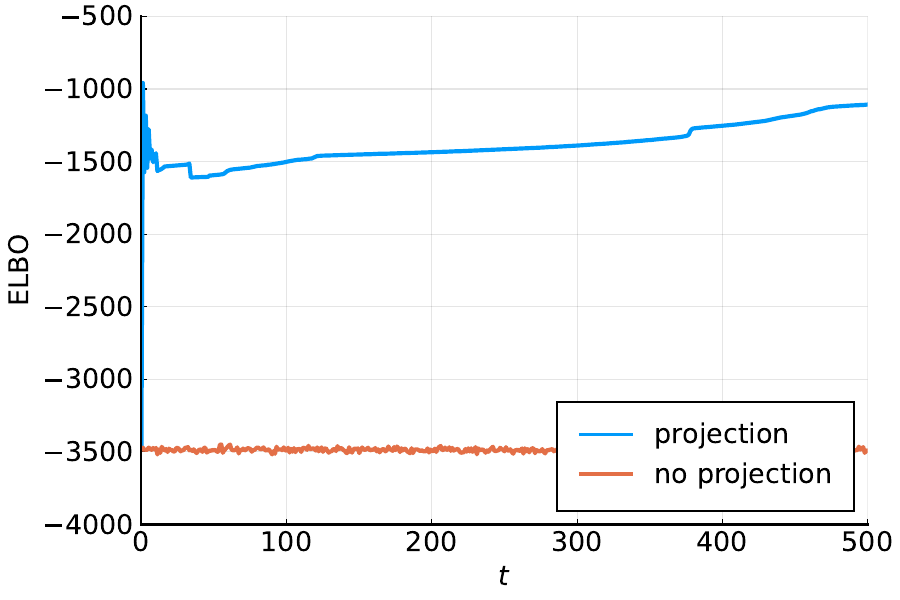}
        \caption{$\eta_t = 2/(t+2)$ and $N_t = 250$}
        \label{fig:Student_decreasingEta_constantN}
    \end{subfigure}

    \caption{Student linear regression: Averaged ELBO over $50$ runs, for different NGVI schedules, comparing for each schedule the algorithm with and without a projection step}
    \label{fig:StudentInfluenceProjection}
\end{figure}

Figure \ref{fig:Student_decreasingEta_constantN} shows that although the ELBO increases across iterations, this increase is non-monotonic compared to the one observed in Figure \ref{fig:Student_constantEta_constantN}. This behaviour can  be explained by the presence of outlier runs among the runs used to compute the averaged ELBO. Figure \ref{fig:StudentInfluenceProjectionConfidence} shows, for the same respective settings,  the median and inter-quartile intervals over $50$ runs of Algorithm \ref{alg} with projection. The observed monotonic convergence confirms that the non-monotonic behaviour observed in Figure \ref{fig:Student_decreasingEta_constantN} is due to the presence of a small number of outlier runs. If needed, these runs can be detected and stopped early. Note that compared to the schedule with constant step size, the schedule with decreasing step size imposes to take an initial step size equal to one, which amounts to forgetting about the initial condition, and may be the cause of this instability.

\begin{figure}[H]
    \centering
    \begin{subfigure}[t]{.49\linewidth}
        \centering
        \includegraphics[width=\linewidth]{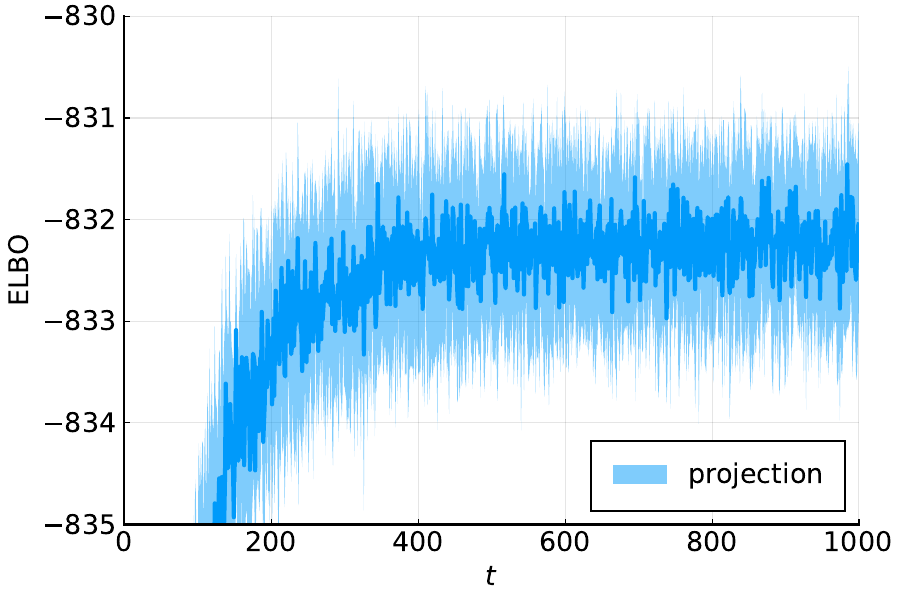}
        \caption{$\eta_t = 5\cdot 10^{-3}$ and $N_t = 250$}
        \label{fig:Student_constantEta_constantN_confidence}
    \end{subfigure}\hfill
    \begin{subfigure}[t]{.49\linewidth}
        \centering
        \includegraphics[width=\linewidth]{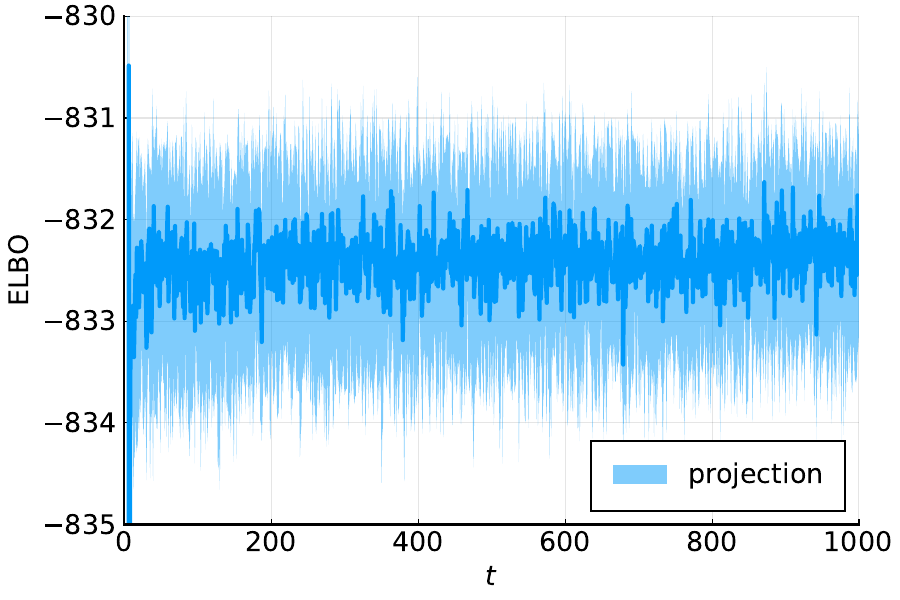}
        \caption{$\eta_t = 2/(t+2)$ and $N_t = 250$}
        \label{fig:Student_decreasingEta_constantN_confidence}
    \end{subfigure}

    \caption{Student linear regression: Median and intervals between the quantiles of order $0.25$ and $0.75$ of the ELBO over $50$ runs, for different NGVI schedules.}
    \label{fig:StudentInfluenceProjectionConfidence}
\end{figure}

We have not represented the performance obtained by taking sample size schedules of the form $N_t = (1+t)^\gamma$. In a fashion similar to the case of constant sample size and decreasing step size, increasing sample size schedules starting with $N_0 = 1$ may exhibit unstable behaviours. This can be avoided by choosing schedules of the form $N_t = \max( N, (1+t)^\gamma)$, which are covered by our theoretical results. Note also that we have not investigated the impact of  parameters $\alpha$ and $\beta$, which control the size of the set $C$. This would be an interesting perspective  to get more insights on the projection effect.

\subsection{Projection impact in a logistic regression setting}

We consider again the logistic regression task presented in Section \ref{sec:experiments}. Recall that we use Gaussian distributions with diagonal covariances to approximate the logistic regression posterior with a Gaussian prior. The LERC is not satisfied in this case. The experiment is conducted on synthetic data constructed with an actual regression vector $x_*$ whose components are all fixed being equal to $5$. Therefore, $x_* \in \mathbb{R}_{\geq 0}$ in this situation.

For each step size and sample size schedule, we investigate whether an additional projection step can improve the performance of Algorithm \ref{alg}. We consider the projection on the set $C = \{ (\mu, \sigma^2 + \mu \bullet \mu)\in \mathbb{R}^d \times \mathbb{R}^d \textrm{ s.t. } \mu \in \mathbb{R}^d_{\geq 0}\}$, whose associated projection operator is computed in Section \ref{subsec:projectionNonNegMean}. All the runs are initialized with initial mean simulated uniformly in $[-5,5]^d$ and initial covariance matrix being equal $0.5 I$. 

Results are depicted in Figure \ref{fig:logRegInfluenceProjection}, which shows that in this particular example, adding the projection operator improves upon the performance of the algorithm for every step size and sample size schedule. Indeed, we see that the ELBO obtained when using the projection is higher for the same iteration count than the ELBO obtained without the projection. Eventually, the two curves reach the same value.

\begin{figure}[H]
    \centering
    \begin{subfigure}[t]{.49\linewidth}
        \centering
        \includegraphics[width=\linewidth]{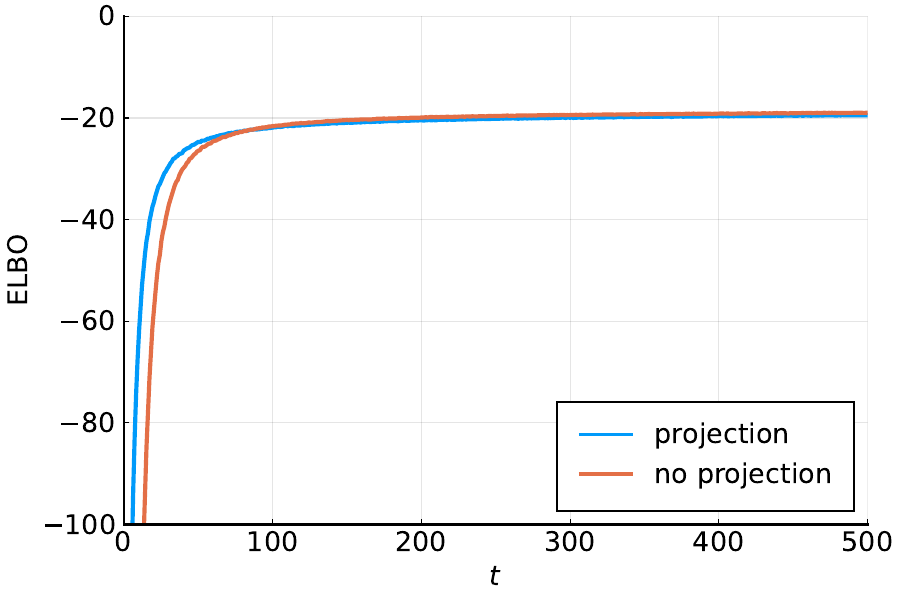}
        \caption{$\eta_t = 10^{-2}$ and $N_t = t+1$}
        \label{fig:logReg_constantEta_increasingN}
    \end{subfigure}\hfill
    \begin{subfigure}[t]{.49\linewidth}
        \centering
        \includegraphics[width=\linewidth]{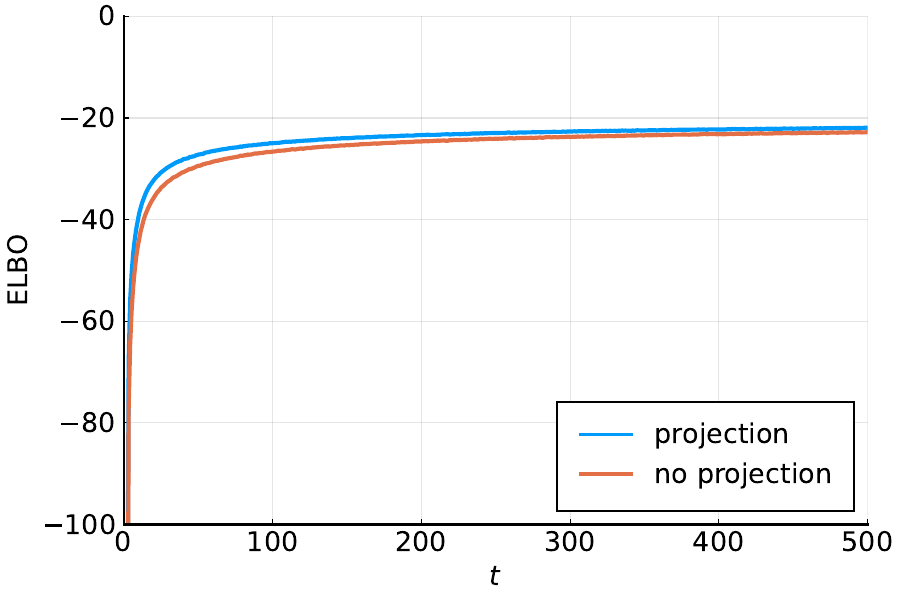}
        \caption{$\eta_t = 2/(t+2)$ and $N_t = t+1$}
        \label{fig:logReg_decreasingEta_increasingN}
    \end{subfigure}\hfill
    \begin{subfigure}[t]{.49\linewidth}
        \centering
        \includegraphics[width=\linewidth]{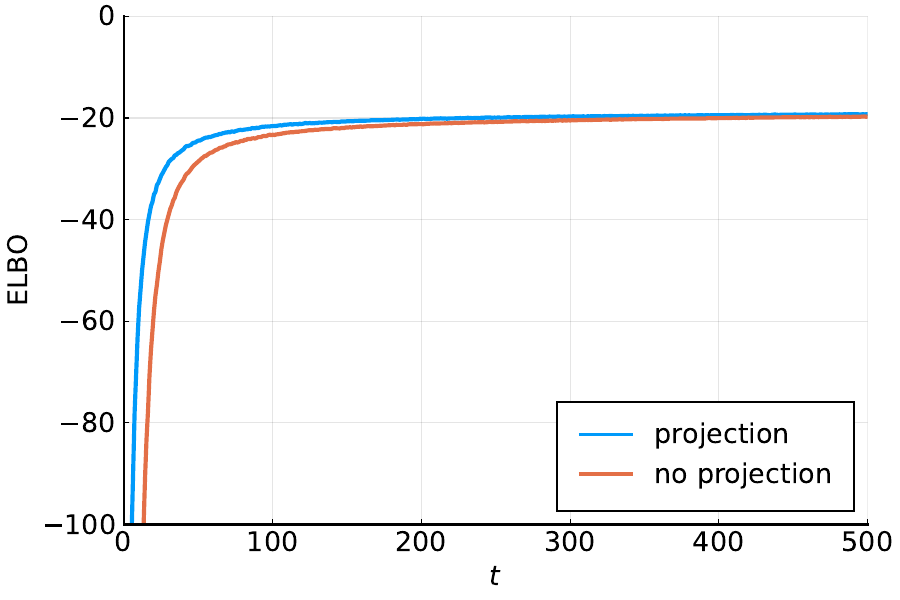}
        \caption{$\eta_t = 10^{-2}$ and $N_t = 10^2$}
        \label{fig:logReg_constantEta_constantN}
    \end{subfigure}\hfill
    \begin{subfigure}[t]{.49\linewidth}
        \centering
        \includegraphics[width=\linewidth]{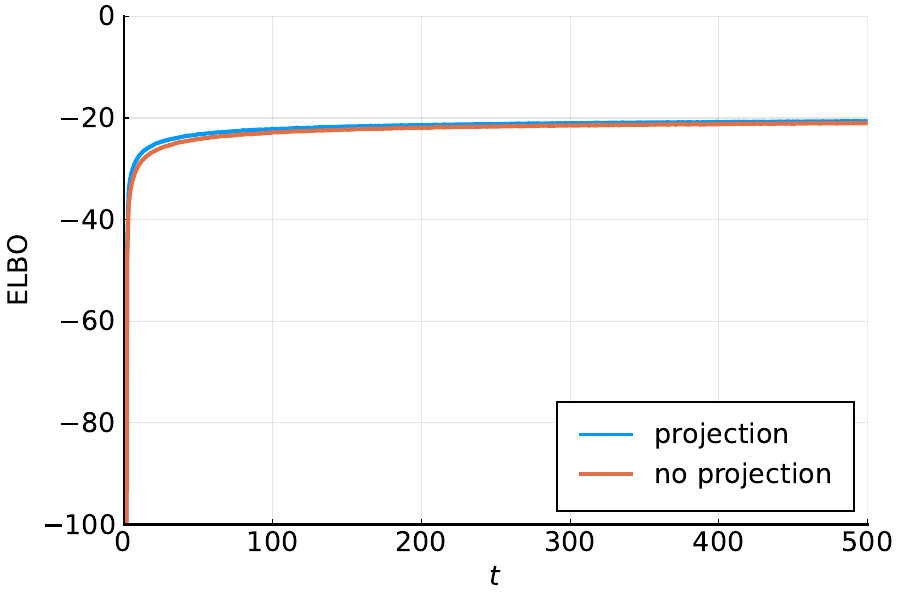}
        \caption{$\eta_t = 2/(t+2)$ and $N_t = 10^2$}
        \label{fig:logReg_decreasingEta_constantN}
    \end{subfigure}
    
    \caption{Logistic regression: Averaged ELBO over $50$ runs, for different NGVI schedules, comparing for each schedule the algorithm with and without a projection step}
    \label{fig:logRegInfluenceProjection}
\end{figure}

\end{document}